\newtheorem{theorem}{Theorem}
\newtheorem{lemma}[theorem]{Lemma}
\newcommand{\prt}[1]{\ensuremath{\left(  #1 \right)  }}
\DeclareMathAlphabet\mathbfcal{OMS}{cmsy}{b}{n}
\begin{document}

\title{The Quadrature Gaussian Sum Filter and Smoother for Wiener Systems}

\author{Angel L. Cede\~no, Rodrigo A. Gonz\'alez and Juan C. Ag\"uero
\thanks{This work was supported in part by the grants ANID-Fondecyt 3240181 and 1211630, and the ANID-Basal Project AFB240002 (AC3E).
}
\thanks{A. L. Cede\~no is with the Electrical Engineering Department of Universidad de Santiago de Chile, Santiago, Chile, and with the Advanced Center for Electrical and Electronic Engineering AC3E. R. A. Gonz\'alez is with the Department of Mechanical Engineering of Eindhoven University of Technology, Eindhoven, The Netherlands. J. C. Ag\"uero is with the Electronic Engineering Department of Universidad T\'ecnica Federico Santa Mar\'ia, Valpara\'iso, Chile, and with the Advanced Center for Electrical and Electronic Engineering AC3E. Emails: angel.cedeno@sansano.usm.cl, r.a.gonzalez@tue.nl, juan.aguero@usm.cl.}}

\maketitle

\begin{abstract}
Block-Oriented Nonlinear (BONL) models, particularly Wiener models, are widely used for their computational efficiency and practicality in modeling nonlinear behaviors in physical systems. Filtering and smoothing methods for Wiener systems, such as particle filters and Kalman-based techniques, often struggle with computational feasibility or accuracy. This work addresses these challenges by introducing a novel Gaussian Sum Filter for Wiener system state estimation that is built on a Gauss-Legendre quadrature approximation of the likelihood function associated with the output signal. In addition to filtering, a two-filter smoothing strategy is proposed, enabling accurate computation of smoothed state distributions at single and consecutive time instants. Numerical examples demonstrate the superiority of the proposed method in balancing accuracy and computational efficiency compared to traditional approaches, highlighting its benefits in control, state estimation and system identification, for Wiener systems.
\end{abstract}

\begin{IEEEkeywords}
Bayesian filtering and smoothing; State estimation; Gaussian Mixture Model; Wiener Systems.
\end{IEEEkeywords}

\section{Introduction}
\IEEEPARstart{T}{he} dynamics of most physical systems inherently exhibit nonlinear behaviors, which are described by mathematical models derived from first principles \cite{cessenat2018} or using data-driven algorithms \cite{ljung1999}. Due to the high complexity of these models, simpler versions in the form of best linear approximations \cite{pintelon2012system} are often used for control, supervision, and identification tasks. More advanced models integrate linear time-invariant (LTI) dynamic subsystems with nonlinear static subsystems, known as Block-Oriented Nonlinear (BONL) models \cite{Billings2013,Schoukens2017}. Among these models, Wiener models, consisting of a linear dynamic block followed by a nonlinear static block, offer advantages in computational efficiency and ease of implementation in control and system identification~\cite{Zhu2002}.

Filtering and smoothing methods for dynamical systems find applications in various domains, ranging from power systems to cybersecurity and chemical processes \cite{Ji2021, Xiao2022, Valipour2022}, and they play a crucial role in calculating posterior distributions of system states based on noisy measurements \cite{anderson1979}. Promising areas of application also include battery state-of-charge estimation \cite{wang2024improved, alcantara2024li}, including its hardware implementation \cite{Nuculaj2024}, photovoltaic power forecasting \cite{Thaker2024}, and AI-based medical applications \cite{Zhong2024}. Particularly in the field of control and system identification, filtering and smoothing algorithms for Wiener systems are of special interest to estimate states for Model Predictive Control, or design parameter estimation algorithms such as the Maximum Likelihood method \cite{Wills2013, Cedeno2024Id, Finke2017}. Although the Kalman filter and the Rauch–Tung–Striebel smoother provide optimal solutions for linear and Gaussian systems \cite{kalman1960, rauch1965}, for Wiener systems, the challenge lies in obtaining closed-form estimators due to the computational intractability of certain integrals \cite{sarkka2013bayesian}. This challenge motivates the development of suboptimal solutions for nonlinear and non-Gaussian systems.

The most commonly used approach for filtering and smoothing system states is the Monte Carlo approach, consisting of Sequential Importance Sampling (SIS) methods, also called particle filters \cite{doucet2000sequential, gordon1993novel}. In particle filtering, a set of samples and weights is propagated through the nonlinear functions of the system, allowing for direct state estimation without explicitly knowing the posterior distribution. The advantage of these methods lies in their relative ease of implementation; however, they are computationally costly when a large number of samples is used and / or when the order of the system increases \cite{Kitagawa1996b,Elfring2021}. In addition to particle filtering, there are other filtering approaches for nonlinear state-space systems that can provide filtered PDF estimates under certain considerations. One effective technique is the Extended Kalman Filter \cite{Gelb1974}, where a Taylor approximation of the nonlinear system around an estimate of the state is used, followed by applying the standard Kalman filter recursions. A key advantage of this method is its relatively simple implementation and use. It provides accurate estimates when nonlinear functions exhibit smooth behavior and its computational cost is comparable to that of the standard Kalman filter. However, a notable drawback is that the Extended Kalman Filter requires nonlinear functions to be differentiable, as the Jacobian matrix must be computed for its operation. Another technique based on the Kalman filter is the Quadrature Kalman Filter \cite{arasaratnam2007discrete, Closas2012}, which employs points from the Gauss quadrature to linearize the nonlinear state and output functions through a linear regression, thus obtaining closed-form solutions of the filtered PDFs. One advantage of the Quadrature Kalman Filter is that, being 3-point based, it has a reasonably low computational cost for low system's orders and is easy to implement. However, its accuracy drops when the nonlinearities are not strictly monotonic, as in the saturation or dead zone cases. On the other hand, the Unscented Kalman Filter \cite{Julier1997, Wan2000} aims to directly approximate the mean and variance of the state distribution by considering a fixed number of sigma points that propagate through the nonlinear functions. Such filter provides better accuracy for highly nonlinear systems than the Extended Kalman Filter by avoiding explicit linearization, enhancing state estimation. However, it incurs a higher computational cost due to the use of multiple sigma points, which makes it less suitable for real-time applications with limited processing power. Other frameworks combine elements of the aforementioned approaches. For example, the marginalized Kalman filter uses both the standard Kalman and particle filters \cite{Vitetta2019}.

In this paper, we present a novel state filtering and smoothing approach for Wiener systems, building upon our previous work on linear systems with quantized output measurements \cite{Cedeno2021b,cedeno2021,Cedeno2023} to handle a wide class of static nonlinearities. The main contributions of this paper are:

\begin{enumerate}[label=C\arabic*]
\label{contribution1}
\item 
A closed-form recursive algorithm is proposed for state filtering in Wiener systems. The method models the conditional probability function of the measurements as a sum of weighted Gaussian distributions using the Gauss-Legendre quadrature rule. Such an approach entails both constant and strictly monotone subsets, covering any combination of typical nonlinearities such as saturation, deadzone, linear rectifier, and quantization.
\label{contribution2}
    \item Building on the two-filter approach \cite{kitagawa1994two}, a recursive algorithm is introduced for computing smoothing state distributions as Gaussian mixture models. Furthermore, a method is proposed to calculate the joint distributions of the state in two consecutive time instants, which is particularly useful for system identification \cite{gonzalez2023algorithm}. 
  \label{contribution3}
    \item Extensive simulation examples validate the proposed filtering and smoothing algorithms against state-of-the-art particle filter and Kalman-based methods. The results show that the proposed approach significantly improves the state estimation accuracy for Wiener systems, while achieving low computational costs.
    \item The proposed filter and smoother algorithms for Wiener systems are shown to be implementable with multiple parallel instances of Kalman filter and smoother algorithms. This approach enables efficient hardware implementation, significantly reducing computation time and making them suitable for embedded system applications.    
\end{enumerate}
The remainder of this paper is structured as follows. In Section \ref{sec:setup}, we introduce the problem setup. Section \ref{sec:gmm} provides a general form for the conditional PDF $p(y_t|\mathbf{x}_t)$ in terms of a Gaussian mixture model. In Section \ref{sec:filteringsmoothing}, we present the Quadrature Gaussian Sum filter and smoother for Wiener systems. The implementation aspects of the proposed approach are considered in Section \ref{sec:implementation}. Section \ref{sec:simulations} presents three numerical examples, and Section \ref{sec:conclusions} contains concluding remarks.

\textit{Notation}: All vectors and matrices are written in bold. The identity matrix of dimension $n\times n$ is denoted as $\mathbf{I}_n$. $\mathcal{N}(\mathbf{x};\bm{\mu},\bm{\Sigma})$ denotes a Gaussian PDF of the random variable $\mathbf{x}$ with mean $\bm{\mu}$ and covariance matrix $\bm{\Sigma}$, and $p(\mathbf{y}|\mathbf{x})$ refers to the conditional probability density or mass function of the random variable $\mathbf{y}$ given $\mathbf{x}$. The probability that the random variable $\mathbf{x}$ has a particular value $\mathbf{x}^*$ is denoted by $\mathbb{P}(\mathbf{x}=\mathbf{x}^*)$. If $g$ is a scalar-valued function, the preimage of the set $A\subseteq \mathbb{R}$ under $g$ is denoted by $g^{-1}(A)$.

\section{System Setup and Problem Formulation}\label{sec:setup}
Consider the multi-input single-output, time-invariant, discrete-time system in state-space form given by (Fig. \ref{fig:ssdiagram}):
\begin{figure}	
	\centering
	\includegraphics[width=\linewidth]{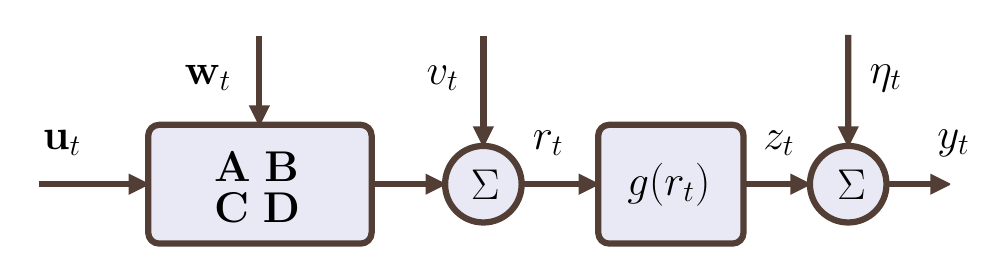}
 	\vspace{-6mm}
	\caption{Block diagram of a Wiener system in state-space form.}
	\label{fig:ssdiagram}
    	\vspace{-3mm}
\end{figure}
\begin{align}
    \mathbf{x}_{t+1}&=\mathbf{Ax}_{t}+\mathbf{B}\mathbf{u}_{t}+\mathbf{w}_{t},\label{eqn:general_system}\\
    r_{t}&=\mathbf{Cx}_{t}+\mathbf{D}\mathbf{u}_{t}+v_{t},\label{eqn:general_system2}\\
    z_{t}&=g(r_{t}),\label{eqn:general_system3}\\
    y_{t}&= z_{t}+\eta_t,\label{eqn:general_system4} 
\end{align}
where $\mathbf{u}_{t} \in \mathbb{R}^{m}$, $\mathbf{x}_{t} \in \mathbb{R}^{n}$, $r_{t} \in \mathbb{R}$, $z_{t} \in \mathbb{R}$, and $y_{t} \in \mathbb{R}$, are the system input, the state vector, the linear output (not physically measurable), the nonlinear output without measurement noise (not physically measurable), the nonlinear output with measurement noise (physically measurable), and the system input, respectively. The system matrices have dimensions $\mathbf{A} \in \mathbb{R}^{n\times n}$, $\mathbf{B} \in \mathbb{R}^{n\times m}$, $\mathbf{C} \in \mathbb{R}^{1\times n}$ and $\mathbf{D} \in \mathbb{R}^{1\times m}$. The process noise $\mathbf{w}_{t} \in \mathbb{R}^{n}$, the linear output noise $v_{t} \in \mathbb{R}$, and the nonlinear output noise $\eta_{t} \in \mathbb{R}$ are jointly independent Gaussian-distributed stochastic processes, with zero mean and covariance matrices $\mathbf{Q}$, $R$, and $P$ respectively. The static function $g(\cdot)\colon \mathbb{R}\rightarrow \mathbb{R}$ is a known mapping that is piecewise continuous and differentiable, with a countable number of discontinuities. An example of such function is presented in Fig. \ref{fig:function_g}. Note that we admit intervals where $g$ is constant; the sets in the domain of $g$ where this occurs are called quantization sets in this paper.
\begin{figure}
	\centering
	\includegraphics[width=1\linewidth]{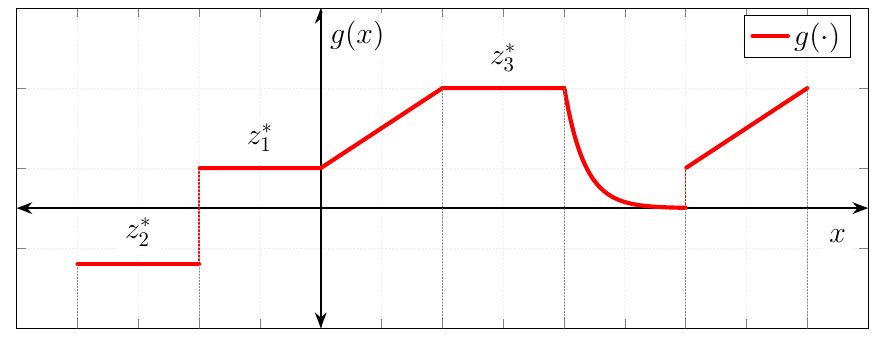}
 	\vspace{-6mm}
	\caption{An arbitrary piecewise nonlinear function $g$ admitted by our approach.}
	\label{fig:function_g}
    	\vspace{-3mm}
\end{figure}

The system in \eqref{eqn:general_system}-\eqref{eqn:general_system4} can be equivalently described by the following state transition probability density
\begin{equation}\label{eqn:prob_model_pxtm1_xt}
    p(\mathbf{x}_{t+1}|\mathbf{x}_t)=\mathcal{N}(\mathbf{x}_{t+1};\mathbf{Ax}_t+\mathbf{B}\mathbf{u}_t,\mathbf{Q}),
\end{equation}
with initial condition $\mathbf{x}_1$ drawn by $p(\mathbf{x}_1)=\mathcal{N}(\mathbf{x}_1;\bm{\mu}_1,\mathbf{P}_1)$, and by the linear and nonlinear output conditional PDFs
\begin{align}
    p(r_t|\mathbf{x}_t)&=\mathcal{N}(r_{t};\mathbf{Cx}_t+\mathbf{D}\mathbf{u}_t,R), \\
    p(y_t|r_t)&=\mathcal{N}(y_{t};g(r_t),P).
\end{align}
The problem addressed in this paper is how to obtain the PDFs of the state for filtering, $p(\mathbf{x}_t|y_{1:t})$, and smoothing, $p(\mathbf{x}_t|y_{1:N})$ and $p(\mathbf{x}_{t+1},\mathbf{x}_t|y_{1:N})$, from the available input and output data. For the filtering problem, the available data consists of $\mathbf{u}_{1:t}=\{\mathbf{u}_{1},\mathbf{u}_{2},\dots, \mathbf{u}_{t}\}$ and $y_{1:t}=\{y_{1},y_{2},\dots, y_{t}\}$. For the smoothing problem, the complete data set, $\mathbf{u}_{1:N}$ and $y_{1:N}$, is used, where $N$ is the total number of data points.

\section{Bayesian Filtering and Smoothing Recursions and a Quadrature-based Approach} \label{sec:gmm}
The goal of this section is to introduce the Bayesian filtering and smoothing recursions and to derive a Gaussian mixture model for the output PDF of a Wiener system.

\subsection{Bayesian Filtering and Smoothing}

Bayesian filtering involves a 2-step recursive procedure \cite{sarkka2013bayesian}, given by
\begin{align}
    p(\mathbf{x}_t|y_{1:t})&=\dfrac{p(y_t|\mathbf{x}_t)p(\mathbf{x}_t|y_{1:t-1})}{p(y_t|y_{1:t-1})},\label{eqn:bayesian_filtering_meas}\\			p(\mathbf{x}_{t+1}|y_{1:t})&=\int_{\mathbb{R}^n}p(\mathbf{x}_{t+1}|\mathbf{x}_t)p(\mathbf{x}_t|y_{1:t})\textnormal{d}\mathbf{x}_t, \label{eqn:bayesian_filtering_time}
\end{align}
where $p(\mathbf{x}_t|y_{1:t})$ and $p(\mathbf{x}_{t+1}|y_{1:t})$ correspond to the measurement and time updates, respectively, $p(y_t|\mathbf{x}_t)$ is the PDF of the nonlinear output given the state vector, $p(\mathbf{x}_{t+1}|\mathbf{x}_t)$ is the state transition PDF, and $p(y_t|y_{1:t-1})$ is a normalizing constant. On the other hand, in the Bayesian framework, the equation for nonlinear smoothing $p(\mathbf{x}_{t}|y_{1:N})$ is given by
\begin{equation}\label{eqn:bad_smoothing}
    p(\mathbf{x}_{t}|y_{1:N}) = p(\mathbf{x}_{t}|y_{1:t}) \int_{\mathbb{R}^n} \frac{p(\mathbf{x}_{t+1}|y_{1:N})p(\mathbf{x}_{t+1}|\mathbf{x}_{t})}{p(\mathbf{x}_{t+1}|y_{1:t})}\textnormal{d}\mathbf{x}_{t+1}.
\end{equation}
In the general case, when these PDFs are non-Gaussian, computing the integral in \eqref{eqn:bad_smoothing} becomes a difficult or intractable task. An alternative way to solve the smoothing problem is provided by the two-filter formula in \cite{kitagawa1994two}. In this approach, the smoothing equation is
\begin{equation} \label{eqn:bayesian_smoothing}
    p(\mathbf{x}_t|y_{1:N}) = \frac{p(\mathbf{x}_t|y_{1:t-1})p(y_{t:N}|\mathbf{x}_t)}{p(y_{t:N}|y_{1:t-1})},
\end{equation}
where $p(\mathbf{x}_t|y_{1:t-1})$ is the prediction PDF from the filtering stage, $p(y_{t:N}|y_{1:t-1})$ is a normalization constant, and $p(y_{t:N}|\mathbf{x}_t)$ is obtained from the following reverse recursion:
\begin{align}
    p(y_{t+1:N}|\mathbf{x}_t) \hspace{-0.04cm}&= \hspace{-0.08cm} \int_{\mathbb{R}^n}\hspace{-0.05cm} p(y_{t+1:N}|\mathbf{x}_{t+1})p(\mathbf{x}_{t+1}|\mathbf{x}_t)\textnormal{d}\mathbf{x}_{t+1}, \label{eqn:bayesian_backward_prediction}\\
    p(y_{t:N}|\mathbf{x}_t) \hspace{-0.04cm} &= \hspace{-0.04cm} p(y_t|\mathbf{x}_t)p(y_{t+1:N}|\mathbf{x}_t). \label{eqn:bayesian_backward_measurement}
\end{align}
Here, $p(y_{t+1:N}|\mathbf{x}_t)$ and $p(y_{t:N}|\mathbf{x}_t)$ are the reverse prediction and update equations, respectively. 
Additionally, the PDF of the state in two consecutive time instants is given by:
\begin{equation}\label{eqn:bayes_joint_smoothing}
    p(\mathbf{x}_{t+1},\mathbf{x}_t|y_{1:N}) \propto p(y_{t+1:N}|\mathbf{x}_{t+1})p(\mathbf{x}_{t+1}|\mathbf{x}_t)p(\mathbf{x}_t|y_{1:t}).
\end{equation}
Notice that the smoothing equation in \eqref{eqn:bayesian_smoothing} requires the prediction PDFs $p(\mathbf{x}_t|y_{1:t-1})$, which are obtained from \eqref{eqn:bayesian_filtering_meas} and~\eqref{eqn:bayesian_filtering_time}.

As observed in Eqs. \eqref{eqn:bayesian_filtering_meas} and \eqref{eqn:bayesian_filtering_time}, corresponding to the forward filtering stage, and in Eqs. \eqref{eqn:bayesian_backward_prediction} and \eqref{eqn:bayesian_backward_measurement} in the backward filtering stage, it is necessary to know the probability functions $p(\mathbf{x}_{t+1}|\mathbf{x}_t)$ and $p(y_{t}|\mathbf{x}_t)$. The equation for $p(\mathbf{x}_{t+1}|\mathbf{x}_t)$ can be obtained directly from (2). However, since the output $y_t$ is obtained by a signal affected by the possibly non-monotonic nonlinearity $g(\cdot)$, obtaining $p(y_{t}|\mathbf{x}_t)$ is not straightforward. Therefore, this section provides some developments that will allow us to obtain $p(y_{t}|\mathbf{x}_t)$ in such a way that we can solve the filtering and smoothing equations in closed-form. Note that because the nonlinear function $g(\cdot)$ can contain both strictly monotonic and constant sections, $p(z_t|\mathbf{x}_t)$ can be a PDF if all the pieces are strictly monotonic functions, a probability mass function (PMF) if all pieces are constants, or a generalized PDF (GPDF) if the pieces are a combination of strictly monotonic functions and constants. 

\subsection{Gauss-Legendre Quadrature rule for computing $p(y_t|\mathbf{x}_t)$} 
A critical step towards computing $p(y_t|\mathbf{x}_t)$ for a general Wiener state-space model involves transforming the random variable $r_t$ through the function $g(\cdot)$. The following lemma provides the general expression for such a type of GPDF.

\begin{lemma}
\label{lem:pz}
    Assume that the domain of $g(\cdot)$ can be partitioned into $M_1+M_2$ sets, of which $M_1$ correspond to intervals in which it is a differentiable and strictly monotonic function, and $M_2$ correspond to quantization sets of the strictly positive Lebesgue measure in which $g(\cdot)$ is constant and distinct between sets. Let $z_1^*, \dots, z_{M_2}^*\in \mathbb{R}$ be the image of each quantization set. Define $z = g(r)$, where $r\sim\mathcal{N}(r;\mu,R)$, and denote $\gamma_i(z)$ as the $i$th root ($i=1,\dots,K_{z}$) of the equation $z=g(r)$, i.e., $g(\gamma_i(r))=r$. Then 
    \begin{align}
        p(z) &= \sum_{i=1}^{M_1}  \phi_i(z)\mathcal{N}(\tilde{\gamma}_i(z);\mu,R) \nonumber \\
        &+ \sum_{j=1}^{M_2} \delta(z-z_j^*) \int_{r\in g^{-1}(z_j^*)}\mathcal{N}(r;\mu,R)\textnormal{d}r, \label{equationlemmaZ}
    \end{align}
    where $\delta(\cdot)$ is the Dirac delta function, and 
     \begin{equation}
        \label{gammatildelemma}
        \tilde{\gamma}_i(z)= \begin{cases}
        \gamma_i(z) & \textnormal{if } i\leq K_z, \\
        0 & \textnormal{otherwise},
    \end{cases}, \quad
    \phi_i(z)=\bigg|\frac{\textnormal{d}\tilde{\gamma}_i(z)}{\textnormal{d}z}\bigg|. \end{equation}
\end{lemma}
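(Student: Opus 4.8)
The plan is to establish \eqref{equationlemmaZ} in its weak (distributional) form: I will show that for every bounded continuous test function $\psi\colon\mathbb{R}\to\mathbb{R}$, the expectation $\mathbb{E}[\psi(g(r))]$ equals $\int_{\mathbb{R}}\psi(z)\,p(z)\,\textnormal{d}z$ with $p(z)$ as claimed. Since two generalized densities that integrate identically against all such $\psi$ must coincide, this pins down $p(z)$ uniquely and sidesteps the awkwardness of differentiating a CDF that carries both an absolutely continuous part and jumps. I prefer this route over the direct CDF-differentiation approach precisely because the Dirac masses at the quantization levels are produced automatically rather than by hand.

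First I would write $\mathbb{E}[\psi(g(r))] = \int_{\mathbb{R}}\psi(g(r))\,\mathcal{N}(r;\mu,R)\,\textnormal{d}r$ and split this integral according to the hypothesized partition of the domain of $g$ into the $M_1$ intervals $I_1,\dots,I_{M_1}$ on which $g$ is differentiable and strictly monotone, and the $M_2$ quantization sets $S_1,\dots,S_{M_2}$ on which $g$ takes the distinct constants $z_1^*,\dots,z_{M_2}^*$; the countably many breakpoints form a Lebesgue-null set and are discarded. On each monotone piece $I_k$ I would apply the change of variables $z=g(r)$, whose local inverse is one of the branches $\gamma_i$ satisfying $g(\gamma_i(z))=z$, giving $\int_{I_k}\psi(g(r))\mathcal{N}(r;\mu,R)\textnormal{d}r = \int_{g(I_k)}\psi(z)\,\mathcal{N}(\gamma_i(z);\mu,R)\,\lvert\gamma_i'(z)\rvert\,\textnormal{d}z$, where the Jacobian is $\lvert\gamma_i'(z)\rvert = \lvert\textnormal{d}\tilde{\gamma}_i/\textnormal{d}z\rvert = \phi_i(z)$.

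Next I would reassemble the monotone contributions. For a fixed output value $z$, the intervals $I_k$ whose image $g(I_k)$ contains $z$ are exactly those supplying the $K_z$ roots of $z=g(r)$ lying on monotone pieces; the truncation $\tilde{\gamma}_i$, padded with $0$ for $i>K_z$, together with the induced $\phi_i(z)=0$ outside $g(I_k)$, encodes these support restrictions so that each branch contributes only on the $z$-range where its root exists. This lets me rewrite $\sum_k\int_{g(I_k)}(\cdots)\,\textnormal{d}z$ as $\int_{\mathbb{R}}\psi(z)\sum_{i=1}^{M_1}\phi_i(z)\,\mathcal{N}(\tilde{\gamma}_i(z);\mu,R)\,\textnormal{d}z$, matching the first sum in \eqref{equationlemmaZ}. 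For each quantization set I would use that $g\equiv z_j^*$ there, so $\int_{S_j}\psi(g(r))\mathcal{N}(r;\mu,R)\textnormal{d}r = \psi(z_j^*)\,\mathbb{P}(r\in g^{-1}(z_j^*))$; writing $\psi(z_j^*)=\int_{\mathbb{R}}\psi(z)\,\delta(z-z_j^*)\,\textnormal{d}z$ converts this into an atom of mass $\int_{g^{-1}(z_j^*)}\mathcal{N}(r;\mu,R)\,\textnormal{d}r$, reproducing the second sum in \eqref{equationlemmaZ}. Summing over all pieces and invoking the arbitrariness of $\psi$ yields the stated formula.

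The main obstacle will be the bookkeeping in the monotone part: correctly matching the local inverse branches $\gamma_i$ to the intervals $I_k$, accounting for the fact that distinct intervals may have overlapping images in $z$ (so several branches are simultaneously active), and verifying that the zero-padding convention for $\tilde{\gamma}_i$ combined with $\phi_i=\lvert\textnormal{d}\tilde{\gamma}_i/\textnormal{d}z\rvert$ reproduces exactly the indicator of $g(I_k)$ while introducing no spurious mass at the padding value $0$. A secondary point is justifying that the contributions of the null set of breakpoints and of any region where $g$ is unbounded are negligible, and that the change of variables remains valid even though $g$ is only piecewise smooth; both follow from the absolute continuity of the Gaussian measure and strict monotonicity on each piece.
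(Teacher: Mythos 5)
Your proof is correct, but it takes a genuinely different route from the paper's. The paper argues pointwise in $z$: for $z \notin \{z_1^*,\dots,z_{M_2}^*\}$ it directly invokes the textbook transformation-of-random-variables theorem (Papoulis) to get $\sum_{i=1}^{K_z}\lvert\textnormal{d}\gamma_i/\textnormal{d}z\rvert\,\mathcal{N}(\gamma_i(z);\mu,R)$, rewritten via the zero-padding convention as $\sum_{i=1}^{M_1}\phi_i(z)\mathcal{N}(\tilde{\gamma}_i(z);\mu,R)$; for $z$ at a quantization level it writes the discrete part as $\sum_{j}\mathbb{P}(z=z_j^*)\,\delta(z-z_j^*)$; and it then simply adds the two expressions, noting the deltas are supported only at the $z_j^*$. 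You instead verify the identity in weak form, testing $\mathbb{E}[\psi(g(r))]$ against bounded continuous $\psi$, splitting over the domain partition, changing variables on each monotone piece to produce the $\phi_i$ factors, and converting $\psi(z_j^*)\,\mathbb{P}\bigl(r\in g^{-1}(z_j^*)\bigr)$ into an atom. Your route is more self-contained and arguably more rigorous: the paper's final "gluing" of an absolutely continuous density and a purely atomic GPDF is informal (the pointwise value of a generalized density at an atom is not meaningful), whereas the weak formulation handles the mixed distribution in one stroke, pins down $p(z)$ uniquely as a measure, and makes explicit why the zero-padding introduces no spurious mass (where a branch is inactive, $\tilde{\gamma}_i$ is constant, so $\phi_i\equiv 0$ annihilates the term — a point the paper leaves implicit in its definition \eqref{gammatildelemma}). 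What the paper's approach buys is brevity and a direct appeal to a standard result; what yours buys is rigor at the cost of the branch-to-interval bookkeeping you correctly flag. The substance — Jacobian-weighted Gaussian branches plus Gaussian-mass atoms at the quantization levels — is identical in both.
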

\begin{proof}
    See Appendix \ref{appendix:lem:pz}.
\end{proof}
Lemma \ref{lem:pz} enables the derivation of the exact expression for the nonlinear output conditional PDF when the signal is contaminated by white Gaussian noise after the nonlinearity block, as seen in Theorem \ref{thm:pytxt}. Note that there exists a class of systems in which the output measurements are only perturbed by noise prior to the nonlinear static block. Specifically, in systems with quantized output measurements, the output is typically represented as $y_t=z_t=g(r_t)$, where $g$ denotes the quantizer. The lemma \ref{lem:pz} allows the direct computation of the probability function $p(z_t|\mathbf{x}_t)$.

\begin{theorem}\label{thm:pytxt}
	Consider the system in \eqref{eqn:general_system}-\eqref{eqn:general_system4}, and assume that $g$ satisfies the conditions in Lemma \ref{lem:pz}. Then,
    \begin{align}
        \hspace{-0.2cm}p(y_t|\mathbf{x}_t)&= \notag \\
        &\hspace{-1.3cm}\sum_{i=1}^{M_1}\hspace{-0.06cm}\int_{\mathbb{R}}\hspace{-0.07cm} \phi_{i}(y_t\hspace{-0.07cm}-\hspace{-0.05cm}\eta_t)\mathcal{N}\hspace{-0.04cm}\left(\tilde{\gamma}_{i}(y_t\hspace{-0.07cm}-\hspace{-0.05cm}\eta_t);\hspace{-0.02cm}\mathbf{Cx}_t\hspace{-0.07cm}+\hspace{-0.06cm}\mathbf{D}\mathbf{u}_t,\hspace{-0.03cm}R\right)\mathcal{N}\hspace{-0.04cm}\left(\eta_t;\hspace{-0.02cm}0,\hspace{-0.03cm}P\right)\hspace{-0.03cm} \textnormal{d}\eta_t \nonumber\\
        &\hspace{-1.2cm}+ \hspace{-0.05cm}\sum_{j=1}^{M_2} \hspace{-0.03cm}\mathcal{N}(y_t\hspace{-0.05cm}-\hspace{-0.05cm}z_j^*;0,\hspace{-0.03cm}P)\hspace{-0.09cm}\int_{r_t\in g^{-1}(z_j^*)}\hspace{-0.7cm}\mathcal{N}(r_t;\mathbf{Cx}_t\hspace{-0.05cm}+\hspace{-0.05cm}\mathbf{D}\mathbf{u}_t,R)\textnormal{d}r_t, \label{eqn:integral_lemma_pytxt}
    \end{align}
	where $\tilde{\gamma}_i(z_t), \phi_i(z_t)$ are defined in \eqref{gammatildelemma}.
\end{theorem}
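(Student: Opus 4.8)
The plan is to obtain $p(y_t|\mathbf{x}_t)$ by marginalizing over the intermediate nonlinear output $z_t$, exploiting the Markov chain structure $\mathbf{x}_t\to r_t\to z_t\to y_t$ of the Wiener model. Since $\eta_t$ is independent of $\mathbf{x}_t$, we have $y_t\perp\mathbf{x}_t\,|\,z_t$, so I would first write
\[
p(y_t|\mathbf{x}_t)=\int_{\mathbb{R}} p(y_t|z_t)\,p(z_t|\mathbf{x}_t)\,\textnormal{d}z_t,
\]
where $p(y_t|z_t)=\mathcal{N}(y_t;z_t,P)$ follows directly from \eqref{eqn:general_system4}, and $p(z_t|\mathbf{x}_t)$ is the generalized PDF of $z_t=g(r_t)$ supplied by Lemma \ref{lem:pz}, applied with mean $\mu=\mathbf{Cx}_t+\mathbf{Du}_t$ because $r_t|\mathbf{x}_t\sim\mathcal{N}(\mathbf{Cx}_t+\mathbf{Du}_t,R)$.

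Second, I would substitute the two-part expression \eqref{equationlemmaZ} into this integral and split it, by linearity of the finite sums, into the strictly monotone contribution ($M_1$ terms) and the quantization contribution ($M_2$ terms). For each monotone term the integrand is $\mathcal{N}(y_t-z_t;0,P)\,\phi_i(z_t)\,\mathcal{N}(\tilde{\gamma}_i(z_t);\mu,R)$; performing the change of variables $\eta_t=y_t-z_t$ (so $z_t=y_t-\eta_t$, $\textnormal{d}z_t=-\textnormal{d}\eta_t$, with the limits flipping) converts it into the convolution form
\[
\int_{\mathbb{R}}\phi_i(y_t-\eta_t)\,\mathcal{N}(\tilde{\gamma}_i(y_t-\eta_t);\mathbf{Cx}_t+\mathbf{Du}_t,R)\,\mathcal{N}(\eta_t;0,P)\,\textnormal{d}\eta_t,
\]
which is exactly the first sum in \eqref{eqn:integral_lemma_pytxt}. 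For each quantization term, the Dirac delta $\delta(z_t-z_j^*)$ collapses the integral by the sifting property, leaving $\mathcal{N}(y_t-z_j^*;0,P)$ multiplied by the constant mass $\int_{r_t\in g^{-1}(z_j^*)}\mathcal{N}(r_t;\mu,R)\,\textnormal{d}r_t$, which reproduces the second sum.

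The main obstacle will be rigorously justifying the marginalization identity and the manipulation of the Dirac delta, since $p(z_t|\mathbf{x}_t)$ is a \emph{generalized} PDF mixing an absolutely continuous part with atoms at $z_1^*,\dots,z_{M_2}^*$. I would handle this by interpreting the atomic terms as point masses and treating $\int_{\mathbb{R}}\mathcal{N}(y_t-z_t;0,P)\,\delta(z_t-z_j^*)\,\textnormal{d}z_t=\mathcal{N}(y_t-z_j^*;0,P)$ as the standard distributional evaluation; equivalently, one may condition on the event $\{r_t\in g^{-1}(z_j^*)\}$ and note that on this event $z_t\equiv z_j^*$, so $y_t$ is exactly $\mathcal{N}(z_j^*,P)$-distributed. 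Integrability of the monotone terms, and hence the legitimacy of exchanging the finite sum with the integral via Fubini's theorem, follows because $\mathcal{N}(\eta_t;0,P)$ is bounded while the remaining factors constitute the integrable density of Lemma \ref{lem:pz}.
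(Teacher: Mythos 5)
Your proposal is correct and follows essentially the same route as the paper's proof: the paper marginalizes over $\eta_t$ and substitutes $p(z_t|\mathbf{x}_t)\big|_{z_t=y_t-\eta_t}$ from Lemma \ref{lem:pz}, which is exactly the convolution you arrive at by marginalizing over $z_t$ and then changing variables via $\eta_t = y_t - z_t$. Your treatment of the quantization terms by the sifting property of the Dirac delta matches the paper's evaluation of $\mathcal{N}(\eta_t;0,P)$ at $\eta_t=y_t-z_j^*$, so apart from your (welcome) extra remarks on Fubini and the interpretation of the generalized PDF, the two arguments coincide.
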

\begin{proof}
    See Appendix \ref{appendix:thm:pytxt}.
\end{proof}

Notice that the integrals in \eqref{eqn:integral_lemma_pytxt} cannot be computed in closed form, in general. Hence, in this work we propose to approximate these integrals in a convenient manner, leading to explicit filtering and smoothing formulas. For the sake of brevity, we assume that the preimages of $g(z_j^*)$ are bounded intervals, that is, $g^{-1}(z_j^*) = [\underline{q}_j, \bar{q}_j]$, where $\underline{q}_j$ and $\bar{q}_j$ are the lower and upper limits of the interval. General expressions can be derived at the expense of a more involved notation.

Consider the following Gauss-Legendre quadrature rule for approximating an integral in the finite interval $[-1,1]$:
\begin{equation}
    \int_{-1}^1 h(x)\textnormal{d}x = \sum_{\tau=1}^L \omega_\tau h(\psi_\tau)+R_L,
\end{equation}
where $\omega_\tau$ and $\psi_\tau$ are the weights of the quadrature and the roots of the Legendre polynomial of order $L$, respectively \cite{cohen2011numerical}, and $R_L$ is a residual term \cite{kahaner1989numerical}. This quadrature rule can be adapted to the intervals $[\underline{q}_{j},\bar{q}_j]$ and $(-\infty,\infty)$ by considering the changes of variables $x=\varepsilon (\bar{q}_j-\underline{q}_{j})/2 + (\bar{q}_j+\underline{q}_{j})/2$ and $x=\varepsilon/(1-\varepsilon^2)$ respectively, which give 
\begin{align}\label{eqn:approx_finite}
    \int_{\underline{q}_{j}}^{\bar{q}_j}\hspace{-0.07cm}h(x)\textnormal{d}x &\approx \hspace{-0.05cm}\sum_{\tau=1}^{L}\hspace{-0.03cm}\omega_{\tau}h\hspace{-0.05cm}\left(\frac{\bar{q}_j\hspace{-0.05cm}-\hspace{-0.05cm}\underline{q}_{j}}{2} \psi_{\tau}\hspace{-0.05cm}+\hspace{-0.05cm} \frac{\bar{q}_j\hspace{-0.05cm}+\hspace{-0.05cm}\underline{q}_{j}}{2}\right)\frac{\bar{q}_j\hspace{-0.05cm}-\hspace{-0.05cm}\underline{q}_{j}}{2}, \\
\label{eqn:approx_at}
    \int_{-\infty}^{\infty}h(x)\textnormal{d}x &\approx \sum_{\tau=1}^{L}\omega_{\tau}h\left(\frac{\psi_{\tau}}{1-\psi_{\tau}^2}\right)\frac{1+\psi_{\tau}^2}{(1-\psi_{\tau}^2)^2}.
\end{align}
Then, applying these approximations to $p(y_t|\mathbf{x}_t)$ yields
\begin{align}
    p(y_t|\mathbf{x}_t) &\approx \sum_{\ell_1=1}^{K_1} \beta_{\ell_1}\mathcal{N}(s_{\ell_1},\mathbf{Cx}_t+\mathbf{D}\mathbf{u}_t,R), \nonumber\\
    &+ \sum_{\ell_2=1}^{K_2}\alpha_{\ell_2}\mathcal{N}(m_{\ell_2},\mathbf{Cx}_t+\mathbf{D}\mathbf{u}_t,R), \label{eqn:y_approximation}
\end{align}
where $K_1=M_1L_1$, $K_2=M_2L_2$, $\ell_1=(i-1)L_1+\tau_1$, $\ell_2=(j-1)L_2+\tau_2$, and
\begin{align}
         s_{\ell_1}&=\tilde{\gamma}_i(y_t-\lambda_{\tau_1}),\\
         m_{\ell_2}\hspace{-0.06cm}&=\psi_{\tau_2}(\bar{q}_j-\underline{q}_j)/2+(\bar{q}_j+\underline{q}_j)/2,\\
        \beta_{\ell_1}&=\hspace{-0.04cm}\omega_{\tau_1}\phi_i(y_t\hspace{-0.07cm}-\hspace{-0.07cm}\lambda_{\tau_1}\hspace{-0.01cm}) \mathcal{N}\hspace{-0.04cm}\left(\lambda_{\tau_1}\hspace{-0.02cm};0,\hspace{-0.03cm}P\right)\hspace{-0.04cm}(1\hspace{-0.06cm}+\hspace{-0.04cm}\psi_{\tau_1}^2)/(1\hspace{-0.07cm}-\hspace{-0.04cm}\psi_{\hspace{-0.01cm}\tau_1}^2)^2\hspace{-0.02cm},\hspace{-0.04cm}\\
        \alpha_{\ell_2} &= \omega_{\tau_2}\mathcal{N}(y_t-z_j^*;0,P)(\bar{q}_j-\underline{q}_j)/2, \label{eqn:alpha_elle2}
\end{align}
where $\tau_1$ and $\tau_2$ are integers that range from $1$ to $L_1$ and $1$ to $L_2$, respectively, and where $\lambda_{\tau_1}=\psi_{\tau_1}/(1-\psi_{\tau_1}^2)$. Notice that $p(y_t|\mathbf{x}_t)$ can be rewritten as a single summation defining a new index $\kappa=1,\dots,K_1,K_1+1,\dots,K_2$, i.e., 
\begin{align}\label{eqn:y_approximation_rew}
    p(y_t|\mathbf{x}_t) &\approx \sum_{\kappa=1}^{K} \varphi_t^{\kappa}\mathcal{N}(\zeta_t^{\kappa}(y_t);\mathbf{Cx}_t+\mathbf{D}\mathbf{u}_t,R),
\end{align}
where the total number of Gaussian components is $K=K_1+K_2$, and the pair $\varphi^{\kappa},\zeta^{\kappa}(y_t)$ takes values following
\begin{align}
    \varphi_t^{\kappa},\zeta_t^{\kappa}(y_t) &= \left\lbrace 
    \begin{matrix}
    	\beta_{\kappa},s_{\kappa} & \textnormal{\hspace{-0.07cm}if}  & \hspace{-0.04cm}1 \leq \kappa \leq K_1, \\
    	\alpha_{\kappa-K_1},m_{\kappa-K_1} & \textnormal{\hspace{-0.07cm}if}  & \hspace{-0.04cm}K_1\hspace{-0.04cm} + \hspace{-0.04cm}1  \leq \kappa \leq K.
    \end{matrix}
    \right. \hspace{-0.05cm}\label{eqn:elements_app_py}
\end{align}

\section{The Quadrature Gaussian Sum Filter and Smoother for Wiener Systems}\label{sec:filteringsmoothing}
Theorem \ref{thm:pytxt} defines an explicit Gaussian mixture model for $p(y_t|\mathbf{x}_t)$, which is now exploited to design the proposed Quadrature Gaussian sum filtering (QGSF) and smoothing (QGSS) algorithms for Wiener state-space systems. 

\subsection{Quadrature Gaussian Sum Filtering for Wiener systems}
The QGSF algorithm for Wiener state-space systems is summarized in the following theorems:
  
\begin{theorem}\label{thm:filtering}
    Consider the system in \eqref{eqn:general_system}-\eqref{eqn:general_system4}, the PDF $p(y_t|\mathbf{x}_t)$ in Theorem \ref{thm:pytxt}, and its approximation in \eqref{eqn:y_approximation_rew}. Then, the Quadrature Gaussian Sum Filter for Wiener state-space systems is given as follows. For time $t=1$, the PDF of the time update step is given by $p(\mathbf{x}_1)=\mathcal{N}(\mathbf{x}_1;\bm{\mu}_1,\mathbf{P}_1)$, and for $t=1,\dots,N$, the following steps are defined:\\
    
    \noindent \textbf{Measurement Update:} The filtered PDF of the current state $\mathbf{x}_t$ given measurements of the nonlinear output $y_1,\dots,y_t$ is the Gaussian mixture model
    \begin{equation}\label{eqn:correction_filtering} 
        p(\mathbf{x}_t|y_{1:t})=\sum_{k=1}^{M_{t|t}} \delta_{t|t}^{k}\mathcal{N}(\mathbf{x}_{t};\hat{\mathbf{x}}_{t|t}^{k},\bm{\Sigma}_{t|t}^{k}),
    \end{equation}
    where for each pair $(\kappa,\ell)$, with $\kappa=1,\dots,K$ and $\ell=1,\dots, M_{t|t-1}$, a new index $k=(\ell-1)K+\kappa$ is obtained, such that the weights, means, and covariances are given by
    \begin{align}
            M_{t|t}&=KM_{t|t-1}, \label{eqn:lemma_filtering_0}\\
            \delta_{t|t}^{k} & \!=\! \bar{\delta}_{t|t}^{k}/\textstyle\sum_{r=1}^{M_{t|t}}\bar{\delta}_{t|t}^{r}, \label{eqn:lemma_filtering_1}\\	
            \bar{\delta}_{t|t}^{k}&\!=\!\varphi_t^{\kappa} \delta_{t|t-1}^{\ell}\mathcal{N}\hspace{-0.05cm}\prt{\hspace{-0.02cm}\zeta_t^{\kappa}(y_t);\mathbf{C}\hat{\mathbf{x}}_{t|t\hspace{-0.01cm}-\hspace{-0.01cm}1}^{\ell}\hspace{-0.07cm}+\hspace{-0.05cm}\mathbf{D}\mathbf{u}_t,R\hspace{-0.05cm}+\hspace{-0.05cm}\mathbf{C}\bm{\Sigma}_{t|t\hspace{-0.01cm}-\hspace{-0.01cm}1}^{\ell}\hspace{-0.02cm}\mathbf{C}^{\hspace{-0.02cm}\top}\hspace{-0.02cm}}\hspace{-0.05cm},\label{eqn:lemma_filtering_2}\\
            \mathbf{K}_t^{\ell}&\!=\!\bm{\Sigma}_{t|t-1}^{\ell}\mathbf{C}^{\top}(R\hspace{-0.05cm}+\hspace{-0.05cm}\mathbf{C}\bm{\Sigma}_{t|t-1}^{\ell}\mathbf{C}^{\top})^{-1}, \label{eqn:lemma_filtering_3}\\ 
            \hat{\mathbf{x}}_{t|t}^{k}&\!=\!\hat{\mathbf{x}}_{t|t-1}^{\ell}+\mathbf{K}_t^{\ell}(\zeta_t^{\kappa}(y_t)-\mathbf{C}\hat{\mathbf{x}}_{t|t-1}^{\ell}-\mathbf{D}\mathbf{u}_t),\label{eqn:lemma_filtering_4}\\
            \bm{\Sigma}_{t|t}^{k}&\!=\!(\mathbf{I}_n-\mathbf{K}_t^{\ell}\mathbf{C})\bm{\Sigma}_{t|t-1}^{\ell},\label{eqn:lemma_filtering_5}
    \end{align}
    where $K$, $\varphi_t^{\kappa}$, and $\zeta_t^{\kappa}(y_t)$ describe the Gaussian mixture model in \eqref{eqn:y_approximation_rew}. The initial values of the recursion are $M_{1|0}=1$, $\delta_{1|0}=1$, $\hat{\mathbf{x}}_{1|0}=\bm{\mu}_1$, and $\bm{\Sigma}_{1|0}=\mathbf{P}_1$.\\
    
    \noindent \textbf{Time Update:} The predicted PDF of the state $\mathbf{x}_{t+1}$ given measurements of the nonlinear output $y_1,\dots,y_t$ is given by
    \begin{equation}\label{eqn:prediccion_filtering}
        p(\mathbf{x}_{t+1}|y_{1:t})=\sum_{k=1}^{M_{t+1|t}} \delta_{t+1|t}^{k}\mathcal{N}(\mathbf{x}_{t+1};\hat{\mathbf{x}}_{t+1|t}^{k},\bm{\Sigma}_{t+1|t}^{k}),
    \end{equation}
    where $M_{t+1|t}=M_{t|t}, \delta_{t+1|t}^{k}=\delta_{t|t}^{k}$, and
    \begin{align}
\hat{\mathbf{x}}_{t+1|t}^{k}&=\mathbf{A}\hat{\mathbf{x}}_{t|t}^{k}+\mathbf{B}\mathbf{u}_t, \label{eqn:lemma_filtering_7}\\	
        \bm{\Sigma}_{t+1|t}^{k}&=\mathbf{Q}+\mathbf{A}\bm{\Sigma}_{t|t}^{k}\mathbf{A}^\top. \label{eqn:lemma_filtering_8}
    \end{align}
\end{theorem}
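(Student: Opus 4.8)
The plan is to prove both updates by induction on $t$, reducing every step to two standard linear-Gaussian identities applied termwise across the mixture components: the Gaussian product rule (which yields the Kalman correction) and the Gaussian marginalization integral (which yields the Kalman prediction). The base case is immediate, since at $t=1$ the prediction density is the single Gaussian $p(\mathbf{x}_1)=\mathcal{N}(\mathbf{x}_1;\bm{\mu}_1,\mathbf{P}_1)$, matching the stated initialization $M_{1|0}=1$, $\delta_{1|0}=1$, $\hat{\mathbf{x}}_{1|0}=\bm{\mu}_1$, $\bm{\Sigma}_{1|0}=\mathbf{P}_1$. For the inductive step I would assume that the prediction PDF is a Gaussian mixture $p(\mathbf{x}_t|y_{1:t-1})=\sum_{\ell=1}^{M_{t|t-1}}\delta_{t|t-1}^{\ell}\mathcal{N}(\mathbf{x}_t;\hat{\mathbf{x}}_{t|t-1}^{\ell},\bm{\Sigma}_{t|t-1}^{\ell})$ and show that the correction and prediction steps preserve this form.

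For the measurement update, I would substitute this prediction mixture together with the Gaussian-sum model of $p(y_t|\mathbf{x}_t)$ from \eqref{eqn:y_approximation_rew} into the Bayesian correction \eqref{eqn:bayesian_filtering_meas}. The unnormalized numerator is then a product of two finite sums, which expands into the double sum $\sum_{\kappa=1}^{K}\sum_{\ell=1}^{M_{t|t-1}}\varphi_t^{\kappa}\delta_{t|t-1}^{\ell}\mathcal{N}(\zeta_t^{\kappa}(y_t);\mathbf{C}\mathbf{x}_t+\mathbf{D}\mathbf{u}_t,R)\mathcal{N}(\mathbf{x}_t;\hat{\mathbf{x}}_{t|t-1}^{\ell},\bm{\Sigma}_{t|t-1}^{\ell})$. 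Treating the known scalar $\zeta_t^{\kappa}(y_t)$ as a deterministic pseudo-observation of the linear-Gaussian model with output matrix $\mathbf{C}$, offset $\mathbf{D}\mathbf{u}_t$, and noise covariance $R$, I would apply the Gaussian product identity (completion of the square) to factor each summand as $\mathcal{N}(\zeta_t^{\kappa}(y_t);\mathbf{C}\hat{\mathbf{x}}_{t|t-1}^{\ell}+\mathbf{D}\mathbf{u}_t,R+\mathbf{C}\bm{\Sigma}_{t|t-1}^{\ell}\mathbf{C}^{\top})\mathcal{N}(\mathbf{x}_t;\hat{\mathbf{x}}_{t|t}^{k},\bm{\Sigma}_{t|t}^{k})$, where the posterior mean, covariance, and gain are exactly \eqref{eqn:lemma_filtering_4}, \eqref{eqn:lemma_filtering_5}, and \eqref{eqn:lemma_filtering_3}. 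The first, $\mathbf{x}_t$-independent factor is precisely the unnormalized weight $\bar{\delta}_{t|t}^{k}$ of \eqref{eqn:lemma_filtering_2}; since the normalizing constant $p(y_t|y_{1:t-1})$ equals $\sum_r\bar{\delta}_{t|t}^{r}$, dividing gives the normalized weights \eqref{eqn:lemma_filtering_1}. Finally, the bijective reindexing $k=(\ell-1)K+\kappa$ collapses the double sum into the single mixture \eqref{eqn:correction_filtering} with $M_{t|t}=KM_{t|t-1}$ components.

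For the time update, I would insert the filtered mixture \eqref{eqn:correction_filtering} and the transition density \eqref{eqn:prob_model_pxtm1_xt} into the Chapman--Kolmogorov prediction \eqref{eqn:bayesian_filtering_time}, interchange the finite sum with the integral, and evaluate each term by the Gaussian marginalization identity $\int_{\mathbb{R}^n}\mathcal{N}(\mathbf{x}_{t+1};\mathbf{A}\mathbf{x}_t+\mathbf{B}\mathbf{u}_t,\mathbf{Q})\mathcal{N}(\mathbf{x}_t;\hat{\mathbf{x}}_{t|t}^{k},\bm{\Sigma}_{t|t}^{k})\textnormal{d}\mathbf{x}_t=\mathcal{N}(\mathbf{x}_{t+1};\mathbf{A}\hat{\mathbf{x}}_{t|t}^{k}+\mathbf{B}\mathbf{u}_t,\mathbf{Q}+\mathbf{A}\bm{\Sigma}_{t|t}^{k}\mathbf{A}^{\top})$. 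Because this integral leaves each weight untouched, the predicted density is the mixture \eqref{eqn:prediccion_filtering} with $M_{t+1|t}=M_{t|t}$, $\delta_{t+1|t}^{k}=\delta_{t|t}^{k}$, and the moment recursions \eqref{eqn:lemma_filtering_7}--\eqref{eqn:lemma_filtering_8}, which closes the induction. The Gaussian algebra behind the two identities is routine; I expect the only genuinely delicate points to be the careful index accounting in the reindexing step and justifying that $\zeta_t^{\kappa}(y_t)$ may be handled as a fixed observation rather than a random variable, so that the Kalman-type formulas apply verbatim. I would relegate the explicit completion-of-the-square verification of the two identities to a short lemma or a standard reference.
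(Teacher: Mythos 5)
Your proposal is correct and follows essentially the same route as the paper's proof: an implicit induction on the mixture form of $p(\mathbf{x}_t|y_{1:t-1})$, expansion of the product $p(y_t|\mathbf{x}_t)p(\mathbf{x}_t|y_{1:t-1})$ into a double sum, termwise factorization via the Gaussian product/marginalization identities (which the paper packages as its Lemma \ref{lm:marginal_from_coditional}), the reindexing $k=(\ell-1)K+\kappa$, and normalization by $\sum_r \bar{\delta}_{t|t}^{r}$. Your two ``delicate points'' are also handled identically in the paper, which observes that once $y_t$ is measured the factor $\mathcal{N}(\zeta_t^{\kappa}(y_t);\mathbf{C}\hat{\mathbf{x}}_{t|t-1}^{\ell}+\mathbf{D}\mathbf{u}_t,R+\mathbf{C}\bm{\Sigma}_{t|t-1}^{\ell}\mathbf{C}^{\top})$ is just a numerical coefficient, and which likewise delegates the completion-of-the-square algebra to a cited technical lemma.
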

\begin{proof}
	See Appendix \ref{proof:filtering}.
\end{proof}
By exploiting the approximation of $p(y_t|\mathbf{x}_t)$ defined in \eqref{eqn:y_approximation_rew}, the backward filtering algorithm is stated next.
\begin{theorem}\label{thm:backward}
	Consider the system in \eqref{eqn:general_system}-\eqref{eqn:general_system4}, the PDF $p(y_t|\mathbf{x}_t)$ given in Theorem \ref{thm:pytxt}, and its approximation in \eqref{eqn:y_approximation_rew}. Then, the backward filtering algorithm for Wiener systems is given as follows: For $t=N$, the equation for the correction stage is
	\begin{align}
	    p(y_{N}|\mathbf{x}_N) \!&=\! \sum_{k=1}^{S_{N|N}} \!\! \epsilon_{N|N}^{k}  \mathcal{N}\big( \zeta_{N:N}^{k}(y_{N:N});\nonumber\\
        &\left. \mathcal{O}_{N|N}\mathbf{x}_N+\mathcal{H}_{N|N}\mathbf{u}_{N:N},\mathcal{P}_{N|N}\right),\label{eqn:backward_tN}
	\end{align}
	where $S_{N|N}=K$, $\epsilon_{N|N}^{k}=\varphi_{N}^{k}$, $\mathcal{O}_{N|N}=\mathbf{C}$, $\mathcal{H}_{N|N}=\mathbf{D}$ and $\mathcal{P}_{N|N}=R$, where $\varphi_N^{k}$ and $\zeta_{N:N}^{k}(y_{N:N})=\zeta_{N}^{k}(y_{N})$ are defined in \eqref{eqn:elements_app_py}, and $\mathbf{u}_{N:N}=\mathbf{u}_N$. \\
	
	\noindent\textbf{Backward prediction:} For $t=N-1,\dots,1$ the backward prediction equation is defined as
	\begin{align}\label{eqn:backward_prediction}
		p(y_{t+1:N}|\mathbf{x}_t)&\!=\!\!\!\sum_{k=1}^{S_{t|t+1}}\!\!\epsilon_{t|t+1}^{k} \mathcal{N}\left(\zeta_{t+1:N}^{k}(y_{t+1:N});\right.\nonumber\\
        & \left.\mathcal{O}_{t|t+1}\mathbf{x}_t+\mathcal{H}_{t|t+1}\mathbf{u}_{t:N}, \mathcal{P}_{t|t+1}\right),
	\end{align}
	where $S_{t|t+1}=S_{t+1|t+1}$, $\epsilon_{t|t+1}^{k}=\epsilon_{t+1|t+1}^{k}$, and 
		\begin{align}
            \label{eqn:backward_prediction_S}
            \hspace{-0.5cm}\mathcal{O}_{t|t+1}\hspace{-0.08cm}&=\hspace{-0.05cm}\mathcal{O}_{t+1|t+1}\mathbf{A}, \hspace{-0.05cm}\quad \hspace{-0.12cm}\mathcal{H}_{t|t\hspace{-0.02cm}+\hspace{-0.02cm}1}\hspace{-0.07cm}=\hspace{-0.09cm}\begin{bmatrix}\mathcal{O}_{t\hspace{-0.02cm}+\hspace{-0.02cm}1|t\hspace{-0.02cm}+\hspace{-0.02cm}1}\mathbf{B},& \hspace{-0.22cm}\mathcal{H}_{t\hspace{-0.02cm}+\hspace{-0.02cm}1|t\hspace{-0.02cm}+\hspace{-0.02cm}1}\end{bmatrix}\hspace{-0.03cm}, \\
            \hspace{-0.5cm}\mathcal{P}_{t|t+1}&=\mathcal{P}_{t+1|t+1} + \mathcal{O}_{t+1|t+1} \mathbf{Q} \mathcal{O}_{t+1|t+1}^\top. \label{eqn:backward_prediction_P}	
	\end{align}
    
 \noindent\textbf{Backward Measurement Update:} For $t=N-1,\dots,1$, and for each pair $(\tau,k)$, where $\tau=1,\dots,K$ and $k=1,\dots,S_{t|t+1}$, let $\ell$ be a new index $\ell=(k-1)K+\tau$. Then, 	\begin{align}\label{eqn:backward_measureement}
		p(y_{t:N}|\mathbf{x}_t)&=\sum_{\ell=1}^{S_{t|t}}\hspace{-0.03cm}\epsilon_{t|t}^{\ell}\mathcal{N}\prt{\zeta_{t:N}^{\ell}(y_{t:N}); \mathcal{O}_{t|t}\mathbf{x}_t+\mathcal{H}_{t|t}\mathbf{u}_{t:N}, \mathcal{P}_{t|t}}\hspace{-0.03cm},
	\end{align}
	where $S_{t|t}=KS_{t|t+1}$, $\epsilon_{t|t}^{\ell}=\varphi_{t}^{\tau}\epsilon_{t|t+1}^{k}$, and
	\begin{align}
            \label{eqn:backward_measurement_S}
            \zeta_{t:N}^{\ell}(y_{t:N})&=\begin{bmatrix}\zeta_{t}^{\tau}(y_{t}) \\ \zeta_{t+1:N}^{k}(y_{t+1:N})\end{bmatrix}, \quad \hspace{-0.08cm} \mathcal{O}_{t|t}=\begin{bmatrix} \mathbf{C} \\ \mathcal{O}_{t|t+1} \end{bmatrix}\hspace{-0.05cm}, \\
            \mathcal{H}_{t|t}&=\begin{bmatrix} \hspace{0.03cm}\mathbf{D} \hspace{0.48cm} \mathbf{0}^\top \\ \mathcal{H}_{t|t+1} \end{bmatrix}, \quad \mathcal{P}_{t|t}=\begin{bmatrix} R & \mathbf{0}^\top \\ \mathbf{0} & \mathcal{P}_{t|t+1} \end{bmatrix}.
            \label{eqn:backward_measurement_H}
	\end{align}
\end{theorem}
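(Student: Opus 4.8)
The plan is to establish all three claims simultaneously by \emph{backward} induction on $t$, running from $t=N$ down to $t=1$ and showing that the reverse Bayesian recursion \eqref{eqn:bayesian_backward_prediction}--\eqref{eqn:bayesian_backward_measurement} preserves the Gaussian-mixture form at every step. The base case $t=N$ is immediate: since $p(y_{N:N}|\mathbf{x}_N)=p(y_N|\mathbf{x}_N)$, the representation \eqref{eqn:backward_tN} is simply the approximation \eqref{eqn:y_approximation_rew} evaluated at $t=N$, and matching terms gives $S_{N|N}=K$, $\epsilon_{N|N}^{k}=\varphi_N^{k}$, $\mathcal{O}_{N|N}=\mathbf{C}$, $\mathcal{H}_{N|N}=\mathbf{D}$, $\mathcal{P}_{N|N}=R$, with $\zeta_{N:N}^{k}=\zeta_N^{k}$. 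No work beyond this identification is needed.

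For the backward prediction step I would substitute the inductive hypothesis for $p(y_{t+1:N}|\mathbf{x}_{t+1})$ together with $p(\mathbf{x}_{t+1}|\mathbf{x}_t)=\mathcal{N}(\mathbf{x}_{t+1};\mathbf{Ax}_t+\mathbf{Bu}_t,\mathbf{Q})$ into \eqref{eqn:bayesian_backward_prediction}, then interchange sum and integral and treat each component in isolation. The essential tool is the affine-Gaussian marginalization identity
\[
\int_{\mathbb{R}^n}\mathcal{N}(\zeta;\mathbf{O}\mathbf{x}_{t+1}+\mathbf{h},\mathcal{P})\,\mathcal{N}(\mathbf{x}_{t+1};\mathbf{m},\mathbf{Q})\,\textnormal{d}\mathbf{x}_{t+1}=\mathcal{N}(\zeta;\mathbf{O}\mathbf{m}+\mathbf{h},\mathcal{P}+\mathbf{O}\mathbf{Q}\mathbf{O}^\top),
\]
applied with $\mathbf{O}=\mathcal{O}_{t+1|t+1}$, $\mathbf{h}=\mathcal{H}_{t+1|t+1}\mathbf{u}_{t+1:N}$, and $\mathbf{m}=\mathbf{Ax}_t+\mathbf{Bu}_t$. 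The resulting mean $\mathcal{O}_{t+1|t+1}\mathbf{A}\,\mathbf{x}_t+\mathcal{O}_{t+1|t+1}\mathbf{B}\mathbf{u}_t+\mathcal{H}_{t+1|t+1}\mathbf{u}_{t+1:N}$ is then regrouped: the $\mathbf{x}_t$-factor gives $\mathcal{O}_{t|t+1}=\mathcal{O}_{t+1|t+1}\mathbf{A}$, and the two input contributions assemble against the stacked vector $\mathbf{u}_{t:N}=[\mathbf{u}_t^\top,\mathbf{u}_{t+1:N}^\top]^\top$ into the single operator $\mathcal{H}_{t|t+1}=[\mathcal{O}_{t+1|t+1}\mathbf{B},\ \mathcal{H}_{t+1|t+1}]$, recovering \eqref{eqn:backward_prediction_S}. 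The covariance becomes $\mathcal{P}_{t+1|t+1}+\mathcal{O}_{t+1|t+1}\mathbf{Q}\mathcal{O}_{t+1|t+1}^\top$, matching \eqref{eqn:backward_prediction_P}, while the weights and component count are carried over unchanged.

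For the backward measurement update I would insert the GMM form \eqref{eqn:y_approximation_rew} for $p(y_t|\mathbf{x}_t)$ and the prediction mixture just obtained for $p(y_{t+1:N}|\mathbf{x}_t)$ into the product \eqref{eqn:bayesian_backward_measurement}, producing a double sum over $(\tau,k)$ that is reindexed by $\ell=(k-1)K+\tau$. Each summand multiplies a Gaussian in the variable $\zeta_t^\tau(y_t)$ by a Gaussian in the disjoint variable $\zeta_{t+1:N}^k(y_{t+1:N})$; because these address non-overlapping blocks of the output record, their product is a single joint Gaussian in the stacked variable $[\,\zeta_t^\tau;\ \zeta_{t+1:N}^k\,]$ with block-diagonal covariance. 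Writing $\mathbf{Du}_t=[\mathbf{D},\mathbf{0}^\top]\mathbf{u}_{t:N}$ to express both means over the common input $\mathbf{u}_{t:N}$ then yields exactly the stacked operators $\mathcal{O}_{t|t}$, $\mathcal{H}_{t|t}$, $\mathcal{P}_{t|t}$ of \eqref{eqn:backward_measurement_S}--\eqref{eqn:backward_measurement_H}, with weights $\epsilon_{t|t}^{\ell}=\varphi_t^\tau\epsilon_{t|t+1}^{k}$ and count $S_{t|t}=KS_{t|t+1}$.

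The step that demands the most care is the prediction stage. Beyond verifying the marginalization identity (either by completing the square in $\mathbf{x}_{t+1}$ or by invoking the standard linear-Gaussian lemma), the delicate part is the input bookkeeping: one must confirm that the separate terms $\mathcal{O}_{t+1|t+1}\mathbf{B}\mathbf{u}_t$ and $\mathcal{H}_{t+1|t+1}\mathbf{u}_{t+1:N}$ coalesce consistently into $\mathcal{H}_{t|t+1}$ acting on the lengthened vector $\mathbf{u}_{t:N}$, and that dimensions of the stacked zero blocks align as $t$ decreases. The measurement update, by contrast, is purely algebraic, resting only on the fact that the two output blocks enter as independent pseudo-measurements so that their product Gaussian is exactly block diagonal.
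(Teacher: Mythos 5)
Your proposal is correct and follows essentially the same route as the paper's proof: a reverse-time induction whose prediction step rests on the affine-Gaussian marginalization identity (the paper's Lemma \ref{lm:marginal_from_coditional}, applied with $\bm{\psi}=\mathbf{A}\mathbf{x}_t+\mathbf{B}\mathbf{u}_t$ so that the conditional factor integrates out) and whose measurement step stacks the two Gaussians over disjoint output blocks into a single block-diagonal joint Gaussian under the reindexing $\ell=(k-1)K+\tau$, with the same input bookkeeping $\mathbf{D}\mathbf{u}_t=[\mathbf{D}\ \mathbf{0}^\top]\mathbf{u}_{t:N}$. The only difference is presentational: the paper verifies the step $t=N-1$ explicitly and then appeals to similarity for general $t$, whereas you carry out the inductive step directly at arbitrary $t$.
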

\begin{proof}
	See Appendix \ref{proof:backward}. 
\end{proof}

\subsection{Quadrature Gaussian Sum Smoothing for Wiener systems}

\begin{theorem} \label{thm:smoothing}
Consider the system in \eqref{eqn:general_system}-\eqref{eqn:general_system4}, the PDF $p(y_t|\mathbf{x}_t)$ given in Theorem \ref{thm:pytxt}, and its approximation in \eqref{eqn:y_approximation_rew}. Given $p(\mathbf{x}_t|y_{1:t-1})$, $p(\mathbf{x}_N|y_{1:N})$ and $p(y_{t:N}|\mathbf {x}_t)$, the smoothing algorithm for the nonlinear data set $y_{1:N}$ of a Wiener model in state-space form is as follows: \\
The smoothing PDF at time $t=N$ is $p(\mathbf{x}_N|y_{1:N})$, which corresponds to the PDF of the last iteration of the filtering algorithm in the measurement stage, and for $t=N-1,\dots,1$, the smoothing PDF $p(\mathbf{x}_{t}|y_{1:N})$ is given by:
\begin{equation}\label{eqn:smoothing_1_wiener}
    p(\mathbf{x}_t|y_{1:N})=\sum_{k=1}^{S_{t|N}}\delta_{t|N}^{k}\mathcal{N}(\mathbf{x}_t;\hat{\mathbf{x}}_{t|N}^{k},\bm{\Sigma}_{t|N}^{k}),
\end{equation}
where $k$ is an index transforming $(\tau,\ell)$ by $k \hspace{-0.05cm}=\hspace{-0.05cm}(\ell\hspace{-0.05cm}-\hspace{-0.05cm}1)M_{t|t\hspace{-0.02cm}-\hspace{-0.02cm}1}\hspace{-0.02cm}+\hspace{-0.02cm}\tau$ with $\tau=1,\dots,M_{t|t-1}$ and $\ell=1,\dots,S_{t|t}$ and where:\\
\begin{align}
    \label{eqsmoothingS}
    S_{t|N}\hspace{-0.07cm}&=M_{t|t-1}S_{t|t},\\
    \label{eqsmoothingdeltabar}
    \delta_{t|N}^{k}\hspace{-0.07cm}&=\bar{\delta}_{t|N}^{k}/\textstyle\sum_{s=1}^{S_{t|N}}\bar{\delta}_{t|N}^{s},\\
    \bar{\delta}_{t|N}^{k}\hspace{-0.07cm}&= \delta_{t|t-1}^{\tau}\epsilon_{t|t}^{\ell} \notag \\
    \label{eqsmoothingdelta}
    & \hspace{-0.8cm} \times\hspace{-0.05cm} \mathcal{N}\hspace{-0.03cm}\!\prt{\hspace{-0.04cm}\zeta_{t:N}^{\ell}\hspace{-0.02cm}(y_{t:N});\hspace{-0.02cm}\mathcal{O}_{t|t}\hat{\mathbf{x}}_{t|t\hspace{-0.01cm}-\hspace{-0.01cm}1}^{\tau}\!\!+\!\mathcal{H}_{t|t}\mathbf{u}_{t:N},\hspace{-0.02cm}\mathcal{P}_{\hspace{-0.02cm}t|t}\!+\!\mathcal{O}_{t|t}\bm{\Sigma}_{\hspace{-0.03cm}t|t\hspace{-0.01cm}-\hspace{-0.01cm}1}^{\tau}\hspace{-0.02cm}\mathcal{O}_{t|t}^{\top}\hspace{-0.03cm}}\hspace{-0.08cm}, \\
    \label{eqsmoothingx}
    \hat{\mathbf{x}}_{t|N}^{k}\hspace{-0.08cm}&=\hspace{-0.05cm}\hat{\mathbf{x}}_{t|t\hspace{-0.02cm}-\hspace{-0.02cm}1}^{\tau}\!\hspace{-0.06cm}+\hspace{-0.01cm}\!\mathbf{K}_{t|N}^{\tau}\hspace{-0.02cm}(\hspace{-0.01cm}\zeta_{t:N}^{\ell}\hspace{-0.02cm}(y_{t:N}\hspace{-0.01cm})\hspace{-0.07cm}-\hspace{-0.06cm}\mathcal{O}_{t|t}\hat{\mathbf{x}}_{t|t\hspace{-0.01cm}-\hspace{-0.01cm}1}^{\tau}\!\!\hspace{-0.03cm}-\hspace{-0.02cm}\!\mathcal{H}_{t|t}\mathbf{u}_{t:N}\hspace{-0.03cm})\hspace{-0.02cm}, \hspace{-0.1cm}\\
    \label{eqsmoothingsigma}
    \bm{\Sigma}_{t|N}^{k}\hspace{-0.07cm}&=(\mathbf{I}_n-\mathbf{K}_{t|N}^{\tau}\mathcal{O}_{t|t})\bm{\Sigma}_{t|t-1}^{\tau}, \\
    \label{eqsmoothingk}
    \mathbf{K}_{t|N}^{\tau}\hspace{-0.07cm}&=\bm{\Sigma}_{t|t-1}^{\tau}\mathcal{O}_{t|t}^{\top}(\mathcal{P}_{t|t}+\mathcal{O}_{t|t}\bm{\Sigma}_{t|t-1}^{\tau}\mathcal{O}_{t|t}^{\top})^{-1}.
\end{align}
\end{theorem}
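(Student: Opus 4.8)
The plan is to start from the two-filter smoothing identity \eqref{eqn:bayesian_smoothing}, which writes $p(\mathbf{x}_t|y_{1:N})$ as the product $p(\mathbf{x}_t|y_{1:t-1})\,p(y_{t:N}|\mathbf{x}_t)$ divided by the constant $p(y_{t:N}|y_{1:t-1})$, a quantity that does not depend on $\mathbf{x}_t$. First I would substitute the two Gaussian-mixture representations already at our disposal: the prediction density $p(\mathbf{x}_t|y_{1:t-1})=\sum_{\tau=1}^{M_{t|t-1}}\delta_{t|t-1}^{\tau}\mathcal{N}(\mathbf{x}_t;\hat{\mathbf{x}}_{t|t-1}^{\tau},\bm{\Sigma}_{t|t-1}^{\tau})$, delivered by the time-update step of Theorem \ref{thm:filtering} (cf. \eqref{eqn:prediccion_filtering}), and the backward likelihood $p(y_{t:N}|\mathbf{x}_t)=\sum_{\ell=1}^{S_{t|t}}\epsilon_{t|t}^{\ell}\mathcal{N}(\zeta_{t:N}^{\ell}(y_{t:N});\mathcal{O}_{t|t}\mathbf{x}_t+\mathcal{H}_{t|t}\mathbf{u}_{t:N},\mathcal{P}_{t|t})$ from \eqref{eqn:backward_measureement} of Theorem \ref{thm:backward}. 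Distributing the product then yields a double sum over the $M_{t|t-1}S_{t|t}$ index pairs $(\tau,\ell)$, each term being a product of a Gaussian in $\mathbf{x}_t$ and a Gaussian whose mean depends affinely on $\mathbf{x}_t$.

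The core step is a term-by-term Gaussian product (completing-the-square) identity. For a fixed pair $(\tau,\ell)$, I would read $\mathcal{N}(\mathbf{x}_t;\hat{\mathbf{x}}_{t|t-1}^{\tau},\bm{\Sigma}_{t|t-1}^{\tau})$ as a Gaussian prior on $\mathbf{x}_t$ and $\mathcal{N}(\zeta_{t:N}^{\ell}(y_{t:N});\mathcal{O}_{t|t}\mathbf{x}_t+\mathcal{H}_{t|t}\mathbf{u}_{t:N},\mathcal{P}_{t|t})$ as a linear-Gaussian measurement with observation matrix $\mathcal{O}_{t|t}$, offset $\mathcal{H}_{t|t}\mathbf{u}_{t:N}$, and noise covariance $\mathcal{P}_{t|t}$. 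The standard affine-Gaussian conditioning identity factors this product as $\mathcal{N}(\zeta_{t:N}^{\ell}(y_{t:N});\mathcal{O}_{t|t}\hat{\mathbf{x}}_{t|t-1}^{\tau}+\mathcal{H}_{t|t}\mathbf{u}_{t:N},\mathcal{P}_{t|t}+\mathcal{O}_{t|t}\bm{\Sigma}_{t|t-1}^{\tau}\mathcal{O}_{t|t}^{\top})\,\mathcal{N}(\mathbf{x}_t;\hat{\mathbf{x}}_{t|N}^{k},\bm{\Sigma}_{t|N}^{k})$. The first factor is independent of $\mathbf{x}_t$; multiplied by $\delta_{t|t-1}^{\tau}\epsilon_{t|t}^{\ell}$ it gives exactly the unnormalized weight $\bar{\delta}_{t|N}^{k}$ in \eqref{eqsmoothingdelta}. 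The second factor is a Gaussian in $\mathbf{x}_t$ whose mean, covariance, and gain coincide with the Kalman-type update expressions \eqref{eqsmoothingx}, \eqref{eqsmoothingsigma}, and \eqref{eqsmoothingk} for this prior–measurement pair.

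I would then collapse the double sum to a single sum via the reindexing $k=(\ell-1)M_{t|t-1}+\tau$, which accounts for $S_{t|N}=M_{t|t-1}S_{t|t}$ in \eqref{eqsmoothingS}. Because $p(y_{t:N}|y_{1:t-1})$ is $\mathbf{x}_t$-independent and the remaining object is a nonnegative combination of Gaussian densities in $\mathbf{x}_t$, enforcing that $p(\mathbf{x}_t|y_{1:N})$ integrates to one reduces the normalization to rescaling the weights by their sum. This produces $\delta_{t|N}^{k}=\bar{\delta}_{t|N}^{k}/\sum_{s=1}^{S_{t|N}}\bar{\delta}_{t|N}^{s}$ as in \eqref{eqsmoothingdeltabar} and establishes the claimed mixture \eqref{eqn:smoothing_1_wiener}, with the base case at $t=N$ coinciding with the last filtering measurement update.

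The main obstacle is bookkeeping rather than conceptual depth: I must verify the affine-Gaussian factorization when the effective observation $\zeta_{t:N}^{\ell}(y_{t:N})$ is a stacked vector of growing dimension and $\mathcal{O}_{t|t}$, $\mathcal{H}_{t|t}$, $\mathcal{P}_{t|t}$ are the block quantities assembled in Theorem \ref{thm:backward}, checking that the gain $\mathbf{K}_{t|N}^{\tau}$ and the innovation covariance $\mathcal{P}_{t|t}+\mathcal{O}_{t|t}\bm{\Sigma}_{t|t-1}^{\tau}\mathcal{O}_{t|t}^{\top}$ have consistent dimensions and that the $\mathbf{x}_t$-free marginal factor may indeed be absorbed into the scalar weight. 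None of this requires an integration in $\mathbf{x}_t$, since the smoothing formula \eqref{eqn:bayesian_smoothing} is purely multiplicative; the only normalization is the discrete reweighting above.
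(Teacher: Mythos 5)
Your proposal is correct and follows essentially the same route as the paper's proof: both start from the two-filter identity \eqref{eqn:bayesian_smoothing}, substitute the mixture forms \eqref{eqn:prediccion_filtering} and \eqref{eqn:backward_measureement}, apply the affine-Gaussian conditioning factorization (the paper's Lemma \ref{lm:marginal_from_coditional}) term by term to split each product into an $\mathbf{x}_t$-free weight and a Kalman-updated Gaussian in $\mathbf{x}_t$, and then reindex via $k=(\ell-1)M_{t|t-1}+\tau$ and normalize the weights. Your observation that no integration in $\mathbf{x}_t$ is needed and that normalization reduces to rescaling the weights matches the paper's treatment exactly.
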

\begin{proof} 
	See Appendix \ref{proof:smoothing}.
\end{proof}

In addition to the smoothing equation $p(\mathbf{x}_t|y_{1:N})$ in \eqref{eqn:smoothing_1_wiener}, some system identification algorithms require the computation of the joint PDF of the state at two consecutive instants in time $p(\mathbf{x}_{t+1},\mathbf{x}_t|y_{1:N})$, such as in \cite{gibson2005robust,schon2011,Wills2013,gonzalez2023algorithm}. In the case of Wiener systems, the proposed algorithm enables the computation of this PDF from Theorem \ref{thm:joint}.

\begin{theorem}\label{thm:joint}
    Consider the system in \eqref{eqn:general_system}-\eqref{eqn:general_system4}, the PDF $p(y_t|\mathbf{x}_t)$ given in Theorem \ref{thm:pytxt}, and its approximation in \eqref{eqn:y_approximation_rew}. Furthermore, consider the PDF $p(\mathbf{x}_t|y_{1:t})$ and the reverse prediction stage function $p(y_{t+1:N}|\mathbf{x}_{ t+1})$ given in Theorems \ref{thm:filtering} and \ref{thm:backward}, respectively, and the PDF $p(\mathbf{x}_{t+1}|\mathbf{x }_t)$ given in \eqref{eqn:prob_model_pxtm1_xt}. Then, defining the extended vector $(\mathbf{x}_{t}^{\text{e}})^{\top}=[\begin{matrix} \mathbf{x}_{t+1}^ {\top},&\mathbf{x}_t^{\top} \end{matrix}]^{\top}$, we have for $t=N-1,\dots,1$:
    \begin{equation}\label{eqn:thm_joint_smoothig_wiener}
        p(\mathbf{x}_{t+1},\mathbf{x}_t|\mathbf{y}_{1:N})=\sum_{k=1}^{S_{t+1|N}}\delta_{t+1|N}^{k}\mathcal{N}(\mathbf{x}_{t}^{\text{e}};\hat{\mathbf{x}}_{t|N}^{\text{e}(k)},\bm{\Sigma}_{t|N}^{\text{e}(k)}),
    \end{equation}
    where $S_{t+1|N}$ and $\delta_{t+1|N}^{k}$ given by $\eqref{eqsmoothingS}$ and \eqref{eqsmoothingdeltabar} respectively when evaluated at $t=t+1$, and the index $k$ transforms the pair $(\tau,\ell)$ by $k=(\ell-1)M_{t|t}+\tau$, with $\tau=1,\dots,M_ {t|t}$ and $\ell=1,\dots,S_{t+1|t+1}$. The means and covariances are given by
    \begin{align}
    \label{jointsmoothingx}
        \hat{\mathbf{x}}_{t|N}^{\text{e}(k)}\hspace{-0.11cm}&=\hspace{-0.125cm}\begin{bmatrix}
            \hat{\mathbf{x}}_{t+1|N}^k \\ \hat{\mathbf{x}}_{t|t}^\tau + \mathbf{K}_{t|t+1}^\tau (\hat{\mathbf{x}}_{t+1|N}^k-\hat{\mathbf{x}}_{t+1|t}^\tau)
        \end{bmatrix},\\
        \label{jointsmoothingsigma} 
        \bm{\Sigma}_{\hspace{-0.02cm}t|N}^{\text{e}(k)}\hspace{-0.11cm}&=\hspace{-0.125cm}\begin{bmatrix}
            \bm{\Sigma}_{t+1|N}^k & \hspace{-0.1cm}\bm{\Sigma}_{t+1\hspace{-0.005cm}|\hspace{-0.005cm}N}^k\mathbf{K}_{t\hspace{-0.005cm}|\hspace{-0.005cm}t+1}^{\tau\top} \\ \mathbf{K}_{t\hspace{-0.005cm}|\hspace{-0.005cm}t\hspace{-0.02cm}+\hspace{-0.02cm}1}^\tau \hspace{-0.02cm}\bm{\Sigma}_{\hspace{-0.02cm}t\hspace{-0.02cm}+\hspace{-0.02cm}1\hspace{-0.005cm}|\hspace{-0.005cm}N}^k &\hspace{-0.13cm} \bm{\Sigma}_{t\hspace{-0.005cm}|\hspace{-0.005cm}t}^\tau\hspace{-0.13cm}+\hspace{-0.07cm}\mathbf{K}_{\hspace{-0.02cm}t|t\hspace{-0.02cm}+\hspace{-0.02cm}1}^\tau \hspace{-0.03cm} (\bm{\Sigma}_{t\hspace{-0.02cm}+\hspace{-0.02cm}1\hspace{-0.005cm}|\hspace{-0.005cm}N}^k \hspace{-0.11cm}-\hspace{-0.07cm}\bm{\Sigma}_{t\hspace{-0.02cm}+\hspace{-0.02cm}1|t}^\tau )\mathbf{K}_{\hspace{-0.02cm}t\hspace{-0.005cm}|\hspace{-0.005cm}t\hspace{-0.02cm}+\hspace{-0.02cm}1}^{\tau\top} \end{bmatrix}\hspace{-0.07cm},
    \end{align}
    where we have defined $\mathbf{K}_{t|t+1}^{\tau}=\bm{\Sigma}_{t|t}^\tau \mathbf{A}^\top (\mathbf{Q}+\mathbf{A}\bm{\Sigma}_{t|t}^\tau \mathbf{A}^\top)^{-1}$. At instant $t=N$, the joint PDF $p(\mathbf{x}_{N+1},\mathbf{x}_N|y_{1:N})$ is given by \eqref{eqn:thm_joint_smoothig_wiener} evaluating $\ell=1$, that is, $\tau=k$, where $S_{N+1|N}=M_{N|N}$, and $\delta_{N+1|N}^k = \delta_{N|N}^{k}$ is calculated from \eqref{eqn:lemma_filtering_1}.
    \end{theorem}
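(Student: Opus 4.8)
The plan is to expand the two-filter factorization \eqref{eqn:bayes_joint_smoothing}, interpret its right-hand side as a prediction joint over the extended state $\mathbf{x}_t^{\text{e}}$ followed by a single Gaussian measurement update whose measurement acts only on the $\mathbf{x}_{t+1}$-block, and then recast the resulting blocks into the Rauch--Tung--Striebel form \eqref{jointsmoothingx}--\eqref{jointsmoothingsigma}. First I would substitute the three ingredients into \eqref{eqn:bayes_joint_smoothing}: the filtered mixture $p(\mathbf{x}_t|y_{1:t})$ from Theorem \ref{thm:filtering} (components $\tau=1,\dots,M_{t|t}$), the transition density $p(\mathbf{x}_{t+1}|\mathbf{x}_t)=\mathcal{N}(\mathbf{x}_{t+1};\mathbf{A}\mathbf{x}_t+\mathbf{B}\mathbf{u}_t,\mathbf{Q})$ from \eqref{eqn:prob_model_pxtm1_xt}, and the backward likelihood $p(y_{t+1:N}|\mathbf{x}_{t+1})$ given by the measurement-update equation \eqref{eqn:backward_measureement} of Theorem \ref{thm:backward} at time $t+1$ (components $\ell=1,\dots,S_{t+1|t+1}$, affine mean $\mathcal{O}_{t+1|t+1}\mathbf{x}_{t+1}+\mathcal{H}_{t+1|t+1}\mathbf{u}_{t+1:N}$ and covariance $\mathcal{P}_{t+1|t+1}$). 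Distributing the products yields a double sum over $(\tau,\ell)$, for which I would set $k=(\ell-1)M_{t|t}+\tau$; since $M_{t+1|t}=M_{t|t}$, the index $k$ ranges over $S_{t+1|N}=M_{t|t}S_{t+1|t+1}$, matching \eqref{eqsmoothingS} evaluated at $t+1$.

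Next, for each fixed $(\tau,\ell)$ I would merge the transition density with the $\tau$th filter Gaussian. By the affine-Gaussian identity this produces a joint Gaussian over $\mathbf{x}_t^{\text{e}}=[\mathbf{x}_{t+1}^\top,\mathbf{x}_t^\top]^\top$ with mean $[(\hat{\mathbf{x}}_{t+1|t}^\tau)^\top,(\hat{\mathbf{x}}_{t|t}^\tau)^\top]^\top$ and block covariance with diagonal blocks $\bm{\Sigma}_{t+1|t}^\tau,\bm{\Sigma}_{t|t}^\tau$ and off-diagonal block $\mathbf{A}\bm{\Sigma}_{t|t}^\tau$, where $\hat{\mathbf{x}}_{t+1|t}^\tau,\bm{\Sigma}_{t+1|t}^\tau$ are exactly the time-update quantities \eqref{eqn:lemma_filtering_7}--\eqref{eqn:lemma_filtering_8}. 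The remaining factor, the $\ell$th backward Gaussian, depends on $\mathbf{x}_t^{\text{e}}$ only through $\mathbf{x}_{t+1}$, so multiplying it into this prediction joint is a Kalman measurement update with measurement matrix $[\mathcal{O}_{t+1|t+1},\mathbf{0}]$ and noise covariance $\mathcal{P}_{t+1|t+1}$. Because this update combines the same prediction $\mathcal{N}(\mathbf{x}_{t+1};\hat{\mathbf{x}}_{t+1|t}^\tau,\bm{\Sigma}_{t+1|t}^\tau)$ with the same backward likelihood as the marginal smoother, its $\mathbf{x}_{t+1}$-block reproduces $\hat{\mathbf{x}}_{t+1|N}^k,\bm{\Sigma}_{t+1|N}^k$ from Theorem \ref{thm:smoothing} at $t+1$, and the scalar normalizer identifies the joint weight with the marginal smoothing weight \eqref{eqsmoothingdeltabar}; the consistency check is that marginalizing the joint over $\mathbf{x}_t$ returns $p(\mathbf{x}_{t+1}|y_{1:N})$.

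The main work is the algebraic step recasting the joint measurement-update blocks into the stated RTS form. Writing $\mathbf{S}=\mathcal{O}_{t+1|t+1}\bm{\Sigma}_{t+1|t}^\tau\mathcal{O}_{t+1|t+1}^\top+\mathcal{P}_{t+1|t+1}$, the $\mathbf{x}_t$-block gain equals $\bm{\Sigma}_{t|t}^\tau\mathbf{A}^\top\mathcal{O}_{t+1|t+1}^\top\mathbf{S}^{-1}=\mathbf{K}_{t|t+1}^\tau\,\bm{\Sigma}_{t+1|t}^\tau\mathcal{O}_{t+1|t+1}^\top\mathbf{S}^{-1}$, i.e. $\mathbf{K}_{t|t+1}^\tau$ times the $\mathbf{x}_{t+1}$-block gain, using $\mathbf{K}_{t|t+1}^\tau=\bm{\Sigma}_{t|t}^\tau\mathbf{A}^\top(\bm{\Sigma}_{t+1|t}^\tau)^{-1}$ as defined in the statement. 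This gain factorization immediately delivers the $\mathbf{x}_t$-mean $\hat{\mathbf{x}}_{t|t}^\tau+\mathbf{K}_{t|t+1}^\tau(\hat{\mathbf{x}}_{t+1|N}^k-\hat{\mathbf{x}}_{t+1|t}^\tau)$ and the cross-covariance $\bm{\Sigma}_{t+1|N}^k\mathbf{K}_{t|t+1}^{\tau\top}$ of \eqref{jointsmoothingx}--\eqref{jointsmoothingsigma}; for the bottom-right block I would substitute the top-left reduction $\bm{\Sigma}_{t+1|t}^\tau\mathcal{O}_{t+1|t+1}^\top\mathbf{S}^{-1}\mathcal{O}_{t+1|t+1}\bm{\Sigma}_{t+1|t}^\tau=\bm{\Sigma}_{t+1|t}^\tau-\bm{\Sigma}_{t+1|N}^k$ into $\bm{\Sigma}_{t|t}^\tau-\mathbf{K}_{t|t+1}^\tau(\bm{\Sigma}_{t+1|t}^\tau-\bm{\Sigma}_{t+1|N}^k)\mathbf{K}_{t|t+1}^{\tau\top}$, obtaining $\bm{\Sigma}_{t|t}^\tau+\mathbf{K}_{t|t+1}^\tau(\bm{\Sigma}_{t+1|N}^k-\bm{\Sigma}_{t+1|t}^\tau)\mathbf{K}_{t|t+1}^{\tau\top}$. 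I expect this gain factorization together with keeping the two index families $(\tau,\ell)$ consistent with the smoothing index at $t+1$ to be the main obstacle, since it is calculational rather than conceptual. Finally I would treat $t=N$ separately: the range $y_{N+1:N}$ is empty, so the backward likelihood is absent, the measurement update drops out ($\ell=1$, hence $\tau=k$ and $S_{N+1|N}=M_{N|N}$), and \eqref{eqn:thm_joint_smoothig_wiener} collapses to the prediction joint built from the filtered PDF $p(\mathbf{x}_N|y_{1:N})$ with weights $\delta_{N+1|N}^k=\delta_{N|N}^k$ from \eqref{eqn:lemma_filtering_1}.
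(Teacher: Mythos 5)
Your proposal is correct; I checked the gain factorization, the cross-covariance block, and the bottom-right block reduction, and they all hold, as do your index bookkeeping, weight identification, and the $t=N$ termination. However, you take a genuinely different middle route from the paper. The paper never forms an extended-state measurement update: it first applies Lemma \ref{lm:marginal_from_coditional} to the product $p(\mathbf{x}_{t+1}|\mathbf{x}_t)p(\mathbf{x}_t|y_{1:t})$ to factor it as the predictive marginal $\mathcal{N}(\mathbf{x}_{t+1};\hat{\mathbf{x}}_{t+1|t}^\tau,\bm{\Sigma}_{t+1|t}^\tau)$ times the \emph{backward conditional} $\mathcal{N}\big(\mathbf{x}_t;\hat{\mathbf{x}}_{t|t}^\tau+\mathbf{K}_{t|t+1}^\tau(\mathbf{x}_{t+1}-\hat{\mathbf{x}}_{t+1|t}^\tau),(\mathbf{I}_n-\mathbf{K}_{t|t+1}^\tau\mathbf{A})\bm{\Sigma}_{t|t}^\tau\big)$; it then absorbs $p(y_{t+1:N}|\mathbf{x}_{t+1})$ into the marginal alone (a second application of Lemma \ref{lm:marginal_from_coditional}, producing the coefficient that becomes \eqref{eqsmoothingdelta} at $t+1$ and the smoothed Gaussian $\mathcal{N}(\mathbf{x}_{t+1};\hat{\mathbf{x}}_{t+1|N}^k,\bm{\Sigma}_{t+1|N}^k)$), and finally invokes Lemma \ref{lem:joint_gaussian} with that smoothed Gaussian as $p(\mathbf{x})$ and the backward conditional as $p(\mathbf{y}|\mathbf{x})$, so that the blocks of \eqref{jointsmoothingx}--\eqref{jointsmoothingsigma} appear directly, the only algebra being the identity $\mathbf{A}\bm{\Sigma}_{t|t}^\tau=\bm{\Sigma}_{t+1|t}^\tau\mathbf{K}_{t|t+1}^{\tau\top}$ hidden in the bottom-right block. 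The paper's ordering thus exploits the conditional independence of $\mathbf{x}_t$ from $y_{t+1:N}$ given $\mathbf{x}_{t+1}$ so the RTS structure falls out automatically; your ordering reduces everything to one standard Kalman measurement update on the extended state, which is conceptually economical but shifts the burden onto the gain factorization $\bm{\Sigma}_{t|t}^\tau\mathbf{A}^\top=\mathbf{K}_{t|t+1}^\tau\bm{\Sigma}_{t+1|t}^\tau$ and the top-left reduction $\bm{\Sigma}_{t+1|t}^\tau\mathcal{O}_{t+1|t+1}^\top\mathbf{S}^{-1}\mathcal{O}_{t+1|t+1}\bm{\Sigma}_{t+1|t}^\tau=\bm{\Sigma}_{t+1|t}^\tau-\bm{\Sigma}_{t+1|N}^k$, and it implicitly requires $\bm{\Sigma}_{t+1|t}^\tau=\mathbf{Q}+\mathbf{A}\bm{\Sigma}_{t|t}^\tau\mathbf{A}^\top$ to be invertible when you rewrite $\mathbf{K}_{t|t+1}^\tau=\bm{\Sigma}_{t|t}^\tau\mathbf{A}^\top(\bm{\Sigma}_{t+1|t}^\tau)^{-1}$ --- harmless here since the theorem's own definition of $\mathbf{K}_{t|t+1}^\tau$ already involves that inverse, but worth stating. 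Your consistency check (marginalizing over $\mathbf{x}_t$ recovers the Theorem \ref{thm:smoothing} marginal at $t+1$) is a nice addition the paper does not make explicit.
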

\begin{proof}
	See Appendix \ref{proof:joint}.
\end{proof}

\section{Implementation Aspects and Relations with Established Methods}
\label{sec:implementation}
\subsection{Gaussian sum reduction}
In the measurement update step of Theorem \ref{thm:filtering}, and in the backward measurement step of Theorem \ref{thm:backward}, the number of Gaussian components at each iteration follows the equations $M_{t|t}=KM_{t|t-1}$ and $S_{t|t}=KS_{t|t+1}$, respectively, with $K\geq 1$. For $K>1$, this exponential growth renders the computational cost of the algorithm practically unmanageable. Therefore, as in other Gaussian sum filters \cite{wills2023numerically}, it is imperative to constrain this growth in each iteration to maintain the representation of filtered PDFs with a reduced number of Gaussian components. In this article, we employ the joining technique for Gaussian components reduction \cite{kitagawa1994two}, which combines the two Gaussians that minimize the similarity measure given by the Kullback-Leibler divergence.

For the case of backward filtering, the reduction can be implemented in a Gaussian mixture model as in \cite{runnalls2007kullback}, for which it is necessary to transform the expression of $p(y_{t:N}|\mathbf{x }_t)$ to a Gaussian mixture model. The component reduction is applied to this new model, and finally returns to the backward filter form in \eqref{eqn:backward_measureement}. 
The lemma \ref{lem:lemma_gmm_bif} suggests the procedure to switch between the Gaussian sum and the backward filter representation. Once the component reduction has been carried out in \eqref{eqn:backward_measureement}, a reduced version for $p(y_{t:N}|\mathbf{x}_t)$ is obtained as
\begin{equation}\label{eqn:reduced_gmm}
    p(y_{t:N}|\mathbf{x}_t) = \sum _{\ell=1}^{S_{t|t}^{\textnormal{red}}}\gamma _{t|t}^{\ell} \mathcal{N }\left(\mathbf{x}_t;\mathbf{z}_{t|t}^{\ell},\mathbf{U}_{t|t}^{\ell}\right),
\end{equation}
where $S_{t|t}^{\textnormal{red}}$ is the number of components after reduction and $\gamma _{t|t}^{\ell}$, $\mathbf{z}_{t|t}^{\ell}$, and $\mathbf{U}_{t|t}^{\ell}$ are the weights, means, and covariance matrices, respectively, of each Gaussian component.

\vspace{-0.2cm}
\subsection{State estimator based on Gaussian mixture models}
Once the filtering and smoothing PDFs are calculated as indicated in Theorems \ref{thm:filtering}, \ref{thm:smoothing}, and \ref{thm:joint}, the estimator of the system states based on data $y_{1:t}$ and $y_{1:N}$, as well as their respective covariance matrix of the estimation error, are calculated as follows. For the filtered states $\mathbf{x}_t|y_{1:t}$, we have
\begin{align}
    \hat{\mathbf{x}}_{t|t}&=\sum_{k=1}^{M_{t|t}} \delta_{t|t}^{k}\hat{\mathbf{x}}_{t|t}^{k},\\
    \bm{\Sigma}_{t|t}&=\sum_{k=1}^{M_{t|t}} \delta_{t|t}^{k} \left[\bm{\Sigma}_{t|t}^{k} + (\hat{\mathbf{x}}_{t|t}^{k}-\hat{\mathbf{x}}_{t|t})(\hat{\mathbf{x}}_{t|t}^{k}-\hat{\mathbf{x}}_{t|t})^{\top}\right]. 
\end{align}
The smoothed states $\mathbf{x}_t|\mathbf{y}_{1:N}$ and $\mathbf{x}_{t+1},\mathbf{x}_t|\mathbf{y}_{1:N}$ can be computed similarly following standard expectation rules.

\vspace{-0.2cm}
\subsection{Relation with the Kalman filter and smoother}
Consider a linear time-invariant state-space system of the form \eqref{eqn:general_system}-\eqref{eqn:general_system2}, where the noises $\mathbf{w}_{t} \in \mathbb{R}^{n}$ and $v_{t} \in \mathbb{R}$ are Gaussian-distributed stochastic processes, with zero mean and covariance matrices $\mathbf{Q}$ and $R$, respectively. The standard Kalman filter (KF, \cite{kalman1960}) yields the filtered and predicted PDFs $p(\mathbf{x}_t|y_{1:t})=\mathcal{N}\left(\mathbf{x}_t;\hat{\mathbf{x}}_{t|t},\bm{\Sigma}_{t|t} \right)$ and $p(\mathbf{x}_{t+1}|y_{1:t})=\mathcal{N}\left(\mathbf{x}_{t+1};\hat{\mathbf{x}}_{t+1|t},\bm{\Sigma}_{t+1|t} \right)$, respectively, where $p(\mathbf{x}_{1})= \mathcal{N}\left( \mathbf{x}_{1}; \bm{\mu}_1,\mathbf{P}_1\right),$ and 
\begin{align}
    \mathbf{K}_{t}&= \bm{\Sigma}_{t|t-1} \mathbf{C}^\top (R+\mathbf{C}\bm{\Sigma}_{t|t-1}\mathbf{C}^\top)^{-1},\label{gainkalmanfilter}\\
    \hat{\mathbf{x}}_{t|t} &=\hat{\mathbf{x}}_{t|t-1} +\mathbf{K}_t (y_t - \mathbf{C} \hat{\mathbf{x}}_{t|t-1}-\mathbf{D}u_t),\label{xttkalmanfilter}\\
    \bm{\Sigma}_{t|t}&= (\mathbf{I}_{n}-\mathbf{K}_{t}\mathbf{C})\bm{\Sigma}_{t|t-1},\label{sigmatkalmanfilter}\\
    \hat{\mathbf{x}}_{t+1|t} &= \mathbf{A}\hat{\mathbf{x}}_{t|t} +\mathbf{B}\mathbf{u}_t,\label{xtmtkalmanfilter}\\
    \bm{\Sigma}_{t+1|t}&=\mathbf{Q}+\mathbf{A}\bm{\Sigma}_{t|t}\mathbf{A}^{\top},\label{sigmakalmanfilter}
\end{align}
initialized by 
$\hat{\mathbf{x}}_{1|0} = \bm{\mu}_1$ and $\bm{\Sigma}_{1|0} = \mathbf{P}_1$. On the other hand, the Rauch–Tung–Striebel smoother, also known as Kalman smoother (KS, \cite{rauch1965}) is as follows. The smoothing PDF for $t=N$ is the PDF $p(\mathbf{x}_N|y_{1:N})=\mathcal{N}\left(\mathbf{x}_N;\hat{\mathbf{x}}_{N|N},\bm{\Sigma}_{N|N} \right)$, which corresponds to the PDF of the last iteration of the Kalman filter. For $t=N-1,\dots,1$, the smoothing PDFs are given by $p(\mathbf{x}_{t}|y_{1:N})=\mathcal{N}\left(\mathbf{x}_{t};\hat{\mathbf{x}}_{t|N},\bm{\Sigma}_{t|N} \right)$, where 
    \begin{align}
    \label{kalmansmootherk}
        \mathbf{K}_{t|t+1} &=\bm{\Sigma}_{t|t}\mathbf{A}^{\top}(\mathbf{Q}+\mathbf{A}\bm{\Sigma}_{t|t}\mathbf{A}^{\top})^{-1}\\
        \hat{\mathbf{x}}_{t|N} &= \hat{\mathbf{x}}_{t|t} + \mathbf{K}_{t|t+1}\prt{\hat{\mathbf{x}}_{t+1|N}-\hat{\mathbf{x}}_{t+1|t}} \\
        \label{bothsides}
        \bm{\Sigma}_{t|N} &= \bm{\Sigma}_{t|t}+\mathbf{K}_{t|t+1}\prt{\bm{\Sigma}_{t+1|N}-\bm{\Sigma}_{t+1|t}}\mathbf{K}_{t|t+1}^{\top}, 
    \end{align}
    and the corresponding cross-covariance matrix
    \begin{align}
    \label{crosscovariancekalman}
        \mathbf{M}_{t|N}\hspace{-0.08cm} = \hspace{-0.07cm} \bm{\Sigma}_{t|t}\mathbf{K}_{t\hspace{-0.02cm}-\hspace{-0.02cm}1|t}^{\top}\hspace{-0.08cm}+\hspace{-0.07cm}\mathbf{K}_{t|t+1}\hspace{-0.04cm}\prt{\mathbf{M}_{t\hspace{-0.02cm}+\hspace{-0.02cm}1|N}\hspace{-0.08cm}-\hspace{-0.07cm}\mathbf{A}\bm{\Sigma}_{t|t}}\hspace{-0.03cm}\mathbf{K}_{t\hspace{-0.02cm}-\hspace{-0.02cm}1|t}^{\top}
    \end{align}
    is initialized by $\mathbf{M}_{N|N}=(\mathbf{I}_n-\mathbf{K}_N\mathbf{C})\mathbf{A}\bm{\Sigma}_{N-1|N-1}$.

    The cross-covariance matrix in \eqref{crosscovariancekalman} admits a similar expression to the one obtained for the QGSS in \eqref{jointsmoothingsigma}. To see this, note that post-multiplying \eqref{bothsides} by $\mathbf{K}_{t-1|t}^\top$ yields
    \begin{align}
        &\bm{\Sigma}_{t|N}\mathbf{K}_{t-1|t}^\top = \bm{\Sigma}_{t|t}\mathbf{K}_{t-1|t}^\top \notag \\&+\mathbf{K}_{t|t+1}\prt{\bm{\Sigma}_{t+1|N}\mathbf{K}_{t|t+1}^{\top}-\bm{\Sigma}_{t+1|t}\mathbf{K}_{t|t+1}^{\top}}\mathbf{K}_{t-1|t}^\top.
    \end{align}
    From \eqref{kalmansmootherk} and \eqref{sigmakalmanfilter} we find that $\bm{\Sigma}_{t+1|t}\mathbf{K}_{t|t+1}^{\top} = \mathbf{A}\bm{\Sigma}_{t|t}$, by which we conclude that both matrix sequences $\mathbf{M}_{t|N}$ and $\bm{\Sigma}_{t|N}\mathbf{K}_{t-1|t}^\top$ satisfy \eqref{crosscovariancekalman}. The $\bm{\Sigma}_{t|N}\mathbf{K}_{t-1|t}^\top$ sequence is initialized by
    \begin{align}
        \bm{\Sigma}_{N|N}\mathbf{K}_{N-1|N}^\top &= \bm{\Sigma}_{N|N} \bm{\Sigma}_{N|N-1}^{-1} \mathbf{A}\bm{\Sigma}_{N-1|N-1} \\
        &= (\mathbf{I}_n - \mathbf{K}_N \mathbf{C}) \mathbf{A}\bm{\Sigma}_{N-1|N-1},
    \end{align}
    where we have used \eqref{kalmansmootherk} and \eqref{sigmatkalmanfilter} to rewrite $\mathbf{K}_{N-1|N}$ and $\bm{\Sigma}_{N|N}$, respectively. Since this initialization point corresponds exactly with $\mathbf{M}_{N|N}$ previously defined, we conclude that both sequences are equal for all $t$, i.e., $\mathbf{M}_{t|N} = \bm{\Sigma}_{t|N}\mathbf{K}_{t-1|t}^\top$.
\begin{figure*}
	\centering
	\includegraphics[width=\linewidth]{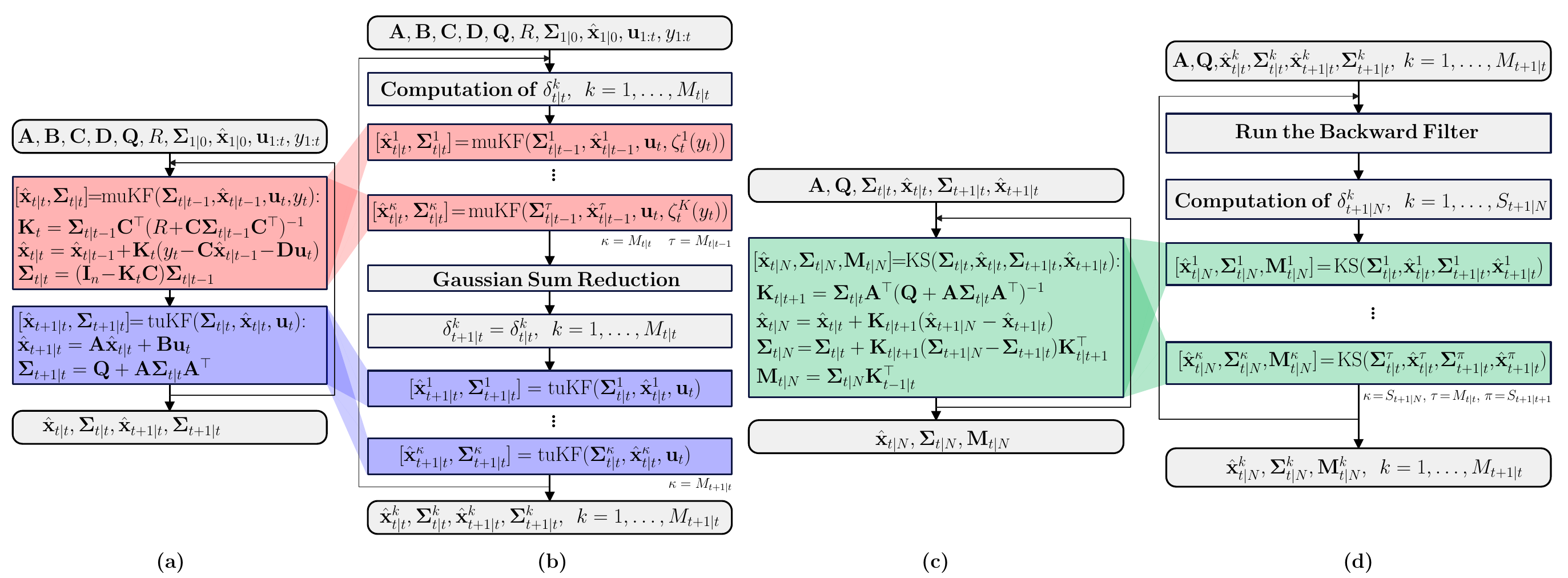}
	\vspace{-6mm}
	\caption{Parallel interpretation of the QGSF and QGSS. (a) pseudocode for Kalman measurement and time update stages, (b) block diagram of the QGSF indicating that during both the measurement and time update stages, several instances of the respective stages of the Kalman filter are employed concurrently, (c) pseudocode for the Kalman smoother, (d) block diagram of the QGSS, which uses the Kalman smoother equations in a parallelized manner.}
    	\vspace{-5mm}
	\label{comparative_with_kalman_filters_full}
\end{figure*}

Both the QGSF filter in Theorem \ref{thm:filtering} and the QGSS smoother given in Theorem \ref{thm:joint} contain iterations that can be related to the Kalman filter and smoother. The equations of the standard Kalman filter, in the measurement update stage, are found in \eqref{gainkalmanfilter}-\eqref{sigmatkalmanfilter}, which correspond to those found in \eqref{eqn:lemma_filtering_3}-\eqref{eqn:lemma_filtering_5} for the same stage in the QGSF filter, where the only difference is the superscript indicating the dependence on the Gaussian mixture model in \eqref{eqn:correction_filtering}. Similarly, the equations of the time update stage in \eqref{xtmtkalmanfilter}-\eqref{sigmakalmanfilter} correspond to the equations of the same stage, \eqref{eqn:lemma_filtering_7}-\eqref{eqn:lemma_filtering_8}, of the QGSF filter. In Figure \ref{comparative_with_kalman_filters_full}(a), the two stages of the Kalman filter are illustrated. Two functions \texttt{muKF()} and \texttt{tuKF()} have been defined in algorithmic form, which implement the corresponding equations of the measurement and time update stages, respectively. In Figure \ref{comparative_with_kalman_filters_full}(b), the same two stages of the QGSF filter are illustrated, and it is observed that each stage of this filter contains a defined number of instances of the \texttt{muKF()} and \texttt{tuKF()} functions, limited by $M_{t+1|t}=M_{t|t}$. This indicates that these instances can be solved in parallel, thereby reducing computation time. Furthermore, a similar behavior can be observed in the Kalman smoother given by the equations \eqref{kalmansmootherk}-\eqref{crosscovariancekalman} and their corresponding equations of the QGSS filter given in \eqref{jointsmoothingx}-\eqref{jointsmoothingsigma}. Figure \ref{comparative_with_kalman_filters_full}(c) illustrates the smoothing stage of the Kalman smoother, where once again the algorithmic function \texttt{KS()} has been defined, which implements the equations \eqref{kalmansmootherk}-\eqref{crosscovariancekalman}. In Figure \ref{comparative_with_kalman_filters_full}(d), it is observed that each stage of the QGSS algorithm requires $S_{t+1|N}$ instances of \texttt{KS()}, indicating that this algorithm can also be parallelized.

In summary, although the QGSF/QGSS are derived to solve the filtering and smoothing problem for general Wiener systems, the recursions that must be implemented are similar to the traditional equations of the KF/KS filter algorithms, and they can be interpreted as a set of Kalman filters and smoothers in parallel. This characteristic has the potential to be implemented in parallel hardware, which can result in a substantial reduction in computation time, making them an attractive alternative for embedded applications.

\vspace{-0.2cm}
\subsection{Relation with Multiple-model Kalman filters}
The Gaussian Sum filter and smoother derived in this work bear similarities to the multiple-model Kalman filter \cite{Hanlon2000,Khodarahmi2023}, where banks of Kalman filters run in parallel with different system parameter values, and their estimates are then combined using a hypothesis testing algorithm to obtain a final estimate of the system states. However, this method is not directly applicable to the problem addressed in this paper, particularly when there are segments of $g(\cdot)$ that evaluate to zero. For instance, consider a Wiener system with a rectifier function; in this scenario, the system can be viewed as a linear system with two possible models having different gains. However, one of these models becomes zero, resulting in the collapse of models into a single one, thus losing the multiple model representation.

\vspace{-0.2cm}
\subsection{Relation with Gaussian Sum - Quadrature Kalman filter}
The QKF and GS-QKF algorithms, derived in \cite{arasaratnam2007discrete}, can also be applied to state estimation of Wiener models, but have significant differences with respect to our proposed approach. The main differences with the proposed QGSF are as follows:
\begin{itemize}
    \item Contrary to this work, the QKF and GS-QKF algorithms only address the filtering problem (that is, not smoothing). 
    \item The QKF provides a Gaussian approximation of the posterior PDFs of the states, which, due to the system’s nonlinearities, are inherently non-Gaussian. In contrast, the proposed QGSF approximates the non-Gaussian measurement likelihood using a Gaussian mixture model, offering a non-Gaussian approximation that more accurately reflects the true likelihood.
    \item The QKF uses statistical linear regression to linearize the nonlinear system at each time step, yielding approximations of the nonlinear functions $\mathbf{x}_{t+1} = f(\mathbf{x}_{t}, \mathbf{u}_{t})$ and $\mathbf{z}_t = g(\mathbf{x}_t, \mathbf{u}_t)$ through hyperplanes that are linear in $\mathbf{x}_{t}$ and $\mathbf{u}_{t}$. This linearization introduces an error comparable to that of the EKF, as both methods rely on first-order approximations. In contrast, the proposed QGSF avoids approximating the nonlinear output function. Instead, it applies the random variable transformation theorem to derive the conditional probability function $p(y_t | \mathbf{x}_t)$, expressed through an integral equation. This equation covers three scenarios when the segments of $g(\cdot)$ are: i) constant; ii) strictly monotonic; or iii) a combination of both. The integral is solved numerically using the Gauss-Legendre quadrature rule, resulting in a representation of $p(y_t | \mathbf{x}_t)$ as a Gaussian mixture. By propagating this representation through Bayesian filtering equations, a Gaussian sum filter naturally emerges.   
    \item The GS-QKF models external non-Gaussianity arising from non-Gaussian noises, which are approximated via Gaussian mixtures. In contrast, the proposed QGSF models intrinsic non-Gaussianity in the system states (which GS-QKF assumes to be Gaussian due to the QKF framework) caused by output nonlinearities that render $p(y_t|\mathbf{x}_t)$ non-Gaussian. This implies that, although the GS-QKF and the QGSF share a similar structure (Gaussian sum), they are fundamentally distinct filters since they address non-Gaussianities from different sources.
\end{itemize}
\section{Simulation examples}\label{sec:simulations}
This section presents numerical examples to analyze the performance of our proposed filtering and smoothing algorithms. We compare our methods with established techniques such as the extended Kalman filter and smoother (EKF/EKS), the unscented Kalman filter and smoother (UKF/UKS), the quadrature Kalman filter and smoother (QKF/QKS), and the particle filter and smoother (PF/PS). Note that for the setup in \eqref{eqn:general_system}-\eqref{eqn:general_system4}, the Gaussian Sum-Quadrature Kalman Filter in \cite{arasaratnam2007discrete} simplifies to the QKF. While simulations provide access to noiseless system states for comparison with all state estimations, the true filtering and smoothing of probability density functions when state and output noise is incorporated remain unknown. To assess the resulting PDFs, we adopt the particle filter with $2\cdot 10^{4}$ particles as our baseline or ground truth (GT), utilizing $p(y_t|\mathbf{x}_t)$ calculated via equation \eqref{eqn:integral_lemma_pytxt} and the Monte Carlo integration method. The EKF/EKS, UKF/UKS, and QKF/QKS filters are implemented by rewriting the system in \eqref{eqn:general_system}-\eqref{eqn:general_system4} as an extended state-space system, with new state and input vectors $\tilde{\mathbf{x}}_{t}=[ \begin{matrix} \mathbf{x}_{t}^\top, & r_{t}\end{matrix}]^\top$ and $\tilde{\mathbf{u}}_t=[\begin{matrix} \mathbf{u}_t^\top, &\mathbf{u}_{t+1}^\top\end{matrix}]^\top$, where $\mathbf{u}_{N+1}=\mathbf{0}$, and
\begin{equation}
	\tilde{\mathbf{A}}\hspace{-0.05cm}=\hspace{-0.1cm}\left[\begin{matrix}\mathbf{A}&\hspace{-0.1cm}\mathbf{0}\\\mathbf{C}\mathbf{A}& \hspace{-0.1cm}0\end{matrix}\right]\hspace{-0.05cm}, \tilde{\mathbf{B}}\hspace{-0.05cm}=\hspace{-0.1cm}\left[\begin{matrix}\mathbf{B}&\hspace{-0.1cm}\mathbf{0}\\\mathbf{C}\mathbf{B}&\hspace{-0.1cm}\mathbf{D}\end{matrix}\right]\hspace{-0.05cm},\tilde{\mathbf{C}}\hspace{-0.05cm}=\hspace{-0.1cm}[\begin{matrix}\mathbf{0}, &\hspace{-0.1cm}1\end{matrix}], \tilde{\mathbf{D}}=\mathbf{0},\label{eqn:exteded_system_qoutput}
\end{equation}
which defines the following state-space model
\begin{align}
    \tilde{\mathbf{x}}_{t+1}&=\tilde{\mathbf{A}}\tilde{\mathbf{x}}_{t}+ \tilde{\mathbf{B}}\tilde{\mathbf{u}}_{t}+\tilde{\mathbf{w}}_{t},\\
    y_{t}&= g(\tilde{\mathbf{C}}\tilde{\mathbf{x}}_t)+\eta_t,
\end{align}
where $\tilde{\mathbf{w}}_t=[\begin{matrix} \mathbf{w}_t^\top, \mathbf{C}\mathbf{w}_t+v_{t+1}\end{matrix}]^\top$ is a Gaussian-distributed noise of zero mean and covariance matrix $\tilde{\mathbf{Q}}$ given by
\begin{equation}
	\tilde{\mathbf{Q}}=\left[\begin{matrix} \mathbf{Q}&\mathbf{Q}\mathbf{C}^\top \\ \mathbf{C}\mathbf{Q}^\top &\mathbf{C}\mathbf{Q}\mathbf{C}^\top +R\end{matrix}\right].
\end{equation}
A total of 100 Monte Carlo runs are performed, where we use 10 points of the Gauss-Legendre quadrature to approximate the PDF $p(y_t|\mathbf{x}_t)$ for the QGSF/QGSS implementation, i.e., $L=10$. We use $500$ particles for the implementation of the PF/PS algorithms, where PF was implemented using the systematic resampling technique \cite{Li2015b}, and $6$ sigma points are used for the QKF/QKS algorithms. We simulate $N=100$ input and output data. The experiments are carried out on a computer with the following specifications: Intel(R) Core(TM) i5-8300H CPU @ 2.30GHz 2.30 GHz processor, 8.00 GB RAM, Windows 11 operating system running MATLAB 2023b.

\subsection{\textbf{Example 1:} First-order system}
In the following example, we consider a first-order Wiener state-space system, where the linear block is described by $A=0.9$, $B=2.5$, $C=1.1$ and $D=1.5$, and with output
\begin{align}
    z_{t}&=r_{t}^2,\\
    y_{t}&= z_{t}+\eta_t,
\end{align}
where $w_{t}\sim \mathcal{N}\left(w_{t};0,1\right)$, $v_{t}\sim\mathcal{N}\left(v_{t};0,0.5\right)$, and $\eta_{t}\sim\mathcal{N}\left(\eta_{t};0,0.5\right)$. Additionally, we consider the input signal $u_t$ sampled from $\mathcal{N}\left(0,2\right)$, and $p(x_1)=\mathcal{N}(x_1;1,1)$. Notice that, for the quadratic function $z_t = r_t^2$ we have $\gamma_{1,2}(z_t) = \pm \sqrt{z_t}$ and $\phi_{1,2}(z_t) = 0.5/\sqrt{z_t}$. In Fig. \ref{pdf_filtering_1states_gsquare_state1}, we display the filtered PDFs for various time instances. 

Results obtained using the EKF, QKF and UKF algorithms diverge significantly from the ground truth PDFs. In sharp contrast, the proposed algorithm yields highly precise fits to the GT PDFs. Moreover, we observe that the PF with fewer particles produces less accurate fits to each GT PDF. Increasing the number of particles improves results, albeit at a high computational cost. Fig. \ref{cloud_filtering_1states_gsquare} illustrates the outcome of the Monte Carlo analysis with 100 experiments for estimating the state $\hat{x}_{t|t}=\mathbb{E}\left\lbrace x_t|y_{1:t}\right\rbrace$. The shaded region delineates the area encompassing all state estimations, while the red line represents the mean of the 100 estimates. The variability obtained using EKF, QKF and UKF is considerably higher compared to the variability obtained using QGSF and PF.
\begin{figure}
	\centering
	\includegraphics[width=\linewidth]{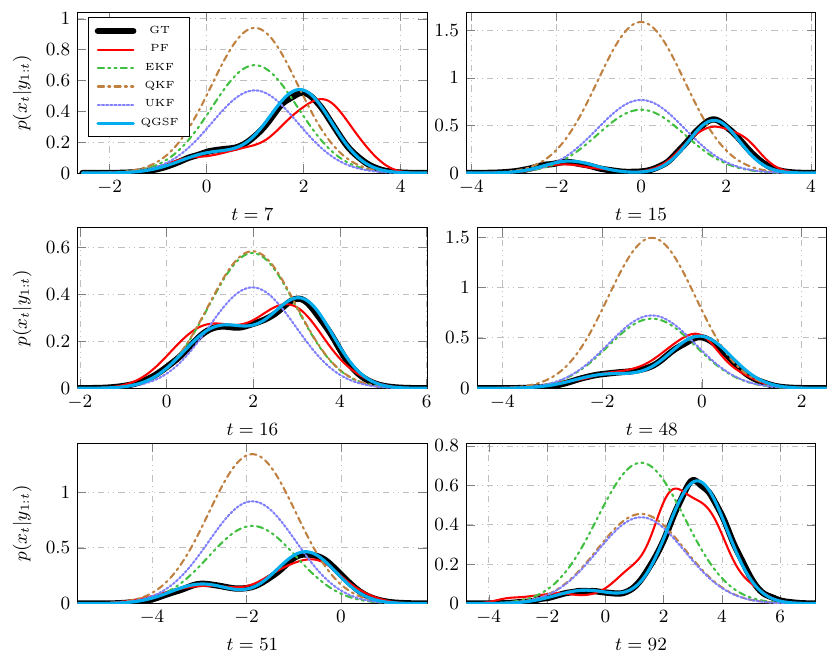}
	\vspace{-8mm}
	\caption{Filtered PDF \(p(x_t|y_{1:t})\) for selected time instances. We considered 10 Gaussian components for the QGSF, 6 sigma points for the QKF, 500 particles for the PF, and $\alpha=0.001$, $\beta=1$. $\kappa=0.001$ for the UKF.}
	\vspace{-3mm}
	\label{pdf_filtering_1states_gsquare_state1}
\end{figure}
\begin{figure}[!hb]
	\centering
	\includegraphics[width=1\linewidth]{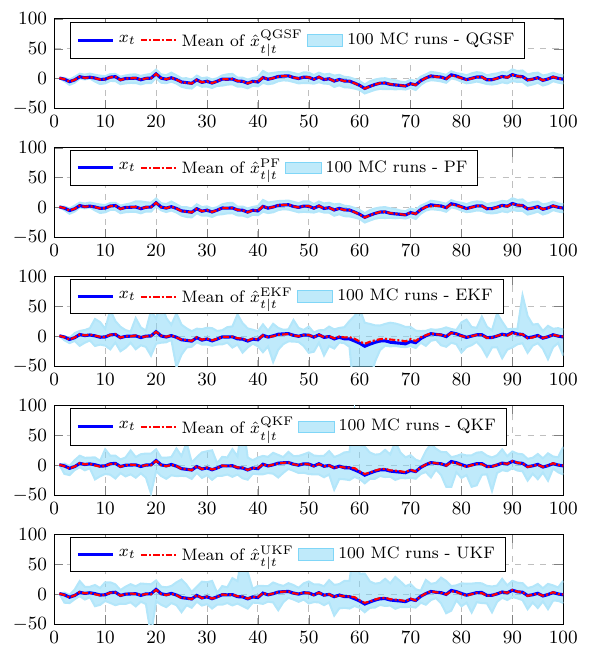}
	\vspace{-6mm}
	\caption{State estimation of the system $\hat{x}_{t|t}=\mathbb{E}\left\lbrace x_t|y_{1:t}\right\rbrace$ for 100 Monte Carlo runs. The shaded area depicts the region encompassing all state sequence estimates of the system. The blue line represents the true state, while the red line represents the mean of the 100 Monte Carlo experiments.}
	\label{cloud_filtering_1states_gsquare}
\end{figure}

In Fig. \ref{pdf_smoothing_1states_gsquare_state1}, we show the smoothed PDFs $p(x_t|y_{1:N})$ at different time instants. Results from EKS, QKS and UKS algorithms, just like their filtering counterparts, are quite different from the true PDFs. On the other hand, our proposed algorithm gives a close match to the true PDFs, followed closely by the PS algorithm. Fig. \ref{cloud_filtering_1states_gsquare} presents the results of Monte Carlo analysis with 100 experiments to estimate the state $\hat{x}_{t|N}=\mathbb{E}\left\lbrace x_t|y_{1:N}\right\rbrace$. The shaded area covers all state estimates, and the red line shows their average. Similar to the filtering example, the variability seen with the EKS, QKS and UKS estimates is much higher than with QGSS and PS.
\begin{figure}
	\centering
	\includegraphics[width=\linewidth]{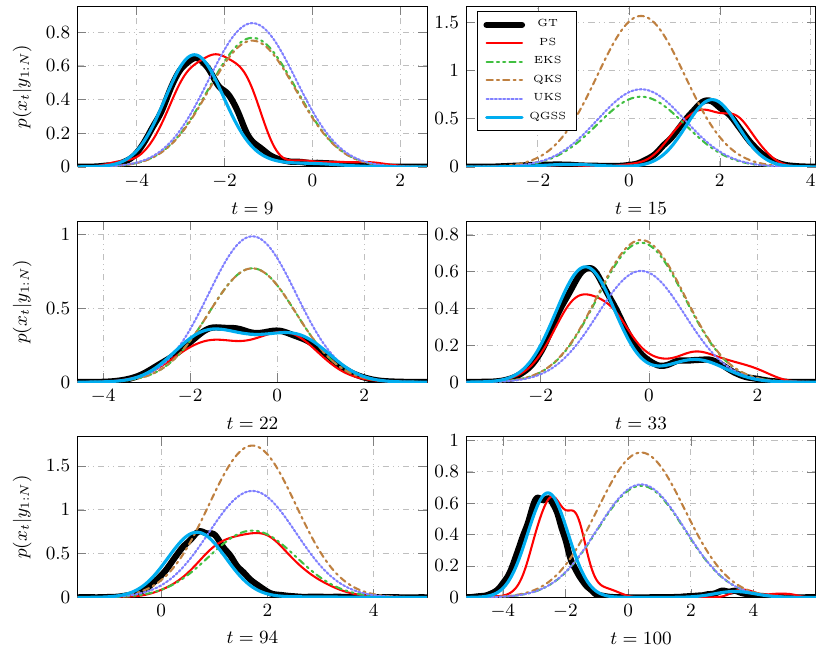}
	\vspace{-8mm}
	\caption{Smoothed PDF \(p(x_t|y_{1:N})\) for selected time instances. We considered 10 Gaussian components for the QGSS filter, 6 sigma points for the QKS filter, and 500 particles for the PS filter.}
    \label{pdf_smoothing_1states_gsquare_state1}
\end{figure}

\begin{figure}
	\centering
	\includegraphics[width=1\linewidth]{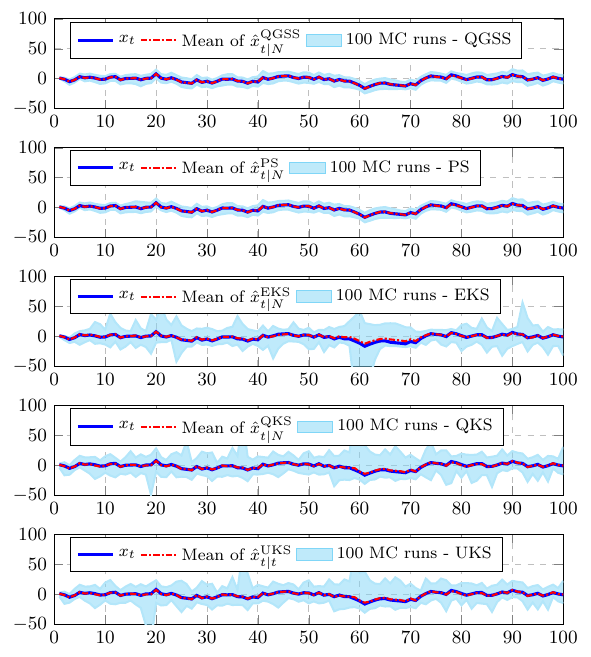}
	\vspace{-8mm}
	\caption{State estimation of the system $\hat{x}_{t|t}=\mathbb{E}\left\lbrace x_t|y_{1:N}\right\rbrace$ for 100 Monte Carlo        runs. The shaded area depicts the region encompassing all state sequence estimations of the system. The blue line represents the true state, while the red line represents the mean of the 100 Monte Carlo runs.}
	\label{cloud_smoothing_1states_gsquare}
\end{figure}

\subsection{\textbf{Example 2}: Second-order system}
In the following example, we explore a second-order Wiener state-space system where the parameter of \eqref{eqn:general_system}-\eqref{eqn:general_system2}:
\begin{align}
    \mathbf{A}\!=\!\left[\begin{matrix}0.9&0.1\\-0.1&0.7\end{matrix}\right]\!, \mathbf{B}\!=\!\left[ \begin{matrix}1.5\\2.5\end{matrix}\right]\!, \mathbf{C}\!=\!\left[ \begin{matrix}1.1&\!\!\!0.3\end{matrix}\right]\!, \mathbf{D}\!=\! 1.2,
\end{align}
and the output in \eqref{eqn:general_system3}-\eqref{eqn:general_system4} are defined as follows:
\begin{align}
    z_{t}&=\left\lbrace\begin{matrix} |r_{t}| & \textrm{if}& r_t\leq 0,\\ r_{t}^2 & \textrm{if}& r_t> 0,\end{matrix} \right.\\
    y_{t}&= z_{t}+\eta_t,
\end{align}
where $\mathbf{w}_{t}\sim \mathcal{N}\left(\mathbf{w}_{t};[0, 0]^\top,\mathbf{I}_2\right)$, $v_{t}\sim\mathcal{N}\left(v_{t};0,0.5\right)$, and $\eta_{t}\sim\mathcal{N}\left(\eta_{t};0,0.5\right)$. Additionally, we sample the input $u_t$ from $\mathcal{N}\left(0,2\right)$, and $p(\mathbf{x}_1)=\mathcal{N}(\mathbf{x}_1;[1, 1]^\top,\mathbf{I}_2)$. Notice that, $\gamma_{1}(z_t) = -z_t$, $\gamma_{2}(z_t) = \sqrt{z_t}$, $\phi_1(z_t) = 1$, and $\phi_2(z_t) =0.5/\sqrt{z_t}$. In Fig. \ref{data_error_filtering_2states_gpiece}, we illustrate box plots of the estimation error between the real state sequence and the sequence estimated by all the filtering and smoothing techniques considered in this comparison study. We observe that the estimation error of QGSF/QGSS is lower than PF/PS, EKF/EKS, and QKF/QKS, which exhibit, in some cases, high estimation errors.
\begin{figure}
	\centering
	\includegraphics[width=1\linewidth]{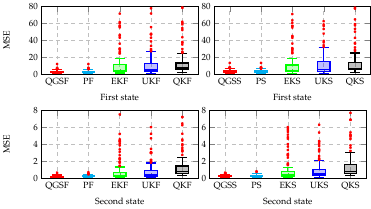}
	\vspace{-8mm}
	\caption{Estimation error between the real state sequence and the sequence estimated by QGSF/QGSS, PF/PS, EKF/EKS, and QKF/QKS.}
    	\vspace{-3mm}
	\label{data_error_filtering_2states_gpiece}
\end{figure}

\subsection{\textbf{Example 3}: Fourth-order, two-input system}
In this third case, we consider a fourth-order Wiener system in state-space, with two inputs and linear block described by

\begin{align}
    \mathbf{A}&\!=\!\left[\setlength\arraycolsep{2pt}\begin{matrix}0.52&0.4&0&0\\-0.4&0.52&0&0\\0&0&0.4&0.6\\0&0&0.06&-0.4\end{matrix}\right]\!, \quad \mathbf{B}\!=\!\left[ \setlength\arraycolsep{2pt}\begin{matrix}0.56&-0.58\\1.1&0.5\\5.3&-0.8\\-1.9&-0.45\end{matrix}\right]\!,\\
    \mathbf{C}&\!=\!\left[ \begin{matrix}0.5,&0.1,&0.5,&0.7\end{matrix}\right]\!,  \quad \mathbf{D}\!=\!\left[\begin{matrix}0,&0\end{matrix}\right]\!.
\end{align}
The outputs in \eqref{eqn:general_system3}-\eqref{eqn:general_system4} are defined as follows:
\begin{align}
    z_{t}&=\left\lbrace\begin{matrix}
		r_t+3 & \textrm{ if } & r_t<-3,\\
		0 & \textrm{ if } & -3\leq r_t< 3,\\
		r_t-3 & \textrm{ if } & r_t\geq 3,\\
	\end{matrix}\right.,\\
		y_{t}&= z_{t}+\eta_t,
\end{align}
where $\mathbf{w}_{t}\sim \mathcal{N}\left(\mathbf{w}_{t};[0,0,0,0]^\top,2\mathbf{I}_4\right)$, $v_{t}\sim \mathcal{N} \left(v_{t};0,1\right)$, and $\eta_{t}\sim\mathcal{N}\left(\eta_{t};0,0.5\right)$. We consider that the input signal is sampled from $\mathbf{u}_t \sim \mathcal{N}\left(\mathbf{u}_t;0,2\right)$, and the initial state is obtained from $p(\mathbf{x}_1)=\mathcal{N}(\mathbf{x}_1;[1,1,1,1]^\top,\mathbf{I}_4)$. The nonlinearity considered in this example corresponds to a deadzone function. Notice that, $\gamma_{1}(z_t) = z_t-3$, $\gamma_{2}(z_t) = z_t+3$, and $\phi_{1,2}(z_t) = 1$.

In Fig. \ref{data_error_filtering_4states_gdeadzone} we present boxplots of the mean square error of the states for each algorithm under study. Unlike the results obtained with the EKF/EKS and QKF/QKS approaches, very precise estimates are obtained using the proposed QGSF/QGSS algorithm, as well as the PF/PS algorithm. 

\begin{figure}
	\centering
	\includegraphics[width=1\linewidth]{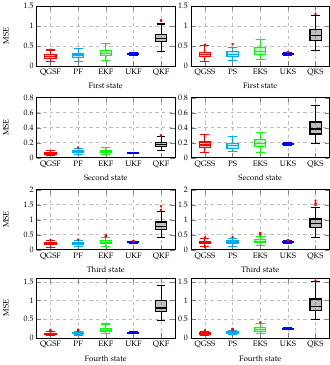}
	\vspace{-7mm}
	\caption{Example 3. Estimation error between the real state sequence and the sequence estimated by QGSF/QGSS, PF/PS, EKF/EKS, and QKF/QKS.}
	\label{data_error_filtering_4states_gdeadzone}
    	\vspace{-2mm}
\end{figure}

\subsection{Computational aspects}
We have also measured the computation time of all the algorithms under study, for each example. In Table \ref{tab:my-table1} we present the average time that one algorithm takes to execute the estimations for one Monte Carlo run. We observe that the proposed QGSF and QGSS algorithms are faster than PF/PS, while their performance is similar. Furthermore, contrary to the results of the QKF/QKS algorithms, their computation time does not grow significantly if the order of the system increases.

\begin{table}[]
	\renewcommand{\arraystretch}{1.3}
	\caption{Average time (seconds) of the Monte Carlo runs for Examples 1, 2, and 3.}
	\label{tab:my-table1}
	\centering
	\begin{tabular*}{\columnwidth}{@{\extracolsep{\fill}}|c|c|c|c|}    
		\hline
		\textbf{Algorithm} & \textbf{Example 1} & \textbf{Example 2} & \textbf{Example 3} \\ \hline
        QGSF/QGSS       & 0.2741/0.7230    & 0.3998/1.0577   & 0.7160/2.5557  \\ \hline
        PF/PS         & 0.7897/1.1257    & 2.7620/4.2627   & 3.7784/6.1774  \\ \hline
		EKF/EKS       & 0.0727/0.1424    & 0.0884/0.1686   & 0.1181/0.2195  \\ \hline
		QKF/QKS     & 0.0444/0.0475    & 0.1629/0.1660   & 4.3888/4.4683  \\ \hline
        UKF/UKS     & 0.0114/0.0024    & 0.0166/0.0027   & 0.0221/0.0027  \\ \hline
	\end{tabular*}
    	\vspace{-3mm}
\end{table}

\section{Conclusions}\label{sec:conclusions}
We presented a filtering and smoothing method for Wiener systems that accommodates a broad range of nonlinearities, including saturation, dead zones, rectification, and polynomial nonlinearities. To derive the filtering and smoothing algorithms, we use an explicit Gaussian quadrature approximation of the conditional PDF of the nonlinear output. This approach yields closed-form recursive formulas for both the filtering and smoothing distributions of the states, including the joint distribution at two consecutive instants. Extensive simulations show that the QGSF and QGSS algorithms produce more accurate state estimates than the extended Kalman filter, the quadrature Kalman filter, the unscented Kalman filter, and the particle filter with few particles. Compared to the particle filter and smoother, our approach requires a low number of Gaussian components in relation to the number of particles needed to produce a similar result. The proposed algorithms applied to the simulation study in Section VI reduce the computational time by a factor of 3 to 7 for the filtering step, and by a factor of 1.5 to 4 for the smoothing step. Lastly, the parallelizable nature of the proposed method makes it highly suitable for hardware implementation in embedded applications.

\bibliographystyle{plain} 
\bibliography{bibliography.bib}

\begin{thebibliography}{10}

\bibitem{alcantara2024li}
I.~E. Al{\'c}antara, A.~L. Cede{\~n}o, C.~A. Silva, and J.~C. Ag{\"u}ero.
\newblock {Li-Ion Battery SOC Estimation via Gaussian Sum Filtering and
  Piecewise Linear Approximation}.
\newblock In {\em 2024 IEEE ANDESCON}, pages 1--6, 2024.

\bibitem{anderson1979}
B.~D.~O. Anderson and J.~B. Moore.
\newblock {\em {Optimal Filtering}}.
\newblock Prentice-Hall, 1979.

\bibitem{arasaratnam2007discrete}
I.~Arasaratnam, S.~Haykin, and R.~J. Elliott.
\newblock {Discrete-time nonlinear filtering algorithms using Gauss--Hermite
  quadrature}.
\newblock {\em Proceedings of the IEEE}, 95(5):953--977, 2007.

\bibitem{Billings2013}
S.~A. Billings.
\newblock {\em {Nonlinear system identification: NARMAX methods in the time,
  frequency, and spatio-temporal domains}}.
\newblock John Wiley \& Sons, 2013.

\bibitem{Cedeno2021b}
A.~L. Cede{\~{n}}o, R.~Albornoz, R.~Carvajal, B.~I. Godoy, and J.~C.
  Ag{\"{u}}ero.
\newblock {A Two-Filter Approach for State Estimation Utilizing Quantized
  Output Data}.
\newblock {\em Sensors}, 21(22):7675, 2021.

\bibitem{cedeno2021}
A.~L. Cede{\~{n}}o, R.~Albornoz, R.~Carvajal, B.~I. Godoy, and J.~C.
  Ag{\"{u}}ero.
\newblock {On Filtering Methods for State-Space Systems having Binary Output
  Measurements}.
\newblock In {\em 19th IFAC Symposium on System Identification}, 2021.

\bibitem{Cedeno2024Id}
A.~L. Cede{\~{n}}o, R.~A. Gonz{\'{a}}lez, R.~Carvajal, and J.~C. Ag{\"{u}}ero.
\newblock {Identification of Wiener state–space models utilizing Gaussian sum
  smoothing}.
\newblock {\em Automatica}, 166:111707, 2024.

\bibitem{Cedeno2023}
A.~L. Cede{\~{n}}o, R.~A. Gonz{\'{a}}lez, B.~I. Godoy, R.~Carvajal, and J.~C.
  Ag{\"{u}}ero.
\newblock {On Filtering and Smoothing Algorithms for Linear State-Space Models
  Having Quantized Output Data}.
\newblock {\em Mathematics}, 11(6), 2023.

\bibitem{cessenat2018}
M~Cessenat.
\newblock {\em {Mathematical Modelling of Physical Systems}}.
\newblock Springer, 2018.

\bibitem{Closas2012}
P.~Closas, C.~Fernandez-Prades, and J.~Vila-Valls.
\newblock {Multiple Quadrature Kalman Filtering}.
\newblock {\em IEEE Transactions on Signal Processing}, 60(12):6125--6137,
  2012.

\bibitem{cohen2011numerical}
H.~Cohen.
\newblock {\em {Numerical Approximation Methods}}.
\newblock Springer, 2011.

\bibitem{doucet2000sequential}
A.~Doucet, S.~Godsill, and C.~Andrieu.
\newblock {On sequential Monte Carlo sampling methods for Bayesian filtering}.
\newblock {\em Statistics and Computing}, 10(3):197--208, 2000.

\bibitem{Elfring2021}
Jos Elfring, Elena Torta, and Ren{\'{e}} van~de Molengraft.
\newblock {Particle Filters: A Hands-On Tutorial}, 2021.

\bibitem{Finke2017}
A~Finke and S~S Singh.
\newblock {Approximate Smoothing and Parameter Estimation in High-Dimensional
  State-Space Models}.
\newblock {\em IEEE Transactions on Signal Processing}, 65(22):5982--5994,
  2017.

\bibitem{Gelb1974}
A.~Gelb, J.~Kasper, R.~Nash, C.~Price, and A.~Sutherland.
\newblock {\em {Applied Optimal Estimation}}.
\newblock MIT Press, 1974.

\bibitem{gibson2005robust}
S.~Gibson and B.~Ninness.
\newblock Robust maximum-likelihood estimation of multivariable dynamic
  systems.
\newblock {\em Automatica}, 41(10):1667--1682, 2005.

\bibitem{gonzalez2023algorithm}
R.~A. Gonz{\'a}lez, A.~L. Cede{\~n}o, M.~Coronel, J.~C. Ag{\"u}ero, and C.~R.
  Rojas.
\newblock An {EM} algorithm for {L}ebesgue-sampled state-space continuous-time
  system identification.
\newblock In {\em IFAC World Congress 2023 (IFAC WC 2023), Yokohama, Japan},
  2023.

\bibitem{gordon1993novel}
N.~J. Gordon, D.~J Salmond, and A.~F.~M. Smith.
\newblock {Novel approach to nonlinear/non-Gaussian Bayesian state estimation}.
\newblock {\em IEE Proceedings F - Radar and Signal Processing},
  140(2):107--113, 1993.

\bibitem{Hanlon2000}
P.~D. Hanlon and P.~S. Maybeck.
\newblock {Multiple-model adaptive estimation using a residual correlation
  Kalman filter bank}.
\newblock {\em IEEE Transactions on Aerospace and Electronic Systems},
  36(2):393--406, 2000.

\bibitem{horn2012}
R.~A. Horn and C.~R. Johnson.
\newblock {\em Matrix Analysis, \textnormal{2nd Edition}}.
\newblock Cambridge University Press, 2012.

\bibitem{Ji2021}
X.~Ji, Z.~Yin, Y.~Zhang, M.~Wang, X.~Zhang, C.~Zhang, and D.~Wang.
\newblock {Real-time robust forecasting-aided state estimation of power system
  based on data-driven models}.
\newblock {\em International Journal of Electrical Power \& Energy Systems},
  125:106412, 2021.

\bibitem{Julier1997}
S.~J. Julier and J.~K. Uhlmann.
\newblock {New extension of the Kalman filter to nonlinear systems}.
\newblock In {\em Signal processing, sensor fusion, and target recognition VI},
  volume 3068, pages 182--193. International Society for Optics and Photonics,
  1997.

\bibitem{kahaner1989numerical}
D.~Kahaner, C.~Moler, and S.~Nash.
\newblock {\em Numerical Methods and Software}.
\newblock Prentice-Hall, 1989.

\bibitem{kalman1960}
R.~E. Kalman.
\newblock A new approach to linear filtering and prediction problems.
\newblock {\em Transactions of the ASME--Journal of Basic Engineering},
  82(Series D):35--45, 1960.

\bibitem{Khodarahmi2023}
M.~Khodarahmi and V.~Maihami.
\newblock {A Review on Kalman Filter Models}.
\newblock {\em Archives of Comp. Methods in Engineering}, 30(1):727--747, 2023.

\bibitem{kitagawa1994two}
G.~Kitagawa.
\newblock The two-filter formula for smoothing and an implementation of the
  {G}aussian-sum smoother.
\newblock {\em Annals of the Institute of Statistical Mathematics},
  46(4):605--623, 1994.

\bibitem{Kitagawa1996b}
G.~Kitagawa.
\newblock {Monte Carlo filter and smoother for non-Gaussian nonlinear state
  space models}.
\newblock {\em Journal of computational and graphical statistics}, 5(1):1--25,
  1996.

\bibitem{kitagawa1996smoothness}
G.~Kitagawa and W.~Gersch.
\newblock {\em Smoothness priors analysis of time series}, volume 116.
\newblock Springer Science \& Business Media, 1996.

\bibitem{Li2015b}
T.~Li, M.~Bolic, and P.~M. Djuric.
\newblock {Resampling Methods for Particle Filtering: Classification,
  implementation, and strategies}.
\newblock {\em IEEE Signal Processing Magazine}, 32(3):70--86, 2015.

\bibitem{ljung1999}
L.~Ljung.
\newblock {\em {System Identification: Theory for the User}}.
\newblock Prentice Hall, 1999.

\bibitem{Nuculaj2024}
Luke Nuculaj, Adam Kidwell, Connor Homayouni, Alex Fillmore, Darrin Hanna, and
  Jun Chen.
\newblock {Optimal FPGA Implementation of Dense Extended Kalman Filter for
  Simultaneous Cell State Estimation}.
\newblock In {\em 2024 IEEE 67th International Midwest Symposium on Circuits
  and Systems (MWSCAS)}, pages 623--627. IEEE, 2024.

\bibitem{papoulis1989}
A.~Papoulis and S.~U. Pillai.
\newblock {\em {Probability, Random Variables, and Stochastic Processes}}.
\newblock McGraw Hill, 1989.

\bibitem{pintelon2012system}
R.~Pintelon and J.~Schoukens.
\newblock {\em System {I}dentification: {A} {F}requency {D}omain {A}pproach,
  \textnormal{2nd Edition}}.
\newblock John Wiley \& Sons, 2012.

\bibitem{rauch1965}
H.~E. Rauch, F.~Tung, and C.~T. Striebel.
\newblock {Maximum likelihood estimates of linear dynamic systems}.
\newblock {\em AIAA Journal}, 3(8):1445--1450, 1965.

\bibitem{runnalls2007kullback}
A.~R. Runnalls.
\newblock Kullback-{L}eibler approach to {G}aussian mixture reduction.
\newblock {\em IEEE Transactions on Aerospace and Electronic Systems},
  43(3):989--999, 2007.

\bibitem{sarkka2013bayesian}
S.~S{\"a}rkk{\"a}.
\newblock {\em {Bayesian Filtering and Smoothing}}.
\newblock Cambridge University Press, 2013.

\bibitem{schon2011}
T.~B. Sch{\"{o}}n, A.~Wills, and B.~Ninness.
\newblock {System identification of nonlinear state-space models}.
\newblock {\em Automatica}, 47(1):39--49, 2011.

\bibitem{Schoukens2017}
M.~Schoukens and K.~Tiels.
\newblock {Identification of block-oriented nonlinear systems starting from
  linear approximations: A survey}.
\newblock {\em Automatica}, 85:272--292, 2017.

\bibitem{Thaker2024}
J~Thaker and R.~Höller.
\newblock Hybrid model for intra-day probabilistic pv power forecast.
\newblock {\em Renewable Energy}, 232:121057, 2024.

\bibitem{Valipour2022}
M.~Valipour and L.~A. Ricardez-Sandoval.
\newblock {Extended moving horizon estimation for chemical processes under
  non-Gaussian noises}.
\newblock {\em AIChE Journal}, 68(3):\textnormal{Article} e17545, 2022.

\bibitem{Vitetta2019}
G.~M. Vitetta, E.~Sirignano, P.~D. Viesti, F.~Montorsi, and M.~Sola.
\newblock {Marginalized Particle Filtering and Related Filtering Techniques as
  Message Passing}.
\newblock {\em IEEE Trans. on Sig. Proc.}, 67(6):1522--1536, 2019.

\bibitem{Wan2000}
E.~A. Wan and R.~{Van Der Merwe}.
\newblock {The unscented Kalman filter for nonlinear estimation}.
\newblock In {\em Adaptive Systems for Sig. Proc., Communications, and Control
  Symposium}, pages 153--158. IEEE, 2000.

\bibitem{wang2024improved}
S.~Wang, H.~Gao, P.~Takyi-Aninakwa, J.~M. Guerrero, C.~Fernandez, and Q.~Huang.
\newblock Improved multiple feature-electrochemical thermal coupling modeling
  of lithium-ion batteries at low-temperature with real-time coefficient
  correction.
\newblock {\em Protection and Control of Modern Power Systems}, 9(3):157--173,
  2024.

\bibitem{Wills2013}
A.~Wills, T.~B. Sch{\"{o}}n, L.~Ljung, and B.~Ninness.
\newblock {Identification of Hammerstein–Wiener models}.
\newblock {\em Automatica}, 49(1):70--81, 2013.

\bibitem{wills2023numerically}
A.~G. Wills, J.~Hendriks, C.~Renton, and B.~Ninness.
\newblock {A Numerically Robust Bayesian Filtering Algorithm for Gaussian
  Mixture Models}.
\newblock {\em IFAC-PapersOnLine}, 56(1):67--72, 2023.

\bibitem{Xiao2022}
R.~Xiao, G.~Wang, L.~Fu, F.~Ma, C.~Li, R.~Huang, and X.~Hao.
\newblock {An adaptive hybrid dynamic state estimation method of the
  medium-voltage DC integrated power system with pulse load}.
\newblock {\em International Journal of Electrical Power \& Energy Systems},
  134:\textnormal{Article} 107441, 2022.

\bibitem{Zhong2024}
C.~Zhong, M.~Darbandi, M.~Nassr, A.~Latifian, M.~Hosseinzadeh, and N.~{Jafari
  Navimipour}.
\newblock A new cloud-based method for composition of healthcare services using
  deep reinforcement learning and kalman filtering.
\newblock {\em Computers in Biology and Medicine}, 172:108152, 2024.

\bibitem{Zhu2002}
Yucai Zhu.
\newblock {Estimation of an N–L–N Hammerstein–Wiener model}.
\newblock {\em Automatica}, 38(9):1607--1614, 2002.

\end{thebibliography}

\appendices

\section{Proof of Lemma \ref{lem:pz}}
\label{appendix:lem:pz}
\begin{proof}
    For any $z\in \mathbb{R}-\{z_1^*,\dots,z_{M_2}^*\}$, the theorem of transformation of random variables \cite[p. 130]{papoulis1989} gives
\begin{align}
\label{exp1}
    p(z)\hspace{-0.1cm} =\hspace{-0.1cm} \sum_{i=1}^{K_z} \left|\frac{\textnormal{d}\gamma_i(z)}{\textnormal{d}z}\right|\hspace{-0.03cm}\mathcal{N}(\gamma_i(z);\mu,\hspace{-0.02cm}R)\hspace{-0.1cm} =\hspace{-0.1cm} \sum_{i=1}^{M_1}\hspace{-0.07cm} \phi_i(z)\mathcal{N}(\tilde{\gamma}_i(z);\mu,R),
\end{align}
where we have used the definitions in \eqref{gammatildelemma}. On the other hand, when $z\in \{z_1^*,\dots,z_{M_2}^*\}$, $p(z)$ is a GPDF of the form
\begin{equation}
\label{exp2}
    p(z) = \sum_{j=1}^{M_2} \mathbb{P}(z=z_j^*) \delta(z-z_j^*).
\end{equation}
Since the Dirac delta functions only have support in $\{z_1^*,\dots,z_{M_2}^*\}$, we can add \eqref{exp1} and \eqref{exp2} to obtain 
\begin{equation}
    p(z) = \sum_{i=1}^{M_1} \phi_i(z)\mathcal{N}(\tilde{\gamma}_i(z);\mu,R) + \sum_{j=1}^{M_2} \mathbb{P}(z=z_j^*) \delta(z-z_j^*),
\end{equation}
which is valid for any $z\in \mathbb{R}$. The probability $\mathbb{P}(z=z_j^*)$ can be computed as the integral in \eqref{equationlemmaZ}, concluding the proof.
\end{proof}

\section{Proof of Theorem \ref{thm:pytxt}}
\label{appendix:thm:pytxt}
\begin{proof}
    The PDF $p(y_t|\mathbf{x}_t)$ can be obtained by marginalizing over $\eta_t$ following $p(y_t|\mathbf{x}_t) = \int_{\mathbb{R}} p(y_t,\eta_t|\mathbf{x}_t)\textnormal{d}\eta_t$. Thus, expanding the joint PDF $p(y_t,\eta_t|\mathbf{x}_t)$ leads to
 \begin{align}\label{equation_py}
     p(y_t|\mathbf{x}_t) &= \int_{\mathbb{R}} p(\eta_t)p(y_t|\eta_t,\mathbf{x}_t)\textnormal{d}\eta_t, \nonumber\\
     &=\int_{\mathbb{R}}p(\eta_t)p(z_t|\mathbf{x}_t)|_{z_t=y_t-\eta_t}\textnormal{d}\eta_t. 
 \end{align}
After applying Lemma \ref{lem:pz} with the output $z_t=g(r_t)$, where $r_t\sim \mathcal{N}(r_t;\mathbf{Cx}_t+\mathbf{D}\mathbf{u}_t,R)$, \eqref{equation_py} can be written as
    \begin{align}
     p(y_t|\mathbf{x}_t) \hspace{-0.06cm}&= \notag \\
     &\hspace{-1.1cm}\sum_{i=1}^{M_1} \hspace{-0.07cm}\int_{\mathbb{R}}\hspace{-0.07cm}\phi_i(y_t\hspace{-0.08cm}-\hspace{-0.05cm}\eta_t) \mathcal{N}(\tilde{\gamma}_i(y_t\hspace{-0.08cm}-\hspace{-0.05cm}\eta_t);\mathbf{Cx}_t\hspace{-0.06cm}+\hspace{-0.06cm}\mathbf{D}\mathbf{u}_t,\hspace{-0.02cm}R)\mathcal{N}(\eta_t;0,P)\textnormal{d}\eta_t \notag  \\
     &\hspace{-1.25cm}+ \hspace{-0.08cm}\sum_{j=1}^{M_2} \hspace{-0.08cm}\int_{\mathbb{R}}\hspace{-0.07cm}\delta(y\hspace{-0.07cm}-\hspace{-0.07cm}\eta_t\hspace{-0.08cm}-\hspace{-0.07cm}z_j^*)\mathcal{N}(\eta_t;\hspace{-0.03cm}0,\hspace{-0.03cm}P)\textnormal{d}\eta_t \hspace{-0.13cm}\int_{r_t\in g^{-1}(z_j^*)}\hspace{-1.2cm}\mathcal{N}\hspace{-0.02cm}(r_{\hspace{-0.02cm}t};\hspace{-0.03cm}\mathbf{Cx}_t\hspace{-0.06cm}+\hspace{-0.06cm}\mathbf{D}\mathbf{u}_t,\hspace{-0.04cm}R)\textnormal{d}r_{\hspace{-0.02cm}t}. \label{eqn:second_sum_proof}
 \end{align}
 Noticing that the integral over $\eta_t$ in \eqref{eqn:second_sum_proof} is directly obtained by evaluating $\mathcal{N}(\eta_t;0,P)$ in $\eta_t = y_t-z_j^*$, we obtain \eqref{eqn:integral_lemma_pytxt}.
\end{proof}

\section{Proof of Theorem \ref{thm:filtering}} \label{proof:filtering}
\begin{proof}
Consider the Bayesian filtering equations given in \eqref{eqn:bayesian_filtering_meas} and \eqref{eqn:bayesian_filtering_time}. In the measurement update equation, the normalization constant $p(y_t|y_{1:t-1})$ is given by
\begin{equation}
    p(y_t|y_{1:t-1})  = \int_{\mathbb{R}^n} p(y_t|\mathbf{x}_t)p(\mathbf{x}_t|y_{1:t-1})\textnormal{d}\mathbf{x}_{t}.
\end{equation}
Since both the numerator and denominator in \eqref{eqn:bayesian_filtering_meas} are computed using the product $F(\mathbf{x}_t,y_t)=p(y_t|\mathbf{x}_t)p(\mathbf{x}_t|y_{1:t-1})$, we focus on computing $F(\mathbf{x}_t,y_t)$. Due to the recursivity of the filtering equations, and noting that $p(\mathbf{x}_t|y_{1:t-1})$ is the one-step delayed version of $p(\mathbf{x}_{t+1}|y_{1:t})$ and is a Gaussian mixture, we can assume it has the following form:
\begin{equation}
    p(\mathbf{x}_{t}|y_{1:t-1})=\sum_{\ell=1}^{M_{t|t-1}} \delta_{t|t-1}^{k}\mathcal{N}(\mathbf{x}_{t};\hat{\mathbf{x}}_{t|t-1}^{\ell},\bm{\Sigma}_{t|t-1}^{\ell}),
\end{equation}
where the product $F(\mathbf{x}_t,y_t)$, considering the model $p(y_t|\mathbf{x}_t)$ given in \eqref{eqn:y_approximation_rew}, is given by
\begin{align}
    F(\mathbf{x}_t,y_t)=\sum_{\kappa=1}^{K}\!\!\sum_{\ell=1}^{M_{t|t-1}} \!\!\varphi_t^{\kappa} \delta_{t|t-1}^{\ell}&\mathcal{N}(\zeta_t^{\kappa}(y_t);\mathbf{C}\mathbf{x}_t \! + \! \mathbf{D}\mathbf{u}_t,R) \nonumber\\
    \times&\mathcal{N}(\mathbf{x}_{t};\hat{\mathbf{x}}_{t|t-1}^{\ell},\bm{\Sigma}_{t|t-1}^{\ell}).\!
\end{align}
Using Lemma \ref{lm:marginal_from_coditional} in Appendix \ref{section:technicallemmas} with $\mathcal{N}(\mathbf{x}_{t};\hat{\mathbf{x}}_{t|t-1}^{\ell},\bm{\Sigma}_{t|t-1}^{\ell})$ as the PDF $p(\mathbf{x})$ and $\mathcal{N}(\zeta_t^{\kappa}(y_t);\mathbf{C}\mathbf{x}_t+\mathbf{D}\mathbf{u}_t,R)$ as the PDF $p(\mathbf{y}|\mathbf{x})$, we obtain
\begin{align}
    &F(\mathbf{x}_t,y_t)= \notag \\
    &\sum_{\kappa=1}^{K}\!\!\sum_{\ell=1}^{M_{t|t-1}} \!\!\!\hspace{-0.02cm}\varphi_t^{\kappa} \delta_{t|t-1}^{\ell}\mathcal{N}\!\prt{\zeta_t^{\kappa}\hspace{-0.02cm}(y_t);\mathbf{C}\hat{\mathbf{x}}_{t|t\hspace{-0.01cm}-\hspace{-0.01cm}1}^{\ell}\!\!+\!\mathbf{D}\mathbf{u}_t,R\!+\!\mathbf{C}\bm{\Sigma}_{t|t\hspace{-0.01cm}-\hspace{-0.01cm}1}^{\ell}\mathbf{C}^{\top}\hspace{-0.03cm}} \nonumber\\
    \label{eqn:proof_thm_filtering}
	&\hspace{-0.05cm}\times\hspace{-0.08cm}\mathcal{N}\hspace{-0.04cm}\!\prt{\hspace{-0.04cm}\mathbf{x}_{t};\hspace{-0.03cm}\hat{\mathbf{x}}_{t\hspace{-0.005cm}|\hspace{-0.005cm}t\hspace{-0.02cm}-\hspace{-0.02cm}1}^{\ell}\!\hspace{-0.06cm}+\hspace{-0.01cm}\!\mathbf{K}_t^{\ell}\hspace{-0.02cm}(\hspace{-0.01cm}\zeta_t^{\kappa}\hspace{-0.04cm}(\hspace{-0.01cm}y_t\hspace{-0.01cm})\!\hspace{-0.03cm}-\hspace{-0.01cm}\!\mathbf{C}\hat{\mathbf{x}}_{t\hspace{-0.005cm}|\hspace{-0.005cm}t\hspace{-0.02cm}-\hspace{-0.02cm}1}^{\ell}\!\hspace{-0.07cm}-\!\hspace{-0.02cm}\mathbf{D}\mathbf{u}_t\hspace{-0.01cm})\hspace{-0.01cm},\hspace{-0.05cm}(\mathbf{I}_n\!\hspace{-0.08cm}-\!\hspace{-0.03cm}\mathbf{K}_t^{\ell}\hspace{-0.02cm}\mathbf{C})\bm{\Sigma}_{\hspace{-0.06cm}t\hspace{-0.005cm}|\hspace{-0.005cm}t\hspace{-0.02cm}-\hspace{-0.02cm}1}^{\ell}\hspace{-0.045cm}}\!\hspace{-0.03cm},
\end{align}
where $\mathbf{K}_t^{\ell}=\bm{\Sigma}_{t|t-1}^{\ell}\mathbf{C}^{\top}(R+\mathbf{C}\bm{\Sigma}_{t|t-1}^{\ell}\mathbf{C}^{\top})^{-1}$. Note that, once a new measurement $y_t$ is obtained, the term $\mathcal{N}(\zeta_t^{\kappa}(y_t);\mathbf{C}\hat{\mathbf{x}}_{t|t-1}^{\ell}+\mathbf{D}\mathbf{u}_t,R+\mathbf{C}\bm{\Sigma}_{t|t-1}^{\ell}\mathbf{C}^{\top})$ is just a numerical coefficient. Thus, the double summation in $F(\mathbf{x}_t,y_t)$ can be rewritten as a single summation, defining a new index $k=(\ell-1)K+\kappa$ such that
\begin{equation}\label{eqn:Fxtyt1}
	F(\mathbf{x}_t,y_t)=\sum_{k=1}^{M_{t|t}} \bar{\delta}_{t|t}^{k}\mathcal{N}\left( \mathbf{x}_{t};\hat{\mathbf{x}}_{t|t}^{k},\bm{\Sigma}_{t|t}^{k}\right) ,
\end{equation}
where $M_{t|t}$, $\bar{\delta}_{t|t}^{k}$, $\hat{\mathbf{x}}_{t|t}^{k}$, $\bm{\Sigma}_{t|t}^{k}$ are defined in \eqref{eqn:lemma_filtering_0}, \eqref{eqn:lemma_filtering_2}, \eqref{eqn:lemma_filtering_4}, and \eqref{eqn:lemma_filtering_5}, respectively. Then, the normalization constant is $p(y_t|y_{1:t-1}) =  \int_{\mathbb{R}^n} F(\mathbf{x}_t,y_t)\textnormal{d}\mathbf{x}_{t}=\sum_{\ell=1}^{M_{t|t}} \bar{\delta}_{t|t}^{\ell}$ which means that the correction stage filtering equation is 
\begin{equation}\label{eqn:Fxtyt2}
	p(\mathbf{x}_t|y_{1:t})=\sum_{k=1}^{M_{t|t}} \delta_{t|t}^{k}\mathcal{N}\left( \mathbf{x}_{t};\hat{\mathbf{x}}_{t|t}^{k},\bm{\Sigma}_{t|t}^{k}\right) ,\\	
\end{equation}
where $\delta_{t|t}^{k} = \bar{\delta}_{t|t}^{k}/\sum_{s=1}^{M_{t|t}}\bar{\delta}_{t|t}^{s}$ are normalized weights. On the other hand, the prediction stage equation is obtained by solving the integral in \eqref{eqn:bayesian_filtering_time}, that is
\begin{align}
    p(\mathbf{x}_{t+1}|y_{1:t})&=\int_{\mathbb{R}^n} p(\mathbf{x}_{t+1}|\mathbf{x}_t)p(\mathbf{x}_t|y_{1:t})\textnormal{d}\mathbf{x}_t, \nonumber\\
		&=\sum_{k=1}^{M_{t|t}} \delta_{t|t}^{k}\int_{\mathbb{R}^n}\mathcal{N}(\mathbf{x}_{t+1};\mathbf{A}\mathbf{x}_t+\mathbf{B}\mathbf{u}_t,\mathbf{Q})\nonumber\\ \label{samelines1}  
        &\hspace{18mm}\times\mathcal{N}(\mathbf{x}_{t};\hat{\mathbf{x}}_{t|t}^{k},\bm{\Sigma}_{t|t}^{k})\textnormal{d}\mathbf{x}_t.
\end{align}
Using Lemma \ref{lm:marginal_from_coditional} with $\mathcal{N}(\mathbf{x}_{t};\hat{\mathbf{x}}_{t|t}^{\ell},\bm{\Sigma}_{t|t}^{\ell})$ and $\mathcal{N}(\mathbf{x}_{t+1};\mathbf{A}\mathbf{x}_t+\mathbf{B}\mathbf{u}_t,\mathbf{Q})$ as the marginal and conditional PDFs respectively,
\begin{align}
	&p(\mathbf{x}_{t+1}|y_{1:t})\notag \\
 &\hspace{-0.04cm}=\hspace{-0.09cm}\sum_{k=1}^{M_{t|t}} \hspace{-0.08cm}\delta_{t|t}^{k}\mathcal{N}\hspace{-0.08cm}\prt{\hspace{-0.04cm}\mathbf{x}_{t\hspace{-0.02cm}+\hspace{-0.02cm}1};\hspace{-0.03cm}\mathbf{A}\hat{\mathbf{x}}_{t|t}^{k}\hspace{-0.09cm}+\hspace{-0.07cm}\mathbf{B}\mathbf{u}_t,\hspace{-0.04cm}\mathbf{Q}\hspace{-0.07cm}+\hspace{-0.08cm}\mathbf{A}\bm{\Sigma}_{\hspace{-0.05cm}t|t}^{k}\mathbf{A}^{\hspace{-0.08cm}\top}\hspace{-0.03cm}}\hspace{-0.13cm}\int_{\mathbb{R}^n}\hspace{-0.17cm}\mathcal{N}\hspace{-0.06cm}\left(\mathbf{x}_{t}; \cdot,\cdot\right)\hspace{-0.03cm}\textnormal{d}\mathbf{x}_t \nonumber \\
\label{samelines2}
    &\hspace{-0.04cm}=\hspace{-0.09cm}\sum_{k=1}^{M_{t+1|t}} \hspace{-0.05cm}\delta_{t+1|t}^{k} \mathcal{N}\left(\mathbf{x}_{t+1};\hat{\mathbf{x}}_{t+1|t}^{k},\bm{\Sigma}_{t+1|t}^{k}\right),
\end{align}
where $M_{t+1|t}$, $\delta_{t+1|t}^{k}$, $\hat{\mathbf{x}}_{t+1|t}^{k}$ and $\bm{\Sigma}_{t+1|t}^{k}$ are defined in the statement of the theorem, completing the proof.
\end{proof}

\section{Proof of Theorem \ref{thm:backward}} \label{proof:backward}
\begin{proof}
    Consider the backward filtering equations in \eqref{eqn:bayesian_backward_prediction} and \eqref{eqn:bayesian_backward_measurement}. The proof is carried out by induction in reverse time. First, it is verified that the recursion holds for $t=N-1$, then it is assumed that it holds for $t=s+1$, and finally, it is verified that it holds for $t=s$. Note that the recursion starts at $t=N$ with $p(y_{N:N}|\mathbf{x}_N)=p(y_{N}|\mathbf{x}_N)$ that is obtained from the output model, that is
    \begin{equation}
    \label{pynxn_appendix}
        p(y_{N}|\mathbf{x}_N) = \sum_{k=1}^{K}\varphi_N^{k}\mathcal{N}(\zeta_N^{k}(y_N); \mathbf{C}\mathbf{x}_N+\mathbf{D}\mathbf{u}_N,R),
    \end{equation} 
    where we directly observe the equivalences in the theorem. On the other hand, to verify that the recursion is fulfilled at the instant of time $t=N-1$, the backward prediction equation \eqref{eqn:bayesian_backward_prediction} is solved with $t=N-1$, that is $p(y_{N:N}|\mathbf{x}_{N-1})=\int_{\mathbb{R}^n} p(y_{N:N}|\mathbf{x}_{N})p(\mathbf{x}_{N}|\mathbf{x}_{N-1})\textnormal{d}\mathbf{x}_{N}$, where $p(y_{N:N}|\mathbf{x}_{N})$ is given by \eqref{pynxn_appendix} and the PDF $p(\mathbf{x}_{N}|\mathbf{x}_{N-1})=\mathcal{N}(\mathbf{x}_N;\mathbf{A}\mathbf{x}_{N-1}+\mathbf{Bu}_{N-1},\mathbf{Q})$. Applying Lemma \ref{lm:marginal_from_coditional} with $\mathbf{y} = \zeta_N^k(y_{N:N}), \bm{\mu}=\mathbf{D}\mathbf{u}_{N:N}$ and $\bm{\psi}=\mathbf{A}\mathbf{x}_{N-1}+\mathbf{Bu}_{N-1}$, we have 
    \begin{align}
        &p(y_{N:N}|\mathbf{x}_{N})p(\mathbf{x}_{N}|\mathbf{x}_{N-1})=\sum_{k=1}^{S_{N|N}}\epsilon_{N|N}^{k}\mathcal{N}(\mathbf{x}_N; \cdot,\cdot) \nonumber\\
        &\times \mathcal{N}\prt{\zeta_N^{k}(y_N); \mathbf{C}(\mathbf{A}\mathbf{x}_{N-1}\hspace{-0.05cm}+\hspace{-0.05cm}\mathbf{Bu}_{N-1})\hspace{-0.05cm}+\hspace{-0.05cm}\mathbf{D}\mathbf{u}_N,R\hspace{-0.05cm}+\hspace{-0.05cm}\mathbf{C}\mathbf{Q}\mathbf{C}^\top},
    \end{align}
    where the mean and covariance of the first Gaussian term in does not depend on $\mathbf{x}_N$. After integrating with respect to $\mathbf{x}_N$ and redefining the variables, we obtain \eqref{eqn:backward_prediction} with all the terms defined by \eqref{eqn:backward_prediction_S} and \eqref{eqn:backward_prediction_P} being evaluated at $t=N-1$. Now, for the correction stage in \eqref{eqn:bayesian_backward_measurement}, evaluating at $t=N-1$ produces
    \begin{equation}\label{eqn:measurement_N_1}
        p(y_{N-1:N}|\mathbf{x}_{N-1}) =p(y_{N-1}|\mathbf{x}_{N-1})p(y_{N:N}|\mathbf{x}_{N-1}).
    \end{equation}
    The expression for $p(y_{N-1}|\mathbf{x}_{N-1})$ is obtained from the output model \eqref{eqn:y_approximation_rew} evaluated at $t=N-1$. Therefore, by performing the product in \eqref{eqn:measurement_N_1} and defining $\ell=(k-1)K+\tau$, we obtain
    \begin{align}
            &p(y_{N-1:N}|\mathbf{x}_{N-1})\!=\!\!\sum_{\ell=1}^{S_{N-1|N-1}}\!\!\epsilon_{N-1|N-1}^{\ell} 
            \mathcal{N}\hspace{-0.05cm}\left(\begin{bmatrix} \zeta_{N\hspace{-0.02cm}-\hspace{-0.02cm}1}^\tau (y_{N-1}) \\ \zeta_{N}^k (y_{N}) \end{bmatrix}\hspace{-0.03cm}; \right.
            \notag   \\
            &\left.\begin{bmatrix} \mathbf{C} \\ \mathbf{CA}\end{bmatrix}\hspace{-0.05cm}\mathbf{x}_{N\hspace{-0.02cm}-\hspace{-0.02cm}1}\hspace{-0.05cm}+\hspace{-0.05cm}\begin{bmatrix} \mathbf{D} & \hspace{-0.05cm}\mathbf{0}^\top \\ \mathbf{CB} & \hspace{-0.05cm}\mathbf{D}\end{bmatrix}\hspace{-0.05cm}\begin{bmatrix} \mathbf{u}_{N\hspace{-0.02cm}-\hspace{-0.02cm}1} \\ \mathbf{u}_{N}\end{bmatrix}\hspace{-0.03cm}, 
            \begin{bmatrix} R & \hspace{-0.05cm}\mathbf{0}^\top \\ \mathbf{0} & \hspace{-0.05cm}R\hspace{-0.03cm}+\hspace{-0.03cm}\mathbf{CQC}^\top\end{bmatrix}\right) ,
    \end{align}
    which matches \eqref{eqn:backward_measureement} when the quantities in \eqref{eqn:backward_measurement_S}-\eqref{eqn:backward_measurement_H} are evaluated at $t=N-1$. By applying a procedure similar to that carried out for $t=N-1$, it can be verified that the results shown in the Theorem \ref{thm:backward} are also verified at the instant $t=s$, since the expressions obtained are the same as those given in \eqref{eqn:backward_measurement_S}-\eqref{eqn:backward_measurement_H} evaluated at $t=s$. Therefore, it can be concluded that Theorem \ref{thm:backward} holds for all $t$.
\end{proof}

\section{Proof of Theorem \ref{thm:smoothing}} \label{proof:smoothing}
\begin{proof}
	Consider the smoothing equation given in \eqref{eqn:bayesian_smoothing}. This PDF, at $t=N$, corresponds to the measurement update step of the Theorem \ref{thm:filtering} at instant $t=N$. That is the PDF $p(\mathbf{x}_N|\mathbf{y}_{1:N})$ is obtained from the last iteration of the correction step of the filtering stage. For $t=N-1,\dots,1$, the smoothing equation is $p(\mathbf{x}_t|\mathbf{y}_{1:N}) \propto p (\mathbf{x}_t|\mathbf{y}_{1:t-1})p(\mathbf{y}_{t:N}|\mathbf{x}_t)$. Therefore, by combining the prediction equation \eqref{eqn:prediccion_filtering} of the filtering algorithm with the equation \eqref{eqn:backward_measureement} of the reverse filtering algorithm, and later applying Lemma \ref{lm:marginal_from_coditional}, we obtain
\begin{align}
    &p(\mathbf{x}_t|\mathbf{y}_{1:N}) \propto \sum_{\tau=1}^{M_{t|t-1}}\sum_{\ell=1}^{S_{t|t}} \delta_{t|t-1}^{\tau}\epsilon_{t|t}^\ell \notag \\
    & \hspace{-0.1cm} \times \mathcal{N}\!\prt{\zeta_{t:N}^{\ell}\hspace{-0.02cm}(y_{t:N});\mathcal{O}_{t|t}\hat{\mathbf{x}}_{t|t\hspace{-0.01cm}-\hspace{-0.01cm}1}^{\tau}\!\!+\!\mathcal{H}_{t|t}\mathbf{u}_{t:N},\mathcal{P}_{t|t}\!+\!\mathcal{O}_{t|t}\bm{\Sigma}_{t|t\hspace{-0.01cm}-\hspace{-0.01cm}1}^{\tau}\mathcal{O}_{t|t}^{\top}\hspace{-0.03cm}} \nonumber\\
	&\hspace{-0.1cm}\times\hspace{-0.08cm}\mathcal{N}\hspace{-0.04cm}\!\prt{\hspace{-0.04cm}\mathbf{x}_{t};\hspace{-0.03cm}\hat{\mathbf{x}}_{t\hspace{-0.005cm}|\hspace{-0.005cm}t\hspace{-0.02cm}-\hspace{-0.02cm}1}^{\tau}\!\hspace{-0.065cm}+\hspace{-0.01cm}\!\mathbf{K}_{t\hspace{-0.005cm}|\hspace{-0.005cm}N}^{\tau}\hspace{-0.02cm}(\hspace{-0.01cm}\zeta_{t:N}^{\ell}\hspace{-0.02cm}(y_{t:N}\hspace{-0.015cm})\hspace{-0.07cm}-\hspace{-0.06cm}\mathcal{O}_{t\hspace{-0.005cm}|\hspace{-0.005cm}t}\hat{\mathbf{x}}_{t\hspace{-0.005cm}|\hspace{-0.005cm}t\hspace{-0.01cm}-\hspace{-0.01cm}1}^{\tau}\!\!\hspace{-0.03cm}-\hspace{-0.02cm}\!\mathcal{H}_{t\hspace{-0.005cm}|\hspace{-0.005cm}t}\mathbf{u}_{t:N}\hspace{-0.03cm}),\hspace{-0.03cm}\bm{\Sigma}_{\hspace{-0.03cm}t\hspace{-0.005cm}|\hspace{-0.005cm}N}^{\tau}\hspace{-0.05cm}}\!\hspace{-0.03cm},
\end{align}
where $\mathbf{K}_{t|N}^{\tau}=\bm{\Sigma}_{t|t-1}^{\tau}\mathcal{O}_{t|t}^{\top}(\mathcal{P}_{t|t}+\mathcal{O}_{t|t}\bm{\Sigma}_{t|t-1}^{\tau}\mathcal{O}_{t|t}^{\top})^{-1}$, and $\bm{\Sigma}_{t|N}^{\tau} = (\mathbf{I}_n-\mathbf{K}_{t|N}^{\tau}\mathcal{O}_{t|t})\bm{\Sigma}_{t|t-1}^{\tau}$.	A new index $k=(\ell-1)M_{t|t-1}+\tau$ is defined, which produces
	\begin{equation}
		p(\mathbf{x}_t|\mathbf{y}_{1:N}) \propto\sum_{k=1}^{S_{t|N}}\bar{\delta}_{t|N}^{k}\mathcal{N}(\mathbf{x}_t;\hat{\mathbf{x}}_{t|N}^{k},\bm{\Sigma}_{t|N}^{k}),
	\end{equation}
	where $S_{t|N}$, $\bar{\delta}_{t|N}^{k}$,  $\hat{\mathbf{x}}_{t|N}^{k}$, and $\bm{\Sigma}_{t|N}^{k}$ are defined in \eqref{eqsmoothingS} to \eqref{eqsmoothingsigma}. The weights are normalized by $\delta_{t|N}^{k}=\bar{\delta}_{t|N}^{k}/\sum_{s=1}^{S_{t|N}}\bar{\delta}_{t|N}^{s}$ following the approach in \eqref{eqn:Fxtyt1}-\eqref{eqn:Fxtyt2}, concluding the proof.
\end{proof}

\section{Proof of Theorem \ref{thm:joint}} \label{proof:joint}
\begin{proof}
	Consider \eqref{eqn:bayes_joint_smoothing} where the functions $p(\mathbf{x}_t|y_{1:t})$, and $p(y_{t+1:N}|\mathbf{x}_{t+1})$ are obtained from \eqref{eqn:correction_filtering} and \eqref{eqn:backward_measureement}, respectively; and $p(\mathbf{x}_{t+1}|\mathbf{x}_t)$ is given by the system model in \eqref{eqn:prob_model_pxtm1_xt}. Following the same lines as in \eqref{samelines1} and \eqref{samelines2}, the following product can be computed:
\begin{align}
    p&(\mathbf{x}_{t+1}|\mathbf{x}_t)p(\mathbf{x}_{t}|y_{1:t}) = \sum_{\tau=1}^{M_{t|t}} \delta_{t|t}^\tau \mathcal{N}\prt{\mathbf{x}_{t+1}; \hat{\mathbf{x}}_{t+1|t}^\tau, \bm{\Sigma}_{t+1|t}^\tau} \notag \\
    \label{gaussian1}
    &\times \hspace{-0.06cm} \mathcal{N} \hspace{-0.06cm}\prt{\hspace{-0.02cm}\mathbf{x}_{t}; \hat{\mathbf{x}}_{t|t}^\tau \hspace{-0.05cm}+ \hspace{-0.05cm}\mathbf{K}_{t|t+1}^\tau\hspace{-0.03cm}(\mathbf{x}_{t+1} \hspace{-0.05cm}- \hspace{-0.05cm}\hat{\mathbf{x}}_{t+1|t}^\tau), (\mathbf{I}_n \hspace{-0.06cm}- \hspace{-0.06cm}\mathbf{K}_{t|t+1}^\tau\mathbf{A})\bm{\Sigma}_{t|t}^\tau} \hspace{-0.07cm},
\end{align}
 where $\hat{\mathbf{x}}_{t+1|t}^\tau$ and $\bm{\Sigma}_{t+1|t}^\tau$ are as in \eqref{eqn:lemma_filtering_7} and \eqref{eqn:lemma_filtering_8}, respectively, and $\mathbf{K}_{t|t+1}^\tau=\bm{\Sigma}_{t|t}^\tau \mathbf{A}^\top (\mathbf{Q}+\mathbf{A}\bm{\Sigma}_{t|t}^\tau\mathbf{A}^\top)^{-1}$. The following identity is reached after applying Lemma \ref{lm:marginal_from_coditional}:
\begin{align}
    p&(y_{t+1:N}|\mathbf{x}_{t+1}) \mathcal{N}\hspace{-0.05cm}\prt{\hspace{-0.02cm}\mathbf{x}_{t+1}; \hat{\mathbf{x}}_{t+1|t}^\tau, \bm{\Sigma}_{\hspace{-0.02cm}t+1|t}^\tau} \hspace{-0.06cm} = \hspace{-0.15cm}\sum_{\ell=1}^{S_{t+1|t+1}}\hspace{-0.1cm}\epsilon_{t+1|t+1}^\ell \mathcal{C}_{t+1}^{\ell \tau} \notag \\
    \label{gaussian2}
    &\times \mathcal{N}\prt{\mathbf{x}_{t+1}; \hat{\mathbf{x}}_{t+1|N}^k, \bm{\Sigma}_{t+1|N}^k},
\end{align}
where we have used the notation in \eqref{eqsmoothingx} and \eqref{eqsmoothingsigma}, and $\mathcal{C}_{t+1}^{\ell \tau}$ is the Gaussian PDF defined in \eqref{eqsmoothingdelta} but evaluated at $t=t+1$. Thus, applying Lemma \ref{lem:joint_gaussian} with the Gaussian PDF in \eqref{gaussian2} as $p(\mathbf{x})$ and the second Gaussian in \eqref{gaussian1} as $p(\mathbf{y}|\mathbf{x})$ leads to
\begin{align}
    p(&y_{t+1:N}|\mathbf{x}_{t+1})p(\mathbf{x}_{t+1}|\mathbf{x}_t)p(\mathbf{x}_t|y_{1:t}) =  \notag \\
    &\sum_{\tau=1}^{M_{t|t}} \sum_{\ell=1}^{S_{t+1|t+1}}\hspace{-0.1cm}\delta_{t|t}^\tau\epsilon_{t+1|t+1}^\ell \mathcal{C}_{t+1}^{\ell \tau}\mathcal{N}(\mathbf{x}_{t}^{\text{e}};\hat{\mathbf{x}}_{t|N}^{\text{e}(k)},\bm{\Sigma}_{t|N}^{\text{e}(k)}),
\end{align}
where $\hat{\mathbf{x}}_{t|N}^{\text{e}(k)}$ and $\bm{\Sigma}_{t|N}^{\text{e}(k)}$ are defined in \eqref{jointsmoothingx} and \eqref{jointsmoothingsigma}, respectively. The normalization to obtain $p(\mathbf{x}_{t+1},\mathbf{x}_t|y_{1:N})$ leads to the weights being given by \eqref{eqsmoothingdelta}.

For $t=N$, Bayes' theorem leads to $p(\mathbf{x}_{N+1},\mathbf{x}_N|y_{1:N}) \propto p(\mathbf{x}_{N+1}|\mathbf{x}_N) p(\mathbf{x}_{N}|y_{1:N})$, where $p(\mathbf{x}_{N}|y_{1:N})$ is the last iteration of the filtering algorithm in the measurement stage. Thus,
\begin{align}
    &p(\mathbf{x}_{N+1},\mathbf{x}_{N}|y_{1:N}) \notag \\
    \label{appliedto}
    &\propto \hspace{-0.08cm}\sum_{k=1}^{M_{N|N}} \hspace{-0.08cm}\delta_{N|N}^{k}\mathcal{N}(\mathbf{x}_{N\hspace{-0.02cm}+\hspace{-0.02cm}1};\hspace{-0.02cm}\mathbf{A}\mathbf{x}_N\hspace{-0.08cm}+\hspace{-0.08cm}\mathbf{B}\mathbf{u}_N,\hspace{-0.02cm}\mathbf{Q})\mathcal{N}(\mathbf{x}_{N};\hat{\mathbf{x}}_{N|N}^{k},\hspace{-0.02cm}\bm{\Sigma}_{N|N}^{k}).
\end{align}
Lemma \ref{lem:joint_gaussian} applied above leads to the initialization conditions stated in the theorem, which concludes the proof.
\end{proof}

\section{Technical Lemmas}
\label{section:technicallemmas}

\begin{lemma}\label{lm:marginal_from_coditional}
Consider the PDF $p(\mathbf{x})=\mathcal{N}\left(\mathbf{x};\bm{\psi},\mathbf{Q}\right) $ and the conditional PDF  $p(\mathbf{y}|\mathbf{x})=\mathcal{N}\left(\mathbf{y}; \mathbf{C}\mathbf{x}+\bm{\mu},\mathbf{R}\right)$. Then, $p(\mathbf{x},\mathbf{y})$ can be decomposed as $p(\mathbf{x},\mathbf{y})=p(\mathbf{x}|\mathbf{y})p(\mathbf{y})$, where
\begin{align}
        p(\mathbf{x}|\mathbf{y})\hspace{-0.04cm}&=\hspace{-0.04cm}\mathcal{N}\left(\mathbf{x};\bm{\psi}\hspace{-0.04cm}+\hspace{-0.04cm}\mathbf{K}(\mathbf{y}\hspace{-0.04cm}-\hspace{-0.04cm}\mathbf{C}\bm{\psi}\hspace{-0.04cm}-\hspace{-0.04cm}\bm{\mu}),\mathbf{Q}\hspace{-0.04cm}-\hspace{-0.04cm}\mathbf{K}\mathbf{C}\mathbf{Q}\right), \\
p(\mathbf{y})&=\mathcal{N}\left(\mathbf{y};\mathbf{C}\bm{\psi}+\bm{\mu},\mathbf{R}+\mathbf{C}\mathbf{Q}\mathbf{C}^{\top}\right),
    \end{align}
and $\mathbf{K}=\mathbf{Q}\mathbf{C}^{\top}\left(\mathbf{R}+\mathbf{C}\mathbf{Q}\mathbf{C}^{\top}\right) ^{-1}$.
\end{lemma}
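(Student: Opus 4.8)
The plan is to recognize this lemma as the standard linear-Gaussian conditioning identity that underlies the Kalman recursions, and to prove it by exhibiting the joint density $p(\mathbf{x},\mathbf{y})=p(\mathbf{y}|\mathbf{x})p(\mathbf{x})$ as a single multivariate Gaussian and then reading off its marginal in $\mathbf{y}$ and its conditional in $\mathbf{x}$. First I would observe that the conditional $p(\mathbf{y}|\mathbf{x})$ corresponds to the affine representation $\mathbf{y}=\mathbf{C}\mathbf{x}+\bm{\mu}+\mathbf{e}$, where $\mathbf{e}\sim\mathcal{N}(\mathbf{0},\mathbf{R})$ is independent of $\mathbf{x}\sim\mathcal{N}(\bm{\psi},\mathbf{Q})$. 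Since an affine map of a Gaussian plus an independent Gaussian is jointly Gaussian, $(\mathbf{x},\mathbf{y})$ is jointly Gaussian, so it suffices to compute its first two moments. By linearity and independence I obtain $\mathbb{E}[\mathbf{x}]=\bm{\psi}$, $\mathbb{E}[\mathbf{y}]=\mathbf{C}\bm{\psi}+\bm{\mu}$, $\mathrm{Cov}(\mathbf{x})=\mathbf{Q}$, $\mathrm{Cov}(\mathbf{y})=\mathbf{R}+\mathbf{C}\mathbf{Q}\mathbf{C}^{\top}$, and the cross term $\mathrm{Cov}(\mathbf{x},\mathbf{y})=\mathbb{E}[(\mathbf{x}-\bm{\psi})(\mathbf{C}(\mathbf{x}-\bm{\psi})+\mathbf{e})^{\top}]=\mathbf{Q}\mathbf{C}^{\top}$.

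With the joint covariance blocks assembled, I would invoke the standard partitioned-Gaussian formulas. The marginal of $\mathbf{y}$ inherits mean $\mathbf{C}\bm{\psi}+\bm{\mu}$ and covariance $\mathbf{R}+\mathbf{C}\mathbf{Q}\mathbf{C}^{\top}$, which is exactly the claimed $p(\mathbf{y})$. The conditional $\mathbf{x}\mid\mathbf{y}$ has mean $\bm{\psi}+\bm{\Sigma}_{xy}\bm{\Sigma}_{yy}^{-1}(\mathbf{y}-\mathbf{C}\bm{\psi}-\bm{\mu})$ and covariance $\mathbf{Q}-\bm{\Sigma}_{xy}\bm{\Sigma}_{yy}^{-1}\bm{\Sigma}_{yx}$. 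Substituting $\bm{\Sigma}_{xy}=\mathbf{Q}\mathbf{C}^{\top}$ and $\bm{\Sigma}_{yy}=\mathbf{R}+\mathbf{C}\mathbf{Q}\mathbf{C}^{\top}$ identifies the gain $\mathbf{K}=\mathbf{Q}\mathbf{C}^{\top}(\mathbf{R}+\mathbf{C}\mathbf{Q}\mathbf{C}^{\top})^{-1}$ and reproduces the stated conditional mean $\bm{\psi}+\mathbf{K}(\mathbf{y}-\mathbf{C}\bm{\psi}-\bm{\mu})$ and covariance $\mathbf{Q}-\mathbf{K}\mathbf{C}\mathbf{Q}$, completing the identification of both factors.

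A fully self-contained alternative, which avoids quoting the partitioned-Gaussian formulas, is to multiply the two density exponentials directly and complete the square in $\mathbf{x}$. This yields a quadratic form whose $\mathbf{x}$-block has precision $\mathbf{Q}^{-1}+\mathbf{C}^{\top}\mathbf{R}^{-1}\mathbf{C}$ together with a residual depending on $\mathbf{y}$ alone. I expect the main obstacle to sit precisely here, in the matrix algebra: reconciling the information-form covariance $(\mathbf{Q}^{-1}+\mathbf{C}^{\top}\mathbf{R}^{-1}\mathbf{C})^{-1}$ with the gain-form covariance $\mathbf{Q}-\mathbf{K}\mathbf{C}\mathbf{Q}$ requires the Woodbury matrix-inversion identity, and matching the normalization constant of the marginal $p(\mathbf{y})$ requires the determinant factorization $\det(\mathbf{Q})\det(\mathbf{R})=\det(\mathbf{Q}-\mathbf{K}\mathbf{C}\mathbf{Q})\det(\mathbf{R}+\mathbf{C}\mathbf{Q}\mathbf{C}^{\top})$, both of which follow from the two Schur-complement expansions of the joint covariance determinant. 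Because these inversions are the only delicate point, I would favor the joint-Gaussian argument of the first two paragraphs, which sidesteps them entirely and reduces the proof to computing the moments of an affine transformation.
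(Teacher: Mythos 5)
Your proof is correct, but there is nothing in the paper to compare it against line by line: the paper disposes of this lemma purely by citation, pointing to Lemmas 2 and 3 of Section 6.6 of Kitagawa and Gersch, so your self-contained argument is genuinely different in the sense that you actually supply the proof the authors outsource. Your main route --- representing the pair as $\mathbf{y}=\mathbf{C}\mathbf{x}+\bm{\mu}+\mathbf{e}$ with $\mathbf{e}\sim\mathcal{N}(\mathbf{0},\mathbf{R})$ independent of $\mathbf{x}$, computing the joint first and second moments ($\bm{\Sigma}_{xy}=\mathbf{Q}\mathbf{C}^{\top}$, $\bm{\Sigma}_{yy}=\mathbf{R}+\mathbf{C}\mathbf{Q}\mathbf{C}^{\top}$), and reading off the marginal and the conditional via the partitioned-Gaussian formulas --- is the standard textbook argument, and your identification of $\mathbf{K}=\bm{\Sigma}_{xy}\bm{\Sigma}_{yy}^{-1}$ and of the conditional covariance $\mathbf{Q}-\mathbf{K}\mathbf{C}\mathbf{Q}$ is exactly right. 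It is worth noting that the joint Gaussian you construct in your first step is precisely the content of the paper's own Lemma \ref{lem:joint_gaussian}, which the authors verify by the Schur-complement determinant factorization $\det(\mathbf{Q})\det(\mathbf{R})$ together with a quadratic-form identity; so your argument can be phrased as Lemma \ref{lem:joint_gaussian} followed by standard conditioning, making the paper's two technical lemmas two halves of one coherent derivation. Your closing remark is also well judged: the information-form alternative does force the Woodbury identity and the determinant matching you describe, and your choice to avoid it by working with moments of an affine transformation is the cleaner route; the only external fact you quote, the partitioned-Gaussian conditioning formula, is itself provable by one completion of the square, so the dependence is benign.
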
	
\begin{proof}
    See Lemmas 2 and 3 of \cite[Section 6.6]{kitagawa1996smoothness}.
\end{proof}

\begin{lemma}\label{lem:joint_gaussian}
    Consider the PDF $p(\mathbf{x})=\mathcal{N}\prt{\mathbf{x};\bm{\psi},\mathbf{Q}}$ and the conditional PDF $p(\mathbf{y}|\mathbf{x})=\mathcal{N}\prt{\mathbf{y};\mathbf{C}\mathbf{x}+\bm{\mu},\mathbf{R}}$. Then, the joint PDF $p(\mathbf{x},\mathbf{y})$ is given by
    \begin{equation}
         p(\mathbf{x}\hspace{-0.02cm}, \hspace{-0.02cm}\mathbf{y})\hspace{-0.06cm}=\hspace{-0.04cm}\mathcal{N}\hspace{-0.07cm}\prt{\hspace{-0.04cm}\begin{bmatrix}
             \mathbf{x} \\ \mathbf{y}
         \end{bmatrix}\hspace{-0.06cm};\hspace{-0.04cm}\begin{bmatrix}
             \bm{\psi} \\ \mathbf{C}\bm{\psi} \hspace{-0.04cm}+\bm{\mu} \hspace{-0.04cm}
         \end{bmatrix}\hspace{-0.06cm},\hspace{-0.05cm}\begin{bmatrix}
             \mathbf{Q} & \hspace{-0.05cm}\mathbf{Q}\mathbf{C}^\top \\ \mathbf{C}\mathbf{Q} & \hspace{-0.05cm}\mathbf{R} \hspace{-0.05cm}+\hspace{-0.05cm} \mathbf{C}\mathbf{Q}\mathbf{C}^\top
         \end{bmatrix}\hspace{-0.02cm}}\hspace{-0.05cm}.
    \end{equation}
\end{lemma}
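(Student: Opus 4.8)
The plan is to establish that $(\mathbf{x},\mathbf{y})$ is jointly Gaussian and then identify its mean and covariance by direct moment computation. The most transparent way to secure joint Gaussianity is to realize $(\mathbf{x},\mathbf{y})$ as an affine image of a single Gaussian vector. Since $p(\mathbf{y}|\mathbf{x})=\mathcal{N}(\mathbf{y};\mathbf{C}\mathbf{x}+\bm{\mu},\mathbf{R})$, I would introduce an auxiliary variable $\mathbf{v}\sim\mathcal{N}(\mathbf{v};\mathbf{0},\mathbf{R})$, independent of $\mathbf{x}$, and write $\mathbf{y}=\mathbf{C}\mathbf{x}+\bm{\mu}+\mathbf{v}$; this reproduces exactly the prescribed conditional law. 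The stacked vector then admits the representation
\begin{equation}
    \begin{bmatrix} \mathbf{x} \\ \mathbf{y} \end{bmatrix}
    = \begin{bmatrix} \mathbf{I} & \mathbf{0} \\ \mathbf{C} & \mathbf{I} \end{bmatrix}
      \begin{bmatrix} \mathbf{x} \\ \mathbf{v} \end{bmatrix}
      + \begin{bmatrix} \mathbf{0} \\ \bm{\mu} \end{bmatrix}.
\end{equation}

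Because $\mathbf{x}$ and $\mathbf{v}$ are independent Gaussians, the concatenation $[\mathbf{x}^\top,\mathbf{v}^\top]^\top$ is jointly Gaussian with block-diagonal covariance, and any affine transformation of a Gaussian vector is again Gaussian; hence $(\mathbf{x},\mathbf{y})$ is jointly Gaussian and it only remains to match its first two moments. The means follow immediately as $\mathbb{E}[\mathbf{x}]=\bm{\psi}$ and $\mathbb{E}[\mathbf{y}]=\mathbf{C}\bm{\psi}+\bm{\mu}$. For the covariance, the $\mathbf{x}$-block is $\mathbf{Q}$ by assumption; the cross term is $\mathbb{E}[(\mathbf{x}-\bm{\psi})(\mathbf{y}-\mathbf{C}\bm{\psi}-\bm{\mu})^\top]=\mathbb{E}[(\mathbf{x}-\bm{\psi})(\mathbf{C}(\mathbf{x}-\bm{\psi})+\mathbf{v})^\top]=\mathbf{Q}\mathbf{C}^\top$, using $\mathbf{x}\perp\mathbf{v}$; and the $\mathbf{y}$-block is $\mathbf{C}\mathbf{Q}\mathbf{C}^\top+\mathbf{R}$. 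Assembling these blocks gives precisely the covariance matrix in the statement, and the $\mathbf{y}$-marginal so obtained also agrees with $p(\mathbf{y})$ from Lemma \ref{lm:marginal_from_coditional}, providing a consistency check.

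This lemma is essentially standard, so I do not anticipate a genuine obstacle; the only point requiring care is justifying that the joint is Gaussian rather than merely matching moments, which the affine-image argument settles cleanly. An alternative is to multiply the densities directly, $p(\mathbf{x},\mathbf{y})=p(\mathbf{y}|\mathbf{x})p(\mathbf{x})$, and recognize the resulting quadratic exponent as that of the claimed joint Gaussian; in that route the main (still routine) effort is the block-matrix algebra, i.e., verifying via the Schur complement that the inverse of the stated covariance reproduces the cross terms coupling $\mathbf{x}$ and $\mathbf{y}$ and that the normalizing determinant factorizes as $\det\mathbf{Q}\,\det\mathbf{R}$. I would favor the affine-image argument for brevity.
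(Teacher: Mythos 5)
Your proof is correct, but it takes a genuinely different route from the paper. The paper works directly at the density level: it forms $p(\mathbf{x},\mathbf{y})=p(\mathbf{y}|\mathbf{x})p(\mathbf{x})$, notes via the Schur complement determinant formula that the determinant of the claimed joint covariance factorizes as $\det(\mathbf{Q})\det(\mathbf{R})$ (matching the product of the two normalizing constants), and then asserts that the sum of the two quadratic forms $(\mathbf{x}-\bm{\psi})^{\top}\mathbf{Q}^{-1}(\mathbf{x}-\bm{\psi})+(\mathbf{y}-\bm{\mu}-\mathbf{C}\mathbf{x})^{\top}\mathbf{R}^{-1}(\mathbf{y}-\bm{\mu}-\mathbf{C}\mathbf{x})$ equals the quadratic form of the stated joint Gaussian, leaving that block-matrix verification to the reader --- this is precisely the ``alternative'' you sketch in your last paragraph and elect not to pursue. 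Your main argument instead realizes $\mathbf{y}=\mathbf{C}\mathbf{x}+\bm{\mu}+\mathbf{v}$ with $\mathbf{v}\sim\mathcal{N}(\mathbf{0},\mathbf{R})$ independent of $\mathbf{x}$, invokes closure of Gaussians under affine maps to get joint Gaussianity, and matches first and second moments blockwise. The one step you use implicitly, and which deserves a sentence, is that the marginal--conditional pair determines the joint law uniquely, so your constructed pair (on a possibly enlarged probability space) has the \emph{same} joint density as the one in the lemma; once that is said, your argument is complete. The trade-off: your affine-image route avoids all block-matrix inversion algebra and establishes Gaussianity of the joint structurally rather than by inspecting an exponent, making it shorter and arguably more self-justifying than the paper's proof, whose key algebraic identity is only claimed to be ``direct.'' The paper's density computation, on the other hand, stays entirely within the formal manipulations used throughout its appendices (cf.\ Lemmas \ref{lm:marginal_from_coditional} and \ref{lem:lemma_gmm_bif}) and makes the determinant factorization $\det(\mathbf{Q})\det(\mathbf{R})$ explicit, a fact of independent use. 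Your consistency check against the $\mathbf{y}$-marginal of Lemma \ref{lm:marginal_from_coditional} is a nice touch; either proof is acceptable.
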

\begin{proof}
    The determinant of the covariance function associated with $p(\mathbf{x},\mathbf{y})$ above can be shown to be equal to $\det(\mathbf{Q})\det(\mathbf{R})$ by applying the Schur complement determinant formula \cite[Section 0.8.5]{horn2012}. Thus, what is left to verify is that the following equality holds:
    \begin{align}
        &(\mathbf{x}\hspace{-0.05cm}-\hspace{-0.05cm}\bm{\psi})^{\hspace{-0.04cm}\top} \mathbf{Q}^{-\hspace{-0.02cm}1}\hspace{-0.02cm}(\mathbf{x}\hspace{-0.05cm}-\hspace{-0.05cm}\bm{\psi})\hspace{-0.05cm}+\hspace{-0.05cm}(\mathbf{y}\hspace{-0.05cm}-\hspace{-0.05cm}\bm{\mu}\hspace{-0.05cm}-\hspace{-0.05cm}\mathbf{Cx})^{\hspace{-0.04cm}\top} \mathbf{R}^{-\hspace{-0.02cm}1}\hspace{-0.03cm}(\mathbf{y}\hspace{-0.05cm}-\hspace{-0.05cm}\bm{\mu}\hspace{-0.05cm}-\hspace{-0.05cm}\mathbf{Cx}) \notag \\
        &=\hspace{-0.1cm}\begin{bmatrix}
             \mathbf{x}-\bm{\psi} \\ \mathbf{y}\hspace{-0.09cm}-\hspace{-0.09cm}\bm{\mu}\hspace{-0.09cm}-\hspace{-0.09cm}\mathbf{C}\bm{\psi}
         \end{bmatrix}^{\hspace{-0.05cm}\top}\hspace{-0.1cm}\begin{bmatrix}
             \mathbf{Q} & \hspace{-0.08cm}\mathbf{Q}\mathbf{C}^\top \hspace{-0.05cm} \\ \mathbf{C}\mathbf{Q} & \hspace{-0.08cm}\mathbf{R} \hspace{-0.07cm}+\hspace{-0.06cm} \mathbf{C}\mathbf{Q}\hspace{-0.02cm}\mathbf{C}^{\hspace{-0.04cm}\top} \hspace{-0.1cm}
         \end{bmatrix}^{\hspace{-0.05cm}-\hspace{-0.03cm}1}\hspace{-0.1cm}\begin{bmatrix}
             \mathbf{x}-\bm{\psi} \\ \mathbf{y}\hspace{-0.09cm}-\hspace{-0.09cm}\bm{\mu}\hspace{-0.09cm}-\hspace{-0.09cm}\mathbf{C}\bm{\psi}
         \end{bmatrix} \hspace{-0.07cm}.
    \end{align}
    The verification of this equality is direct and follows standard algebraic manipulations.
\end{proof}

\begin{lemma}\label{lem:lemma_gmm_bif}
    Consider a conditional Gaussian PDF $p(\mathbf{y}|\mathbf{x})=\mathcal{N}\left(\mathbf{y};\mathcal{O}\mathbf{x}+\bm{\mu},\mathcal{P}\right)$ of the random variables $\mathbf{Y}\in\mathbb{R}^p$ and $\mathbf{X}\in\mathbb{R}^{n}$. Then, the backward filter form of this PDF is given by
    \begin{equation}\label{Normal1}
        p(\mathbf{y}|\mathbf{x})\!=\!\dfrac{1}{\sqrt{\det(2\pi \mathcal{P})}} \exp\left\lbrace -\frac{ 1}{2}\left(\mathbf{x}^\top\mathbf{F}\mathbf{x}-2\mathbf{G}^\top\mathbf{x}+H\right)\!\right\rbrace ,
    \end{equation}
    where $\mathbf{F}=\mathcal{O}^\top \mathcal{P}^{-1} \mathcal{O}$, $\mathbf{G}=\mathcal{O}^\top \mathcal{P}^{-1}(\mathbf{y}-\bm{\mu})$, and $H=(\mathbf{y}-\bm{\mu})^{\top}\mathcal{P}^{-1}(\mathbf{y}-\bm{\mu})$. If $\mathcal{O}$ has full column rank, then this representation can be rewritten as an unnormalized Gaussian distribution on the variable $\mathbf{X}$ as follows:
    \begin{equation}\label{Normal2}
        p(\mathbf{y}|\mathbf{x})=\alpha\mathcal{N}\left(\mathbf{x};\mathbf{F}^{-1} \mathbf{G},\mathbf{F}^{-1}\right),
    \end{equation}
    with $\alpha \!=\! \det\{2\pi \mathcal{P}\}^{\hspace{-0.03cm}-\frac{1}{2}}\hspace{-0.05cm}\det\{2\pi \mathbf{F}\}^{\hspace{-0.03cm}-\frac{1}{2}}\!\exp\!\left\lbrace\! -\frac{1}{2}(H \!\hspace{-0.02cm}-\! \mathbf{G}^\top \mathbf{F} ^{-1}\mathbf{G}) \hspace{-0.03cm}\right\rbrace$.
    \end{lemma}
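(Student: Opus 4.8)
The plan is to establish both displays by direct algebraic manipulation of the Gaussian density, without invoking the other technical lemmas. First, for \eqref{Normal1}, I would write out the definition of the multivariate normal,
\[
p(\mathbf{y}|\mathbf{x}) = \det(2\pi\mathcal{P})^{-1/2}\exp\left\{-\tfrac{1}{2}(\mathbf{y}-\mathcal{O}\mathbf{x}-\bm{\mu})^\top\mathcal{P}^{-1}(\mathbf{y}-\mathcal{O}\mathbf{x}-\bm{\mu})\right\},
\]
set $\mathbf{e}:=\mathbf{y}-\bm{\mu}$, and expand $(\mathbf{e}-\mathcal{O}\mathbf{x})^\top\mathcal{P}^{-1}(\mathbf{e}-\mathcal{O}\mathbf{x})$. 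This produces a purely quadratic term $\mathbf{x}^\top(\mathcal{O}^\top\mathcal{P}^{-1}\mathcal{O})\mathbf{x}$, two cross terms, and a constant $\mathbf{e}^\top\mathcal{P}^{-1}\mathbf{e}$. By symmetry of $\mathcal{P}^{-1}$ the two cross terms are equal scalars and combine into $-2(\mathcal{O}^\top\mathcal{P}^{-1}\mathbf{e})^\top\mathbf{x}$, so identifying $\mathbf{F}=\mathcal{O}^\top\mathcal{P}^{-1}\mathcal{O}$, $\mathbf{G}=\mathcal{O}^\top\mathcal{P}^{-1}\mathbf{e}$, and $H=\mathbf{e}^\top\mathcal{P}^{-1}\mathbf{e}$ yields \eqref{Normal1} at once.

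For \eqref{Normal2}, the full-column-rank hypothesis on $\mathcal{O}$, together with positive-definiteness of $\mathcal{P}^{-1}$, guarantees that $\mathbf{F}=\mathcal{O}^\top\mathcal{P}^{-1}\mathcal{O}$ is symmetric positive-definite and hence invertible; this is the only place the rank assumption enters. I would then complete the square,
\[
\mathbf{x}^\top\mathbf{F}\mathbf{x}-2\mathbf{G}^\top\mathbf{x}+H=(\mathbf{x}-\mathbf{F}^{-1}\mathbf{G})^\top\mathbf{F}(\mathbf{x}-\mathbf{F}^{-1}\mathbf{G})+H-\mathbf{G}^\top\mathbf{F}^{-1}\mathbf{G},
\]
substitute this into \eqref{Normal1}, and recognize the $\mathbf{x}$-dependent exponential as the (unnormalized) kernel of $\mathcal{N}(\mathbf{x};\mathbf{F}^{-1}\mathbf{G},\mathbf{F}^{-1})$. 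Multiplying and dividing by the normalizer of that density and collecting every factor that is independent of $\mathbf{x}$ gives the prefactor $\alpha$.

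The step I expect to require the most care is the bookkeeping of the $2\pi$–determinant factors in $\alpha$, which is where any error would most naturally hide. The density $p(\mathbf{y}|\mathbf{x})$ carries the normalizer $\det(2\pi\mathcal{P})^{-1/2}$ appropriate to the $p$-dimensional variable $\mathbf{y}$, whereas the re-expressed factor $\mathcal{N}(\mathbf{x};\mathbf{F}^{-1}\mathbf{G},\mathbf{F}^{-1})$ is normalized for the $n$-dimensional variable $\mathbf{x}$; turning the bare exponential into this normalized density therefore introduces the factor $\det(2\pi\mathbf{F}^{-1})^{1/2}$. Combining this with the existing $\det(2\pi\mathcal{P})^{-1/2}$ and the $\mathbf{x}$-free scalar $\exp\{-\tfrac{1}{2}(H-\mathbf{G}^\top\mathbf{F}^{-1}\mathbf{G})\}$, and reconciling the resulting powers of $2\pi$ against the stated expression for $\alpha$, is the one delicate point; the remainder of the argument is routine.
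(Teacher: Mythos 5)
Your proposal is correct and follows essentially the same route as the paper's proof, which likewise expands the exponent of $\mathcal{N}\left(\mathbf{y};\mathcal{O}\mathbf{x}+\bm{\mu},\mathcal{P}\right)$ and reorders terms in $\mathbf{x}$ to get \eqref{Normal1}, then obtains \eqref{Normal2} by completing the square and normalizing. The delicate point you flag is real: your bookkeeping correctly produces the factor $\det(2\pi\mathbf{F}^{-1})^{1/2}=(2\pi)^{n/2}\det(\mathbf{F})^{-1/2}$, whereas the lemma's stated $\alpha$ contains $\det\{2\pi\mathbf{F}\}^{-1/2}=(2\pi)^{-n/2}\det(\mathbf{F})^{-1/2}$, so the printed $\alpha$ is off by a factor of $(2\pi)^{n}$ --- a typo in the statement (it should read $\det\{2\pi\mathbf{F}^{-1}\}^{1/2}$), not a flaw in your argument.
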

    \begin{proof}
    Direct from expanding the exponential argument of $\mathcal{N}\left(\mathbf{y};\mathcal{O}\mathbf{x}+\bm{\mu},\mathcal{P}\right)$ and reordering the terms in the variable $\mathbf{x}$. On the other hand, Equation \eqref{Normal2} is obtained by completing the square in the exponential argument of \eqref{Normal1} and performing a normalization.
    \end{proof}
    
\end{document}